\theoremstyle{thmstyleone}%
\newtheorem{theorem}{Theorem}%  meant for continuous numbers
\newtheorem{proposition}[theorem]{Proposition}% 
\newtheorem{lemma}[theorem]{Lemma} 
\theoremstyle{thmstyletwo}%
\newtheorem{example}{Example}%
\newtheorem{remark}{Remark}%
\theoremstyle{thmstylethree}%
\newtheorem{definition}{Definition}%
\begin{document}

\title[Article Title]{New QEC codes and EAQEC codes from repeated-root cyclic codes
	of length $2^rp^s$ }

%%=============================================================%%
%% Prefix	-> \pfx{Dr}
%% GivenName	-> \fnm{Joergen W.}
%% Particle	-> \spfx{van der} -> surname prefix
%% FamilyName	-> \sur{Ploeg}
%% Suffix	-> \sfx{IV}
%% NatureName	-> \tanm{Poet Laureate} -> Title after name
%% Degrees	-> \dgr{MSc, PhD}
%% \author*[1,2]{\pfx{Dr} \fnm{Joergen W.} \spfx{van der} \sur{Ploeg} \sfx{IV} \tanm{Poet Laureate} 
%%                 \dgr{MSc, PhD}}\email{iauthor@gmail.com}
%%=============================================================%%

\author[1]{\fnm{Lanqiang} \sur{Li}}\email{lilanqiang716@126.com}

\author[1]{\fnm{Ziwen} \sur{Cao}}\email{caozwiii@163.com}

\author*[1]{\fnm{Tingting} \sur{Wu}}\email{wutingting58@163.com}

\author[2]{\fnm{Li} \sur{Liu}}\email{liuli-1128@163.com}

\affil[1]{\orgdiv{School of Information and Artificial Intelligence}, \orgname{Anhui Agricultural University}, \orgaddress{ \city{Hefei}, \postcode{230036}, \state{Anhui}, \country{China}}}
\affil[2]{\orgdiv{School of Mathematics}, \orgname{HeFei University of Technology}, \orgaddress{ \city{Hefei}, \postcode{230009}, \state{Anhui}, \country{China}}}
%%==================================%%
%% sample for unstructured abstract %%
%%==================================%%

\abstract{Let $p$ be an odd prime and $r,s,m$ be positive integers. In this study, we initiate our exploration by delving into the intricate structure of all repeated-root cyclic codes and their duals with a length of $2^rp^s$ over the finite field $\mathbb{F}_{p^m}$. Through the utilization of CSS and Steane's constructions, a series of new quantum error-correcting (QEC) codes are constructed with parameters distinct from all previous constructions. Furthermore, we provide all maximum distance separable (MDS) cyclic codes of length $2^rp^s$, which are further utilized in the construction of QEC MDS codes. Finally, we introduce a significant number of novel entanglement-assisted quantum error-correcting (EAQEC) codes derived from these repeated-root cyclic codes. Notably, these newly constructed codes exhibit parameters distinct from those of previously known constructions. }

\keywords{Repeated-root cyclic codes, Maximum distance separable codes, Quantum error-correcting codes, Entanglement-assisted quantum error-correcting codes}

\pacs[MSC Classification]{94B15,94B05,11T71}

\maketitle

\section{Introduction}\label{sec1}
Cyclic code is a commonly used error correction coding technique, typically employed in data transmission and storage to protect data from errors that may occur during transmission or storage processes. The key features of cyclic codes are their ability to detect and correct errors, as well as having simple encoding and decoding algorithms. 

Repeated-root cyclic codes, a special type of cyclic codes, were first investigated by Castagnoli\cite{Castagnoli1991} and van Lint\cite{van1991} in the 1990s. Their research revealed that these codes possess a concatenated structure and are proved asymptotically bad. Nonetheless, it is recognized in the academic community that optimal repeated-root cyclic codes do exist, as demonstrated in various studies (refer to \cite{Dinh2008,Dinh2010,Kai2010}), which has encouraged many scholars to delve deeper into exploring this category of codes. Repeated-root cyclic codes are also used in communication and storage systems, especially in scenarios that require high data reliability. By utilizing the characteristics of duplicate roots, repeated-root cyclic codes can provide more powerful error detection and correction capabilities, thereby ensuring the integrity and reliability of data during transmission or storage. 

Quantum error-correcting (QEC) code is a coding technique used to protect quantum information from errors in qubits. With the development of quantum technology, QEC codes play a key role in the field of quantum computing and communication, and provide an important support for realizing reliable quantum information processing \cite{Shor1995,Steane1996,Knill1997,Rains1997,Rains1998,Steane1999}. By using QEC codes, researchers can improve the fault tolerance of quantum systems, thus promoting the application and development of quantum technology.

Indeed, since the inception of quantum coding theory, the construction of high-performance QEC codes has been a central goal pursued by numerous scholars in the field of quantum information science. At the end of the 20th century, Calderbank et al. \cite{Calderbank1998} established a connection between QEC codes and classical codes, proving that QEC codes can be constructed from classical linear codes with certain dual-containing properties. This systematic construction method is called as CSS construction. In the next five years, a series of QEC codes with better parameters were constructed based on CSS construction through classical linear codes, such as quantum Reed-Muller codes, quantum Reed-Solomon codes, quantum BCH codes, and quantum algebraic geometric codes. 

Maximum Distance Separable (MDS) codes are commonly used in coding theory and have good error correction ability. There are also corresponding QEC MDS codes in quantum computing and quantum communication, which are usually used to provide strong error correction capabilities and enhance security. QEC MDS codes have a wide range of applications in quantum error correction and quantum protocol security, which makes them become the most important research topic in recent years. Many scholars and researchers have dedicated their efforts to studying and advancing the construction of QEC MDS codes, leading to significant progress and numerous valuable results (refer to \cite{Grassl2004,Jin2010,Kai2013,Kai2014,Chen2015,Jin2017,Shi2017,Fang2020,Ball2021,Jin2022}
and the references therein for details).

In the construction process of QEC codes, it is critical to ensure that the selected classical linear code meets specific dual-containing conditions. However, it is known that most linear codes do not satisfy this condition. In order to break through this bottleneck, in 2006, Brun, Devetak and Hsieh \cite{Brun2006} proposed for the first time to construct QEC codes by sharing entangled states between the receiver and the sender. This method broke the limitation that the original method must require dual-containing conditions, so that any linear code can be used to construct QEC codes. These QEC codes constructed by this method are called entanglement-assisted quantum error-correcting (EAQEC) codes. Since then, this new type of QEC codes have attracted the attention of many scholars (refer to \cite{Wilde2008,Hsieh2010,Li2011,Lai2013,Guo2013,Fan2016,Chen2017,Chen2018,Lu2018,Guenda2018,Li2019,Fang2019,Wang2020,Guenda2020,Jin2021} and the references therein for details). 
 
Recently, repeated-root codes have been widely used to construct QEC codes, and many new QEC codes have been obtained. In \cite{Dinhps}, Dinh et al. first obtained all MDS constantcyclic codes of length $p^s$ through the study of the structure of such codes. Furthermore, they obtained all QEC MDS codes by using such MDS codes. In the same year, in \cite{Dinh2ps}, based on the structure of cyclic and negacyclic codes of length $2p^s$, the authors constructed some new QEC MDS codes and quantum synchronizable codes (QSCs). Similarly, in \cite{Dinh4ps,Dinh6ps,Dinh5ps}, the authors constructed a large number of new QEC MDS codes and QSCs by considering cyclic or negacyclic codes with lengths of $4p^s,5p^s$ and $6p^s$. In \cite{Dinh3ps}, Dinh et al. continued this research and constructed many QEC codes and QSCs by using cyclic codes with a length of $3p^s$ according to CSS and Steane's constructions. In addition, they also gave all QEC MDS codes from such codes based on CSS construction. In their conclusion, they propose that favorable outcomes may be achieved through the utilization of repeated-root constcyclic codes with more generally lengths. More recently, in \cite{Liu10ps}, Liu and Hu gave two methods for constructing high-performance QEC codes, and thus obtained many new QEC codes and EAQEC codes through cyclic codes of length $10p^s$.

It can be seen that repeated-root codes are important resources for constructing high-performance QEC codes. Motivated by the previous work, in this work, we first determine the generator polynomials of cyclic codes of length $2^rp^s$ and their duals. Second, we give all MDS cyclic codes with a length of $2^rp^s$. As an application, we construct a series of new QEC codes and EAQEC codes according to CSS and Steane's constructions from such cyclic codes and their duals.
In addition, all QEC MDS codes are obtained based on CSS construction from these repeated-root cyclic codes.

The remainder of this paper is structured in the following manner. In Section \ref{sec2}, some common symbols are introduced firstly, and then the definitions and related concepts of cyclic codes, QEC codes and EAQEC codes are introduced. In Section \ref{sec3}, the structure of all repeated-root cyclic codes and their duals with a length of $2^rp^s$ over $F_q$ are determined. In Section \ref{sec4}, based on CSS and Steane's constructions, a series of new QEC codes are constructed by cyclic codes of length $2^rp^s$. In Section \ref{sec5}, all MDS cyclic codes of length $2^rp^s$ are provided, which are further used to construct QEC MDS codes. In Section \ref{sec6}, according to the hulls of these repeated-root cyclic codes, a series of new EAQEC codes are constructed. Finally, Section \ref{sec7} summarizes the research results of this paper.

\section{Preliminaries}\label{sec2}

Let's start with some common notations as follows:
\begin{enumerate}
	\item[$\bullet$] $q=p^m$ is a prime power, where $p$ is an odd prime and $m$ is a positive integer.   
	\item[$\bullet$] $\mathbb{F}_{q}$ denotes the finite field with $q$ elements, $\mathbb{F}_{q}^{*}$ denotes the multiplicative group of nonzero elements of $\mathbb{F}_{q}$.
	\item[$\bullet$] $\mathbb{F}_{q}^n$ is the $\mathbb{F}_{q}$ vector space of $n$-tuples.
	\item[$\bullet$] Denote the multiplicative order of $q$ modulo $e$ by $\textrm{ord}_e(q)$, where $e$ is some integer.
	\item[$\bullet$] For any $t\in T=\{0,1,2,\cdots,p^s-1\}$, denote $P_t=\prod\limits^{s-1}_{i=0}(t_m+1)$, where $t_i$ is the coefficient the $p$-adic representation of $t$.
	\item[$\bullet$] Denote $T_n=\{0,2^{r-1}\}\bigcup\{\bigcup_{i=2}^{r}(2^{r-i}D_i)\}$ when $q=2^ab+1$; and $T_n=\{0,2^{r-1},2^{r-2}\}\bigcup\{\bigcup_{i=3}^{r}(2^{r-i}E_i)\}$ or $\{0,2^{r-1},2^{r-2}\}\bigcup\{\bigcup_{i=3}^{r}(2^{r-i}O_i)\}$ when $q=2^ab-1$, where $n=2^r$, $D_i$, $E_i$ and $O_i$ are defined in Proposition \ref{prop2} and Proposition \ref{prop3}, respectively.
\end{enumerate}

Any nonempty subset $C$ of $\mathbb{F}_{q}^n$ is known as a code of length $n$ over $\mathbb{F}_{q}$. If $C$ is a vector space over $\mathbb{F}_{q}$ with dimension $k$, it will be referred to as an $[n,k,d]_q$ linear code, where $d$ is its Hamming distance. Furthermore, $C$ is defined as an $[n,k,d]_q$ cyclic code if the cyclic shift of each codeword of $C$ still belongs to $C$. If $\textrm{gcd}(n,p)=1$, $C$ is called as a single-root cyclic code; otherwise, it is said to be a repeated-root cyclic code. As we all known, every cyclic code $C$ can be regarded as an ideal $\langle g(x)\rangle$ of the quotient ring $\mathbb{F}_{q}[x]/(x^n-1)$, where $g(x)|(x^n-1)$ is a unique monic polynomial. That is $C= \langle g(x)\rangle$, where $g(x)$ is called the generator polynomial of $C$. 

For any two vectors $\overrightarrow{x},\overrightarrow{y}\in \mathbb{F}_{q}^n$, $\overrightarrow{x}\ast\overrightarrow{y}$ is recorded as the Euclidean inner product. The set $\{\overrightarrow{x}\in\mathbb{F}_{q}^n|\overrightarrow{x}\ast\overrightarrow{y}=0, \forall \overrightarrow{y}\in C\}$ is defined as the dual code of $C$, and is denoted as $C^{\bot}$. For any $[n,k,d]_q$ cyclic code $C$, its dual code is an $[n,n-k,d^{\bot}]_q$ cyclic code. When $C= \langle g(x)\rangle$, the polynomial $h(x)=(x^n-1)/g(x)$ is called as the  parity check  polynomial of $C$. Denote $h(x)^{*}=h(0)^{-1}x^{\textrm{deg}(h(x))}h(\frac{1}{x})$. It is well known that the cyclic code $C^{\bot}$ can be generated by $h(x)^{*}$, i.e., $C^{\bot}=\langle h(x)^{*}\rangle$. In addition, if $C$ and its dual code satisfy the relation $C^{\bot}\subseteq C$, we call it a dual-containing code. We denote $\textrm{Hull}(C)=C\cap C^{\bot}$, and is called as the hull of code $C$. Obviously, $C$ is a dual-containing code if $\textrm{Hull}(C)=C^{\bot}$.

Since the factorization of  $x^n-1$ is closely related to $q$-cyclotomic coset, we need to introduce some concepts and some important conclusions about $q$-cyclotomic coset. Denote $Z_n=\{0,1,2,\cdots,n-1\}$,  which is the ring of integers
modulo $n$. For any integer $s\in Z_n$, denote $n_s$ is the the least positive integer such that $sq^{n_s}\equiv s(\textrm{mod }n)$, then the set $C_{(s,n)}=\{s,sq,\cdots,sq^{n_s}\}$ is known as the $q$-cyclotomic coset modulo $n$ containing $s$. Denote the cardinality of the $q$-cyclotomic coset $C_{(s,n)}$ as $|C_{(s,n)}|$. Obviously, $|C_{(s,n)}|=n_s$. One can choose a random number in $C_{(s,n)}$ as the representative element of the set $C_{(s,n)}$. The random number is generally the smallest number in $C_{(s,n)}$. We denote the set of all representative elements of the $q$-cyclotomic cosets modulo $n$ by $T_n$, and is called as a complete set of representatives. Obviously, $Z_n=\bigcup_{i\in T_n}C_{(i,n)}$, and $C_{(i,n)}\bigcap C_{(j,n)}=\emptyset$ for $i,j\in T_n$. If $\beta$ is a primitive $n$-th root of unity, for any integer $s\in Z_n$, the polynomial $m_s(x)=\prod_{i\in C_{(s,n)}}(x-\beta^i)\in \mathbb{F}_{q}[x] $ is known as  the minimal polynomial of $\beta^s$, and  is irreducible over $\mathbb{F}_{q}$. Furthermore, the polynomial $x^n-1$ can be completely decomposed into the product of some minimal polynomials as $ x^n-1=\prod_{s\in T_n}m_s(x).$

With the development of classical error-correcting codes, QEC codes emerged and were further developed. There are similar basic concepts and basic ideas between QEC codes and classical error-correcting codes. In the following, we mainly introduce the basic concepts of QEC codes and EAQEC codes over finite fields (refer to \cite{Shor1995,Steane1996,Knill1997,Rains1997,Rains1998,Brun2006,Wilde2008,Hsieh2010}). 

Let $\mathcal{C}^q$ be a $q$-dimensional complex vector space, where $q$ is a power of a prime $p$. Let $\{|x\rangle:x\in \mathbb{F}_q\}$ be an orthogonal basis of $\mathcal{C}^q$. Then, for any $a,b\in \mathbb{F}_q$, the unitary operators $X_a$ and $Z_b$ on the complex vector space $\mathcal{C}^q$ can be defined as:  $$X_a|x\rangle=|x+a\rangle,Z_b|x\rangle=\varsigma^{\textrm{Tr}(bx)}|x\rangle,$$ where $\textrm{Tr}$ is the trace function from the extension field $\mathbb{F}_q$ to the prime field $\mathbb{F}_p$, and $\varsigma=\textrm{exp}(2\pi i/p)$ is a primitive $p$-th  root of unity. Let $V_n=(\mathcal{C}^q)^{\otimes n}=\mathcal{C}^q\otimes\mathcal{C}^q\otimes\cdots\otimes\mathcal{C}^q$, i.e., the
$n$-fold tensor product on $\mathcal{C}^q$. Then, $V_n$ has a basis given by:
$\{|\overrightarrow{x}\rangle=|x_1\rangle\otimes|x_2\rangle\otimes\cdots\otimes|x_n\rangle,\overrightarrow{x}=(x_1,x_2,\cdots,x_n)\in \mathbb{F}_q^n\}$.
Extending the unitary operators on the complex vector space to $V_n$, we have: $$\{X_a|\overrightarrow{x}\rangle=X_{a_1}|x_1\rangle\otimes X_{a_2}|x_2\rangle\otimes\cdots\otimes X_{a_n}|x_n\rangle=|\overrightarrow{x}+a\rangle,$$ $$\{Z_b|\overrightarrow{x}\rangle=Z_{b_1}|x_1\rangle\otimes Z_{b_2}|x_2\rangle\otimes\cdots\otimes Z_{b_n}|x_n\rangle=|\varsigma^{\textrm{Tr}(\overrightarrow{x}\cdot \overrightarrow{b})}|\overrightarrow{x}\rangle,$$ where $\overrightarrow{x}\cdot \overrightarrow{b}=\sum_{i=1}^{n}x_ib_i\in F_q$ is the inner product on $\mathbb{F}_q^n$. Additionally, the set of all error operators on $V_n$ is defined as: 
$$E_n=\{\varsigma^i X_{\overrightarrow{a}}Z_{\overrightarrow{b}}:0\leq i\leq p-1,\overrightarrow{a}=(a_1,\cdots,a_n)\in \mathbb{F}_q^n,\overrightarrow{b}=(b_1,\cdots,b_n)\in F_q^n\},$$ where $X_{\overrightarrow{a}}=X_{a_1}\otimes X_{a_2}\otimes\cdots\otimes X_{a_n}, Z_{\overrightarrow{b}}=Z_{b_1}\otimes Z_{b_2}\otimes\cdots\otimes Z_{b_n}$.\\

In the following, we describe the specific definition of QEC codes, which has been given in \cite{Rains1998}.

\begin{definition}\label{def1}
 The complex subspace $\mathcal{Q}$ of dimension $K\geq1$ of the Hilbert space $V_n$ is called an $[[n,k]]_q$ QEC code of length $n$ and dimension  $K$, where $k=\textrm{log}_qK$. 
\end{definition}

Similarly to classical error-correcting code, each QEC code also has a minimum distance $d$, which reflects the ability of the code to detect and correct errors. If a QEC code has minimum distance $d$, then it is denoted as $[[n,k,d]]_q$. Various bounds exist for the parameters of QEC codes, including the quantum Singleton bound\cite{Knill1997,Rains1997}, which is outlined as follows. 
\begin{theorem}\label{theo1}
For any $[[n,k,d]]_q$ QEC code, its parameters must satisfy: $k\leq n-2d+2$. The code is referred to as a QEC MDS code if the equality holds.
\end{theorem}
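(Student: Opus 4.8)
The plan is to prove the bound by an information-theoretic (entropic) argument in the spirit of \cite{Knill1997,Rains1997}, which is the most transparent route and applies to an arbitrary $[[n,k,d]]_q$ subspace code $\mathcal{Q}\subseteq V_n$ without assuming any stabilizer, linear, or CSS structure. I write the $q$-ary von Neumann entropy $S(\rho)=-\mathrm{Tr}(\rho\log_q\rho)$, normalized so that a single copy of $\mathcal{C}^q$ has entropy at most $1$; with this convention a reference of dimension $K=q^k$ in a maximally mixed state has entropy exactly $k$. The whole proof will reduce to exhibiting a tripartition of the $n$ physical qudits into blocks $Q_1,Q_2,Q_3$ of sizes $d-1$, $d-1$, and $n-2d+2$, and showing that the encoded logical information $k$ is bounded by $S(Q_3)\le|Q_3|=n-2d+2$.

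The quantum input I would use is the erasure-correction characterization of the minimum distance: an $[[n,k,d]]_q$ code corrects the erasure of any $d-1$ of its $n$ coordinates, and correctability of a coordinate set $E$ is equivalent to the decoupling condition $\rho_{RE}=\rho_R\otimes\rho_E$, where $R$ is a reference maximally entangled with $\mathcal{Q}$. Concretely, I maximally entangle a reference $R$ of dimension $K=q^k$ with the code space and encode, obtaining a pure state $|\psi\rangle$ on $R Q_1 Q_2 Q_3$ with $S(R)=k$. Since $|Q_1|=|Q_2|=d-1$, both blocks are correctable and hence decoupled from $R$:
\begin{equation*}
\rho_{RQ_1}=\rho_R\otimes\rho_{Q_1},\qquad \rho_{RQ_2}=\rho_R\otimes\rho_{Q_2},
\end{equation*}
so that $S(RQ_1)=S(R)+S(Q_1)=k+S(Q_1)$ and $S(RQ_2)=k+S(Q_2)$.

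I then combine these with purity and subadditivity. Because $|\psi\rangle$ is pure on $RQ_1Q_2Q_3$, complementary subsystems have equal entropy, so $S(RQ_1)=S(Q_2Q_3)$ and $S(RQ_2)=S(Q_1Q_3)$, whence
\begin{equation*}
S(Q_2Q_3)=k+S(Q_1),\qquad S(Q_1Q_3)=k+S(Q_2).
\end{equation*}
Applying subadditivity $S(Q_2Q_3)\le S(Q_2)+S(Q_3)$ and $S(Q_1Q_3)\le S(Q_1)+S(Q_3)$ to the left-hand sides gives
\begin{equation*}
k\le S(Q_3)+S(Q_2)-S(Q_1),\qquad k\le S(Q_3)+S(Q_1)-S(Q_2).
\end{equation*}
Adding these two inequalities cancels the $\pm\bigl(S(Q_1)-S(Q_2)\bigr)$ terms and yields $2k\le 2S(Q_3)$, i.e.\ $k\le S(Q_3)$; the dimension bound $S(Q_3)\le|Q_3|=n-2d+2$ then delivers $k\le n-2d+2$. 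A code attaining equality is, by definition, a QEC MDS code.

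The step I expect to be the main obstacle is the first one: converting the somewhat informal notion of minimum distance (the stated ability to ``detect and correct errors'' relative to the error set $E_n$) into the clean erasure/decoupling statement used in the entropy computation. This requires (i) the Knill--Laflamme error-correction conditions, phrased so that a code of distance $d$ corrects every error supported on at most $d-1$ coordinates of known location, and (ii) the standard equivalence between correctability of a coordinate set and its decoupling from the purifying reference. I would isolate these as a short preliminary lemma; once they are in place the remaining entropy manipulation is routine. One should also record the implicit feasibility condition $n\ge 2d-2$ needed for the tripartition to exist, which holds automatically whenever a code with $k\ge 0$ can meet the bound.
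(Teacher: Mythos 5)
The paper does not actually prove this statement: Theorem \ref{theo1} is the quantum Singleton bound, imported verbatim with citations to \cite{Knill1997,Rains1997}, so there is no internal proof to compare yours against. Judged on its own merits, your entropic argument is the standard proof from exactly that cited literature, and it is correct. The core computation checks out: with $\rho_{RQ_1}=\rho_R\otimes\rho_{Q_1}$ and $\rho_{RQ_2}=\rho_R\otimes\rho_{Q_2}$ from erasure-correctability of the two blocks of size $d-1$, purity of $|\psi\rangle$ on $RQ_1Q_2Q_3$ gives $S(Q_2Q_3)=k+S(Q_1)$ and $S(Q_1Q_3)=k+S(Q_2)$, and subadditivity applied to both, followed by summing, cancels the asymmetric terms and yields $k\le S(Q_3)\le n-2d+2$. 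The two lemmas you flag as the real content (distance $d$ implies correctability of any $d-1$ erasures via the Knill--Laflamme conditions, and correctability of a set $E$ iff $I(R\!:\!E)=0$ for the purifying reference) are both standard and provable by data processing together with the identity $I(R\!:\!E)+I(R\!:\!E^c)\le 2S(R)$ for pure tripartite states; isolating them as a preliminary lemma, as you propose, is the right structure.

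Two small points you should tighten. First, your feasibility remark is mildly circular: when $n<2d-2$ the tripartition does not exist, and you cannot appeal to the bound you are trying to prove to rule this case out. The fix is easy: take $Q_3=\emptyset$ and split all $n$ qudits into $Q_1,Q_2$ of sizes at most $d-1$; the same argument then gives $k\le 0$, contradicting $k\ge 1$, so no such code exists and the bound holds vacuously. Second, the argument genuinely requires $K\ge 2$ (a nontrivial reference to entangle); for $K=1$ the minimum distance is conventionally defined differently and the theorem as stated in the paper (which allows $K\ge 1$) should be read with that caveat, though this is an imprecision inherited from the paper's statement rather than a flaw in your proof.
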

EAQEC code is a new coding scheme to improve the performance of QEC code through the interaction of qubit and entangled state. By adding appropriate quantum entanglement resources, EAQEC codes can provide better error correction capability than traditional error correction codes, while reducing the quantum resources required for error correction. This use of entangled states makes EAQEC codes more efficient and allows them to play an important role in areas such as quantum communication and quantum computing.
Compared to QEC codes, EAQEC codes have an additional important parameter: the pairs of maximally entangled states $c$. A QEC code is said to be an $[[n,k,d,c]]_q$ EAQEC code over $\mathbb{F}_q$ if it encodes $k$ logical qubits into $n$ physical qubits with the help of $c$ pairs of maximally entangled state. Moreover, $\frac{k}{n}$ and $\frac{k-c}{n}$ are called rate and net rate respectively, which are important parameters to measure the performance of a EAQEC code. Obviously, EAQEC codes is an important generalization of QEC codes.

\section{ Cyclic codes of length $2^rp^s$ and their duals  } \label{sec3}

As $q $ is a odd prime power, then $q\equiv 1(\textrm{mod }4)$ or $q\equiv -1(\textrm{mod }4)$. So, we can let $q=2^ab+1$ or $q=2^ab-1$, where $b$ is odd and $a\geq 2$. Let $n=2^r$. For the set $T_n$, we have the following results, which can be obtained in \cite{Sharma}. \\

\begin{proposition}\label{prop2}
	Let $a\geq 2$ be an integer and $b$ be an odd number. Denote $n=2^r$, we have the following results.\\
	(1) If $q=2^ab+1$, then $T_n=\{0,2^{r-1}\}\bigcup\{\bigcup_{i=2}^{r}(2^{r-i}D_i)\}$, where $$D_i=\begin{cases} 
		\{\pm1,\pm3,\pm3^2,\dots,\pm3^{2^{a-2}-1}\},&\:if\:a+1\leq i\leq r;\\
		\{\pm1,\pm3,\pm3^2,\dots,\pm3^{2^{i-2}-1}\},&\:if\:2\leq i\leq a.
	\end{cases}$$\\
	(2) If $q=2^ab-1$, then $T_n=\{0,2^{r-1},2^{r-2}\}\bigcup\{\bigcup_{i=3}^{r}(2^{r-i}E_i)\}$, where $$E_i=\begin{cases} 
		\{1,3,3^2,\dots,3^{2^{a-1}-1}\},&\:if\:a+1\leq i\leq r,a\geq3;\\
		\{1,3,3^2,\dots,3^{2^{i-2}-1}\},&\:if\:3\leq i\leq a,a\geq3;\\
		\{\pm1\},&\:if\:3\leq i\leq r, a=2.
	\end{cases}$$
\end{proposition}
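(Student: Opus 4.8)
The plan is to translate the problem about $q$-cyclotomic cosets modulo $n=2^r$ into a problem about the orbits of multiplication by $q$ on the unit groups $(\mathbb{Z}/2^i\mathbb{Z})^{\ast}$, to pin down the cyclic subgroup $\langle q\rangle$ generated by $q$ in each, and then to exhibit the listed sets as coset transversals. First I would stratify $\mathbb{Z}_n\setminus\{0\}$ by $2$-adic valuation. Since $q$ is odd, $\gcd(q,2^r)=1$ and multiplication by $q$ preserves the valuation, so writing each nonzero $s$ uniquely as $s=2^{r-i}t$ with $t$ odd, $1\le t<2^i$, the assignment $s\mapsto t\bmod 2^i$ is a $q$-equivariant bijection from the valuation-$(r-i)$ layer onto $(\mathbb{Z}/2^i\mathbb{Z})^{\ast}$. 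Consequently the $q$-cyclotomic cosets modulo $2^r$ are exactly $\{0\}$ together with the sets $2^{r-i}\cdot(\text{a }\langle q\rangle\text{-coset of }(\mathbb{Z}/2^i\mathbb{Z})^{\ast})$ for $1\le i\le r$, and a complete set of representatives $T_n$ is obtained by scaling a transversal of $\langle q\rangle$ in each $(\mathbb{Z}/2^i\mathbb{Z})^{\ast}$ by $2^{r-i}$. The layers $i=1$ and (in case (2)) $i=2$ each carry a single coset, producing the isolated representatives $2^{r-1}$ and $2^{r-2}$.

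Next I would record the group structure $(\mathbb{Z}/2^i\mathbb{Z})^{\ast}=\langle-1\rangle\times\langle 3\rangle$ with $\mathrm{ord}_{2^i}(3)=2^{i-2}$ for $i\ge 3$ (and $(\mathbb{Z}/4\mathbb{Z})^{\ast}=\langle-1\rangle$), and compute $\mathrm{ord}_{2^i}(q)$ by lifting-the-exponent. Writing $v_2$ for the $2$-adic valuation: in case (1) one has $q\equiv 1\pmod 4$ and $v_2(q-1)=a$, hence $v_2(q^k-1)=a+v_2(k)$, so $\mathrm{ord}_{2^i}(q)=2^{i-a}$ for $i>a$ and $=1$ for $i\le a$; in case (2) one has $q\equiv -1\pmod 4$ and $v_2(q+1)=a$, hence $v_2(q^k-1)=1$ for odd $k$ and $a+v_2(k)$ for even $k$, giving $\mathrm{ord}_{2^i}(q)=1$ for $i=1$, $=2$ for $2\le i\le a+1$, and $=2^{i-a}$ for $i\ge a+2$.

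Now I would identify $\langle q\rangle$. In case (1), for $i>a$ the element $q\equiv 1\pmod{2^a}$ lies in $U_a:=\ker\big((\mathbb{Z}/2^i\mathbb{Z})^{\ast}\to(\mathbb{Z}/2^a\mathbb{Z})^{\ast}\big)$, and $|U_a|=2^{i-a}=\mathrm{ord}_{2^i}(q)$ forces $\langle q\rangle=U_a$; reduction modulo $2^a$ then identifies the quotient with $(\mathbb{Z}/2^a\mathbb{Z})^{\ast}=\{\pm 3^k:0\le k<2^{a-2}\}$, which is $D_i$, while for $i\le a$ one has $q\equiv 1\pmod{2^i}$, every unit is its own coset, and $D_i=(\mathbb{Z}/2^i\mathbb{Z})^{\ast}=\{\pm 3^k:0\le k<2^{i-2}\}$. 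In case (2), for $3\le i\le a$ one has $q\equiv -1\pmod{2^i}$, so $\langle q\rangle=\langle-1\rangle$ and a transversal is $\langle 3\rangle=\{3^k:0\le k<2^{i-2}\}=E_i$; for $i\ge a+1$ one checks $q^2\in U_{a+1}$ with $\mathrm{ord}_{2^i}(q^2)=|U_{a+1}|$, so $\langle q^2\rangle=U_{a+1}$ and $\langle q\rangle=U_{a+1}\sqcup qU_{a+1}$ has index $2^{a-1}$, and the claimed set $\{3^k:0\le k<2^{a-1}\}$ (or $\{\pm 1\}$ when $a=2$) is a transversal. Rescaling by $2^{r-i}$ and gathering the layers $i=1,2$ then assembles the stated $T_n$.

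The main obstacle is the verification, in case (2) with $i\ge a+1$, that $\{3^k:0\le k<2^{a-1}\}$ really meets each coset of $\langle q\rangle$ once. Here $\langle q\rangle$ is not a reduction kernel but the union of $U_{a+1}$ with its $q$-translate, so I must exclude both $3^k\equiv 1$ and $3^k\equiv q$ modulo $2^{a+1}$ for $0<k<2^{a-1}$. The first fails because $\mathrm{ord}_{2^{a+1}}(3)=2^{a-1}$; the second is the delicate one, and it is ruled out by the direct-product structure $(\mathbb{Z}/2^a\mathbb{Z})^{\ast}=\langle-1\rangle\times\langle 3\rangle$, which gives $-1\notin\langle 3\rangle$ modulo $2^a$, whereas $q\equiv -1\pmod{2^a}$; thus no power of $3$ can lie in $qU_{a+1}$. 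The degenerate value $a=2$ and the boundary layers $i\in\{1,2\}$, where the direct-product description collapses, would be checked separately by hand.
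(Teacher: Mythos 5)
The paper never proves Proposition \ref{prop2}: it is imported wholesale from \cite{Sharma} (``which can be obtained in [Sharma]''), so there is no internal proof to compare yours against; what you have written is a self-contained derivation of a result the paper only cites, and it is correct. Your stratification of $\mathbb{Z}_{2^r}\setminus\{0\}$ by $2$-adic valuation into $q$-equivariant layers identified with $(\mathbb{Z}/2^i\mathbb{Z})^{\ast}$ is sound and reduces everything to exhibiting transversals of $\langle q\rangle$ in each unit group. The order computations check out: $v_2(q^k-1)=a+v_2(k)$ in case (1), and $v_2(q^k-1)=1$ for odd $k$, $a+v_2(k)$ for even $k$, in case (2). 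The identifications $\langle q\rangle=U_a$ (case (1), $i>a$) and $\langle q^2\rangle=U_{a+1}$ (case (2), $i\geq a+1$) follow legitimately from containment together with equality of cardinalities, no cyclicity of the kernels being needed. You have also correctly isolated where the real content lies: excluding $3^m\equiv q\pmod{2^{a+1}}$ for $0<m<2^{a-1}$, which your appeal to $q\equiv-1\pmod{2^a}$ and $-1\notin\langle 3\rangle$ in $(\mathbb{Z}/2^a\mathbb{Z})^{\ast}$ (valid for $a\geq3$) settles; the deferred cases ($a=2$, layers $i\in\{1,2\}$) are genuinely routine verifications modulo $8$ and modulo $4$, and your indication of how they go is right. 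It is worth observing that your toolkit coincides with what the authors do use when they prove Proposition \ref{prop3}, the reformulation of case (2): there they likewise invoke $\mathrm{ord}_{2^{a+1}}(3)=2^{a-1}$ and $\mathrm{ord}_{2^i}(q)=2^{i-a}$, and the contradiction they extract from $q^j\equiv-1\pmod{2^i}$ is closely parallel to your exclusion of $3^m$ from the coset $qU_{a+1}$. In effect, your argument and the paper's proof of Proposition \ref{prop3} are two halves of the same coset analysis, yours establishing that $E_i$ is a transversal and theirs that it can be rewritten as $O_i$.
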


When $q=2^ab-1$, one can transform the set $T_n$ into the following form, which will greatly promote the study of dual codes.\\

\begin{proposition}\label{prop3}
	Let $a\geq 2$ be an integer and $b$ be an odd number. Denote $n=2^r$. If $q=2^ab-1$, then $T_n=\{0,2^{r-1},2^{r-2}\}\bigcup\{\bigcup_{i=3}^{r}(2^{r-i}O_i)\}$, where $$O_i=\begin{cases} 
		\{\pm1,\pm3,\pm3^2,\dots,\pm3^{2^{a-2}-1}\},&\:if\:a+1\leq i\leq r,a\geq3;\\
		\{1,3,3^2,\dots,3^{2^{i-2}-1}\},&\:if\:3\leq i\leq a,a\geq3;\\
		\{\pm1\},&\:if\:3\leq i\leq r, a=2.
	\end{cases}$$
\end{proposition}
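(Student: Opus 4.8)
The plan is to observe that Proposition \ref{prop3} differs from part (2) of Proposition \ref{prop2} only in the block $a+1\le i\le r$ (with $a\ge 3$), where the representative set $E_i$ is replaced by $O_i$; the cases $3\le i\le a$ and $a=2$ are literally identical in the two statements and require no argument. Both $E_i$ and $O_i$ have cardinality $2^{a-1}$. Since multiplication by $q$ preserves the $2$-adic valuation, the $q$-cyclotomic cosets modulo $2^r$ of the elements of valuation $r-i$ are in bijection with the $q$-cyclotomic cosets of the units modulo $2^i$ via $2^{r-i}t\mapsto (t\bmod 2^i)$. Hence it suffices to prove that, for $a+1\le i\le r$ and $a\ge 3$, the set $O_i=\{\pm 3^{k}:0\le k\le 2^{a-2}-1\}$ is a complete set of representatives (a transversal) of the subgroup $H=\langle q\rangle$ in $G=(\mathbb{Z}/2^i\mathbb{Z})^{*}$, exactly the role that Proposition \ref{prop2} already assigns to $E_i$.

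First I would pin down $H$ inside $G$. Write $q=2^ab-1=-(1-2^ab)$ and set $u=1-2^ab$, so $q=-u$ and $q^2=u^2$. The $2$-adic valuation of $u-1=-2^ab$ is exactly $a$, while $u+1=2(1-2^{a-1}b)$ has valuation $1$; the lifting-the-exponent identity for $p=2$ then gives $\mathrm{ord}_{2^i}(u)=2^{i-a}$. Using the standard decomposition $G=\langle -1\rangle\times\langle 3\rangle$ with $\mathrm{ord}_{2^i}(3)=2^{i-2}$, I would identify the subgroup of residues congruent to $1$ modulo $2^a$ with $\langle 3^{2^{a-2}}\rangle$ (both have order $2^{i-a}$ and one contains the other). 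Since $u\equiv 1\pmod{2^a}$ with $v_2(u-1)=a$ exactly, $u$ generates this subgroup, i.e. $u=3^{2^{a-2}t}$ with $t$ odd; consequently $\langle q^2\rangle=\langle u^2\rangle=\langle 3^{2^{a-1}}\rangle$, and $-1\notin H$ (an even power of $q$ lies in $\langle 3\rangle$ and so cannot be $-1$, while an odd power $q^{2m+1}=-u^{2m+1}$ cannot be $-1$ because $u$ has order a power of $2$ greater than $1$, so $u^{2m+1}\neq 1$). In particular $H\cap\langle 3\rangle=\langle q^2\rangle=\langle 3^{2^{a-1}}\rangle$.

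With this in hand, the conclusion is a short finite check. In the coordinates $G\cong \mathbb{Z}/2\mathbb{Z}\times\mathbb{Z}/2^{i-2}\mathbb{Z}$ given by $\epsilon 3^{k}\mapsto(\delta,k)$ (with $\delta=0$ for $\epsilon=1$ and $\delta=1$ for $\epsilon=-1$), one has $O_i=\{(\delta,k):\delta\in\{0,1\},\,0\le k\le 2^{a-2}-1\}$ and, from the previous paragraph, $H=(\{0\}\times 2^{a-1}\mathbb{Z})\cup(\{1\}\times(2^{a-2}+2^{a-1}\mathbb{Z}))$. For two elements of $O_i$ with the same sign, their difference is $(0,k-k')$ with $|k-k'|<2^{a-1}$, so it lies in $H$ only when $k=k'$; for opposite signs the difference is $(1,k-k')$ with $k-k'\in[-(2^{a-2}-1),2^{a-2}-1]$, a range that contains no integer congruent to $2^{a-2}$ modulo $2^{a-1}$, so the difference is never in $H$. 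Thus the $2^{a-1}$ elements of $O_i$ fall in $2^{a-1}$ distinct cosets; since $[G:H]=2^{i-1}/2^{i-a}=2^{a-1}$, the set $O_i$ is a full transversal, which proves the claim.

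The hard part is the second paragraph: the exact determination of $H=\langle q\rangle$ as a subgroup of $(\mathbb{Z}/2^i\mathbb{Z})^{*}$. The delicate points are the valuation computation $v_2(u-1)=a$ (which forces $u$ to be a generator of the full group of $1$-units at level $a$, not merely an element of it) and the verification that $-1\notin\langle q\rangle$; both rest on careful $2$-adic bookkeeping that is only valid because $a\ge 3$. Once $H$ is identified, matching cardinalities reduces everything to the elementary residue count above, and the equality $T_n=\{0,2^{r-1},2^{r-2}\}\cup\bigcup_{i=3}^{r}2^{r-i}O_i$ follows by reassembling the blocks exactly as in Proposition \ref{prop2}.
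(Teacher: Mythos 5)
Your proof is correct, and it takes a genuinely different route from the paper's. The paper makes the same initial reduction (only the block $a+1\le i\le r$ with $a\ge 3$ requires an argument, together with the counting observation that $|O_i|$ matches the number of cosets in the valuation-$(r-i)$ stratum), but it then argues by contradiction with bare congruences: assuming $C_{(2^{r-i}3^{v},n)}=C_{(-2^{r-i}3^{u},n)}$, it squares the resulting congruence $3^{v}q^{j}\equiv-3^{u}\pmod{2^{i}}$, reduces modulo $2^{a+1}$, uses $\mathrm{ord}_{2^{a+1}}(3)=2^{a-1}$ to force $u=v$, and finally rules out $q^{j}\equiv-1\pmod{2^{i}}$ by a parity analysis of $j$ (even powers of $q$ are $\equiv 1$ and odd powers are $\equiv 2^{a}-1$ modulo $2^{a+1}$, combined with $\mathrm{ord}_{2^{i}}(q)=2^{i-a}$); it never identifies the subgroup $\langle q\rangle$. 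You instead determine $H=\langle q\rangle$ exactly inside $\langle -1\rangle\times\langle 3\rangle$ --- via the lifting-the-exponent valuation computation, the cyclicity of the group of $1$-units at level $a$, and the observation that $-q$ generates it --- after which the pairwise-disjointness of the cosets of the elements of $O_i$ becomes a transparent residue check, and the paper's key fact ($-1\notin\langle q\rangle$) falls out as a corollary of the explicit form of $H$. What each approach buys: the paper's contradiction argument is shorter and uses less structure theory; your version is more conceptual, yields the complete coset structure of $\langle q\rangle$ (hence, in principle, a description of all cyclotomic cosets in the stratum rather than just the transversal property), and localizes the role of the hypothesis $a\ge 3$ cleanly in the identification of the $1$-units with $\langle 3^{2^{a-2}}\rangle$.
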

\begin{proof} 
Comparing with Conclusion 2 in Proposition \ref{prop2}, it is not difficult to find that we only need to prove that $C_{(2^{r-i}U,n)}\neq C_{(2^{r-i}V,n)}$ for any $U,V\in Q_i$, when $a\geq3$ and $a+1\leq i\leq r$. Recall that $E_i=\{1,3,3^2,\dots,3^{2^{a-2}-1},3^{2^{a-2}},\cdots,3^{2^{a-1}-1}\}$ with $a\geq3$ and $a+1\leq i\leq r$. We know that all the $C_{(2^{r-i}3^{v},n)}$ with $0\leq v\leq 2^{a-2}-1$ are distinct. This implies that  all the $C_{(-2^{r-i}3^{v},n)}$ with $0\leq v\leq 2^{a-2}-1$ are also distinct. Next, we prove that there are no $u$ and $v$, $0\leq u,v\leq2^{a-2}-1$ such that $C_{(2^{r-i}3^{v},n)}=C_{(-2^{r-i}3^{u},n)}$. To that end, we assume $C_{(2^{r-i}3^{v},n)}=C_{(-2^{r-i}3^{u},n)}$ for some $0\leq u,v\leq2^{a-2}-1$. Note that $\textrm{ord}_{2^{r}}(q)=2^{r-a}$. Then, there exist some $0\leq j\leq 2^{r-a}-1$ such that 
$2^{r-i}3^{v}q^{j}\equiv-2^{r-i}3^{u}\pmod{2^r}$, where $u$ and $v$ are some integers satisfying $0\leq u,v\leq2^{a-2}-1$. That is $3^{v}q^{j}\equiv-3^{u}\pmod{2^i}$.
Since $a+1\leq i$ and $q^{2j}\equiv1\pmod{2^{a+1}}$, we can deduce that $3^{2(u-v)}\equiv1\pmod{2^{a+1}}$. 

Based on the fact that $\textrm{ord}_{2^{a+1}}(3)=2^{a-1}$, we can obtain $2^{a-1}|2(u-v)$. It then follows from that $u=v$. Hence, we have 
\begin{eqnarray} \label{equ 1}
q^{j}\equiv-1\pmod{2^i}.
\end{eqnarray} 

 As $q=2^ab-1$, it is obvious that $q^2\equiv1\pmod{2^{a+1}}$. Then, for any even $j$, it is easy to see that $q^j\equiv1\pmod{2^{a+1}}$. For any odd $j$, denote $j=2j^{'}+1$, we then have $q^j=q^{2j^{'}+1}\equiv q \equiv2^a-1\pmod{2^{a+1}}$.
When $i=a+1$, it follows from (\ref{equ 1}) that $q^{j}\equiv-1\pmod{2^{a+1}}$, which clearly contradicts the discussed arguments. When $i\geq a+2$, it follows from (\ref{equ 1}) that $q^{2j}\equiv1\pmod{2^{i}}$. Since $\textrm{ord}_{2^{i}}(q)=2^{i-a}$ and $4|2^{i-a}$, we can easily get $4$ divides $2j$, which means $j$ must be an even number. However, when $j$ is even, $q^{j}\equiv-1\pmod{2^i}$ clearly not true. From the above discussions, we can see that our assumption is wrong. Therefore, for any $U,V\in Q_i$, we obtain $C_{(2^{r-i}U,n)}\neq C_{(2^{r-i}V,n)}$.
\end{proof}	

According to the above propositions, one can factorize the polynomial $x^{2^r}-1$ in $\mathbb{F}_{q}[x]$. And then, the structure of all repeated-root cyclic codes and their duals with a length of $2^rp^s$ over $\mathbb{F}_q$ can be obtained.\\

\begin{theorem}\label{theo4}
	Let the notations be as described above. Suppose that $C$ is a cyclic code of length $2^rp^s$ over $\mathbb{F}_{q}$. We have the following results.\\
	(1) If $q=2^ab+1$, then $$C= \textbf{\Big\langle}\textbf{\big(} m_0(x)\textbf{\big)}^{j_0}\textbf{\big(}m_{2^{r-1}}(x)\textbf{\big)}^{j_{2^{r-1}}}\prod_{i=2}^r\prod_{s\in D_i}\textbf{\big(}m_{s2^{r-i}}(x)\textbf{\big)}^{j_{s2^{r-i}}}\textbf{\Big\rangle},$$ $$\textrm{ and }C^{\bot}=  \textbf{\Big\langle}\textbf{\big(} m_0(x)\textbf{\big)}^{p^s-j_0}\textbf{\big(}m_{2^{r-1}}(x)\textbf{\big)}^{p^s-j_{2^{r-1}}}\prod_{i=2}^r\prod_{s\in D_i}\textbf{\big(}m_{s2^{r-i}}(x)\textbf{\big)}^{p^s-j_{-s2^{r-i}}}\textbf{\Big\rangle},$$ where $0\leq j_h\leq p^s$ for each $h$.\\
	(2) If $q=2^ab-1$, then $$C= \textbf{\Big\langle}\textbf{\big(} m_0(x)\textbf{\big)}^{j_0}\textbf{\big(}m_{2^{r-1}}(x)\textbf{\big)}^{j_{2^{r-1}}}\textbf{\big(}m_{2^{r-2}}(x)\textbf{\big)}^{j_{2^{r-2}}}\prod_{i=3}^r\prod_{s\in O_i}\textbf{\big(}m_{s2^{r-i}}(x)\textbf{\big)}^{j_{s2^{r-i}}}\textbf{\Big\rangle},$$  where $0\leq j_h\leq p^s$ for each $h$. Moreover, when $a=2$, we have 
	$$C^{\bot}= \textbf{\Big\langle}\textbf{\big(} m_0(x)\textbf{\big)}^{p^s-j_0}\textbf{\big(}m_{2^{r-1}}(x)\textbf{\big)}^{p^s-j_{2^{r-1}}}\textbf{\big(}m_{2^{r-2}}(x)\textbf{\big)}^{p^s-j_{2^{r-2}}}$$ $$\prod_{i=3}^r\prod_{s\in O_i}\textbf{\big(}m_{s2^{r-i}}(x)\textbf{\big)}^{p^s-j_{-s2^{r-i}}}\textbf{\Big\rangle}.$$ 
	When $a\geq3$, we have 
	$$C^{\bot}= \textbf{\Big\langle}\textbf{\big(} m_0(x)\textbf{\big)}^{p^s-j_0}\textbf{\big(}m_{2^{r-1}}(x)\textbf{\big)}^{p^s-j_{2^{r-1}}}\textbf{\big(}m_{2^{r-2}}(x)\textbf{\big)}^{p^s-j_{2^{r-2}}}$$ 
	$$\prod_{i=3}^a\prod_{s\in O_i}\textbf{\big(}m_{s2^{r-i}}(x)\textbf{\big)}^{p^s-j_{s2^{r-i}}}\prod_{i=a+1}^r\prod_{s\in O_i}\textbf{\big(}m_{s2^{r-i}}(x)\textbf{\big)}^{p^s-j_{-s2^{r-i}}}\textbf{\Big\rangle}.$$ 
\end{theorem}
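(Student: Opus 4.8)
The plan is to reduce everything to the factorization of $x^{2^r}-1$ supplied by Propositions \ref{prop2} and \ref{prop3} and then read off both the code and its dual from the arithmetic of $q$-cyclotomic cosets. First I would observe that since $p$ is odd we have $\gcd(2^r,p)=1$, so the characteristic-$p$ Frobenius identity gives $x^{2^rp^s}-1=(x^{2^r}-1)^{p^s}$ in $\mathbb{F}_q[x]$. Combining this with the complete decomposition $x^{2^r}-1=\prod_{t\in T_n}m_t(x)$ into irreducible minimal polynomials yields $x^{2^rp^s}-1=\prod_{t\in T_n}m_t(x)^{p^s}$. Since every cyclic code of length $2^rp^s$ is $\langle g(x)\rangle$ for a unique monic divisor $g(x)$ of $x^{2^rp^s}-1$, unique factorization forces $g(x)=\prod_{t\in T_n}m_t(x)^{j_t}$ with $0\le j_t\le p^s$. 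Substituting the explicit descriptions of $T_n$ from Proposition \ref{prop2}(1) and from Proposition \ref{prop3} then reproduces verbatim the two generator polynomials stated for $C$ in parts (1) and (2).

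For the duals I would use $C^{\bot}=\langle h(x)^{*}\rangle$ with check polynomial $h(x)=(x^{2^rp^s}-1)/g(x)=\prod_{t\in T_n}m_t(x)^{p^s-j_t}$. The one structural ingredient needed is the reciprocal relation $m_t(x)^{*}=m_{-t}(x)$: writing $m_t(x)=\prod_{k\in C_{(t,n)}}(x-\beta^k)$ for a primitive $2^r$-th root of unity $\beta$, the roots of the reciprocal polynomial are the $\beta^{-k}$, whose exponents form precisely the coset $C_{(-t,n)}$. Because $f\mapsto f^{*}$ is multiplicative on polynomials with nonzero constant term, this gives $h(x)^{*}=\prod_{t\in T_n}m_{-t}(x)^{p^s-j_t}$. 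Re-indexing this product by coset representatives, the exponent attached to $m_t(x)$ becomes $p^s-j_{t'}$, where $t'$ denotes the representative of $C_{(-t,n)}$.

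The remaining and most delicate step is to identify $t'$ for each representative $t\in T_n$; this is exactly where the split into the two cases, and within case (2) into $a=2$ versus $a\ge 3$, arises. A representative $t=s2^{r-i}$ with $s$ odd has a coset controlled by the residue of $q$ modulo $2^i$, so the question reduces to whether $-1$ is a power of $q$ there. For $t\in\{0,2^{r-1}\}$, and for $t=2^{r-2}$ in case (2) where $q\equiv -1\equiv 3\pmod 4$ forces $-2^{r-2}\equiv 3\cdot 2^{r-2}$ into the same coset, the coset is self-paired, so $t'=t$ and the unnegated exponents appear. In case (1) each $D_i$ is symmetric under negation, so $t'=-s2^{r-i}$ and every factor carries $p^s-j_{-s2^{r-i}}$. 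In case (2) with $a\ge 3$, the congruence $q\equiv -1\pmod{2^i}$ valid for $i\le a$ shows that multiplication by $q$ acts as negation modulo $2^i$, so each coset with $3\le i\le a$ is self-paired and contributes $p^s-j_{s2^{r-i}}$, while for $a+1\le i\le r$ (and for all $3\le i\le r$ when $a=2$) the non-self-pairing established in the proof of Proposition \ref{prop3} gives $t'=-s2^{r-i}$ with $-s\in O_i$, contributing $p^s-j_{-s2^{r-i}}$. Assembling these exponents yields the three displayed formulas for $C^{\bot}$. I expect this last bookkeeping to be the main obstacle: correctly matching each negated coset to its representative, and hence to the correct index on the $j$'s, is precisely the modular arithmetic of $q$ modulo powers of $2$ already carried out in Proposition \ref{prop3}, while everything else follows directly from the factorization and the reciprocal identity.
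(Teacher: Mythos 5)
Your proposal is correct and follows essentially the same route as the paper's proof: factor $x^{2^rp^s}-1=\prod_{t\in T_n}m_t(x)^{p^s}$ via Propositions \ref{prop2} and \ref{prop3}, use $C^{\bot}=\langle h(x)^{*}\rangle$ together with $m_t(x)^{*}=m_{-t}(x)$, and then sort each representative by whether its coset is self-paired (the $t\in\{0,2^{r-1}\}$ and $t=2^{r-2}$ cases, and $3\le i\le a$ when $q=2^ab-1$ with $a\ge3$ via $q\equiv-1\pmod{2^i}$) or paired with the negated representative (the negation-symmetric $D_i$ and $O_i$). The only difference is that you spell out the preliminary steps (the Frobenius identity $x^{2^rp^s}-1=(x^{2^r}-1)^{p^s}$ and multiplicativity of the reciprocal map) that the paper leaves implicit.
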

\begin{proof} (1) When $r=1$, it is clear that $C= \textbf{\Big\langle}\textbf{\big(} m_0(x)\textbf{\big)}^{j_0}\textbf{\big(}m_{1}(x)\textbf{\big)}^{j_{1}}\textbf{\Big\rangle},$ where $m_0(x)=(x-1)$ and $m_1(x)=(x+1)$. When $r\geq 2$, by the Conclusion (1) in Proposition \ref{prop2}, we have $T_n=\{0,2^{r-1}\}\bigcup\{\bigcup_{i=2}^{r}(2^{r-i}D_i)\}$ if $q=2^ab+1$. Then, we obtain $$x^{2^r}-1=m_0(x)m_{2^{r-1}}(x)\prod_{i=2}^r\prod_{s\in D_i}m_{s2^{r-i}}(x).$$ 
According to the definition of the set $D_i$, we know that for any $s\in D_i$, $-s$ still belongs to $D_i$. This implies that $s$ runs through $D_i$, and $-s$ runs through $D_i$. Combined with the definition of reciprocal polynomials, we have $m^{*}_0(x)=m_0(x)$, $m^{*}_{2^{r-1}}(x)=m_{2^{r-1}}(x)$ and $m^{*}_{s2^{r-i}}(x)=m_{-s2^{r-i}}(x)$ for any $s\in D_i$. Moreover, note that $ \prod_{s\in D_i} \textbf{\big(}m_{-s2^{r-i}}(x)\textbf{\big)}^{p^s-j_{s2^{r-i}}}=\prod_{s\in D_i} \textbf{\big(}m_{s2^{r-i}}(x)\textbf{\big)}^{p^s-j_{-s2^{r-i}}}$. So we can determine the dual codes $C^{\bot}$.

(2) When $a=2$, from Proposition \ref{prop3}, we get $O_3,O_4,\dots,O_r=\{1,-1\}$. So, $T_n=\{0,2^{r-1},2^{r-2},\pm2^{r-r},\pm2^{r-(r-1)},\dots,\pm2^{r-3}\},$ where each $r-i$ is greater than  or equal to $0$. First, $C_{2^{r-2}}=\{2^{r-2},-2^{r-2}\}$ is easily verifiable. So, $m^{*}_{2^{r-2}}(x)=m_{2^{r-2}}(x)$. Second, it is clear that for any $s$ belonging to $O_i$,$-s$ still belongs to $O_i$, when $3\leq i\leq r$. This implies that $m^{*}_{s2^{r-i}}(x)=m_{-s2^{r-i}}(x)$ for any $s\in O_i$. Recall that $m^{*}_0(x)=m_0(x)$ and $m^{*}_{2^{r-1}}(x)=m_{2^{r-1}}(x)$. So we can determine the dual codes $C^{\bot}$ when $a=2$. When $a\geq 3$, it follows from Proposition \ref{prop3} that $O_i=\{1,3,3^2,\dots,3^{2^{i-2}-1}\}$ for any $3\leq i\leq a$. It is clear that $q\equiv-1\pmod{2^i}$, since $q=2^ab-1$ and $3\leq i\leq a$. Further, it is easy to verify that $2^{r-i}3^tq\equiv-2^{r-i}3^t\pmod{2^r}$ for any $3\leq i\leq a$. This indicates that $C_{-2^{r-i}s,n}=C_{2^{r-i}s,n}$, i.e., $m^{*}_{s2^{r-i}}(x)=m_{s2^{r-i}}(x)$, for any $s\in O_i$ with $3\leq i\leq a$. Moreover, by using the same method as in (1), we can obtain $m^{*}_{s2^{r-i}}(x)=m_{-s2^{r-i}}(x)$ for any $s\in O_i$ with $a+1\leq i\leq r$. To summarize the above discussion, we can come to the dual codes $C^{\bot}$ when $a\geq 3$.
\end{proof} 	

%The following theorem gives the hull $H(C)$ and its dimension for any cyclic code $C$ with a length of $2^rp^s$ over $F_q$.\\

\section{QEC codes of length $2^rp^s$  } \label{sec4}
In this section, some QEC codes of length $2^rp^s$ will be constructed by using repeated-root cyclic codes over $\mathbb{F}_q$. To achieve this, we first introduce an important class of construction methods for constructing quantum codes, which is called CSS construction \cite{Calderbank1998}.

\begin{theorem}\label{theo5}
	Let $C$ be an $[n,k,d]_q$ linear code, and let $C^{'}$ be an $[n,k^{'},d^{'}]_q$ linear code. Denote the Hamming distance of the duals of $C^{'}$ by $d^{'\bot}$. Suppose that $C^{'}\subseteq C$, then a QEC code with parameters $[[n,k-k^{'},\textrm{min}\{d,d^{'\bot}\}]]_q$ is obtained. In particular, put $C^{'}=C^{\bot}$, then an $[[n,2k-n,d]]_q$ QEC code is obtained.
\end{theorem}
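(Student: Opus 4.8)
The plan is to realize the quantum code $\mathcal{Q}$ concretely as the span of coset states attached to the chain $C^{'}\subseteq C$, and then to read off both parameters from the classical data. First I would fix a set of coset representatives $\{\vec{x}\}$ for the quotient $C/C^{'}$ and, for each such $\vec{x}$, form the normalized superposition
$$|\psi_{\vec{x}}\rangle=\frac{1}{\sqrt{|C^{'}|}}\sum_{\vec{c}\in C^{'}}|\vec{x}+\vec{c}\rangle\in V_n.$$
Distinct cosets occupy disjoint sets of computational basis vectors, so these states are orthonormal; taking $\mathcal{Q}$ to be their span gives $\dim\mathcal{Q}=[C:C^{'}]=q^{k}/q^{k^{'}}=q^{k-k^{'}}$. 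By Definition \ref{def1} this already yields the quantum dimension $\log_q K=k-k^{'}$, so the only genuine content lies in the minimum-distance claim.

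For the distance I would argue through the error operators $E=X_{\vec{a}}Z_{\vec{b}}\in E_n$ (whose weight $\textrm{wt}(E)$ is the number of coordinates $i$ with $(a_i,b_i)\neq(0,0)$) together with the error-detection criterion: $\mathcal{Q}$ has minimum distance at least $\delta$ provided every $E$ with $\textrm{wt}(E)<\delta$ acts on $\mathcal{Q}$ as a scalar, i.e. $PEP=\lambda_E P$ with $P$ the projector onto $\mathcal{Q}$. The central computation is that
$$\langle\psi_{\vec{x}}|X_{\vec{a}}Z_{\vec{b}}|\psi_{\vec{y}}\rangle=\frac{\varsigma^{\textrm{Tr}(\vec{b}\cdot\vec{y})}}{|C^{'}|}\sum_{\vec{c}\in C^{'}}\varsigma^{\textrm{Tr}(\vec{b}\cdot\vec{c})}$$
whenever $\vec{y}-\vec{x}+\vec{a}\in C^{'}$, and equals $0$ otherwise. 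The remaining character sum over the linear code $C^{'}$ equals $|C^{'}|$ when $\vec{b}\in C^{'\bot}$ and vanishes otherwise (using the $\mathbb{F}_q$-linearity of $C^{'}$ and the nondegeneracy of the trace form); this orthogonality relation for the additive characters $\vec{c}\mapsto\varsigma^{\textrm{Tr}(\vec{b}\cdot\vec{c})}$ is the technical heart of the proof.

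Feeding this back, I would classify every $E$. If $\vec{a}\notin C$ or $\vec{b}\notin C^{'\bot}$, then all matrix elements vanish and $E$ is detected. If $\vec{a}\in C^{'}$ and $\vec{b}\in C^{\bot}$, then $E$ fixes each $|\psi_{\vec{x}}\rangle$ and acts as the identity, so $E$ is a stabilizer. In every other case $E$ permutes or rephases the cosets nontrivially and is an \emph{undetectable logical} operator. Consequently a nontrivial logical operator must satisfy either $\vec{a}\in C\setminus C^{'}$, forcing $\textrm{wt}(E)\geq\textrm{wt}(\vec{a})\geq d$, or $\vec{b}\in C^{'\bot}\setminus C^{\bot}$, forcing $\textrm{wt}(E)\geq\textrm{wt}(\vec{b})\geq d^{'\bot}$. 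Hence no logical operator has weight below $\min\{d,d^{'\bot}\}$, so every error of smaller weight is either detected or a stabilizer, and the minimum distance is at least $\min\{d,d^{'\bot}\}$.

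For the special case $C^{'}=C^{\bot}$ (legitimate exactly because $C^{\bot}\subseteq C$ is assumed), I would substitute $k^{'}=\dim C^{\bot}=n-k$, giving quantum dimension $k-k^{'}=2k-n$, and use $C^{'\bot}=(C^{\bot})^{\bot}=C$ to get $d^{'\bot}=d$, so the distance becomes $\min\{d,d\}=d$ and the code is $[[n,2k-n,d]]_q$. I expect the main obstacle to be the bookkeeping in the distance step, namely correctly isolating the undetectable-but-nontrivial errors and verifying the scalar (Knill--Laflamme) condition in each case, rather than the character sum or the dimension count, both of which are routine.
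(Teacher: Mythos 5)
Your proof is correct, but there is nothing in the paper to measure it against: Theorem \ref{theo5} is stated there as the known CSS construction, attributed to \cite{Calderbank1998}, and no proof is given in the paper at all. What you supply is the standard self-contained derivation, and every essential ingredient is present: the coset states indexed by $C/C^{'}$ are orthonormal (distinct cosets have disjoint supports in the computational basis), giving quantum dimension $q^{k-k^{'}}$; the matrix-element computation plus orthogonality of additive characters shows that $\langle\psi_{x}|X_{a}Z_{b}|\psi_{y}\rangle$ vanishes unless $y-x+a\in C^{'}$ and $b$ lies in the trace-dual of $C^{'}$, which coincides with the Euclidean dual $C^{'\bot}$ by exactly the $\mathbb{F}_q$-linearity and trace-nondegeneracy argument you invoke; consequently an operator $\varsigma^{i}X_{a}Z_{b}$ can violate the Knill--Laflamme condition only if $a\in C$ and $b\in C^{'\bot}$ with either $a\in C\setminus C^{'}$ (so $\textrm{wt}(E)\geq\textrm{wt}(a)\geq d$) or $b\in C^{'\bot}\setminus C^{\bot}$ (so $\textrm{wt}(E)\geq\textrm{wt}(b)\geq d^{'\bot}$); hence every error of weight below $\textrm{min}\{d,d^{'\bot}\}$ is detected or acts trivially, and the specialization $C^{'}=C^{\bot}$, $k^{'}=n-k$, $d^{'\bot}=d$ is handled correctly. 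Two details deserve to be made explicit in a polished write-up: (i) your stabilizer case ($a\in C^{'}$, $b\in C^{\bot}$) tacitly uses $C^{\bot}\subseteq C^{'\bot}$, which holds precisely because $C^{'}\subseteq C$, to make the phase $\varsigma^{\textrm{Tr}(b\cdot x)}$ equal to $1$ on all coset representatives; (ii) in the rephasing case ($a\in C^{'}$, $b\in C^{'\bot}\setminus C^{\bot}$) the failure of the Knill--Laflamme condition requires that $x\mapsto\textrm{Tr}(b\cdot x)$ be a nonzero, hence nonconstant, additive character of $C$ whenever $b\notin C^{\bot}$, which is the same trace-nondegeneracy argument applied to $C$ rather than $C^{'}$. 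With those two sentences added, your argument is a complete proof of the theorem the paper only quotes.
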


The CSS construction shows that a QEC code with parameters $[[n,2k-n,d,d]]_q$ can be obtained if there exists an $[n,k,d]_q$ linear code such that $C^{\bot}\subseteq C$. Moreover, in order to get the number of all dual-containing codes, we need the following lemma, which has been given in \cite{Dinh4ps}.

\begin{lemma}\label{lemm6}
	Let $n_1,n_2$ be two non-negative integers and $n$ be a positive integer. Then, there exist $\frac{(n+2)(n+1)}{2}$ pairs $\{n_1,n_2\}$ such that $n_1+n_2\leq n$.
\end{lemma}

Next, we begin to investigate the condition under which a cyclic code $C$ of length $2^rp^s$ over $\mathbb{F}_q$ is a dual-containing code.

\begin{proposition}\label{prop7}
	Let $q$ be a prime power and has the form $q=2^ab+1$. Suppose that $C$ is a cyclic code of length $2^rp^s$ over $\mathbb{F}_{q}$ and can be represented in the form $C= \textbf{\Big\langle}\textbf{\big(} m_0(x)\textbf{\big)}^{j_0}\textbf{\big(}m_{2^{r-1}}(x)\textbf{\big)}^{j_{2^{r-1}}}\prod\limits_{i=2}^r\prod\limits_{s\in D_i}\textbf{\big(}m_{s2^{r-i}}(x)\textbf{\big)}^{j_{s2^{r-i}}}\textbf{\Big\rangle}$, where $0\leq j_h\leq p^s$ for each $h$. Then, $C^{\bot}\subseteq C$ if and only if (briefly, iff) $0\leq j_0,j_{2^{r-1}}<\frac{p^s}{2}$  and $0\leq j_{s2^{r-i}}+j_{-s2^{r-i}}\leq p^s$ for any $s\in D_i$ with $2\leq i\leq r$. Moreover, the number of such codes satisfying $C^{\bot}\subseteq C$ equals $(p^s+2)^{2^{r-1}-1}(\frac{p^{s}+1}{2})^{2^{r-1}+1}$ if $r\leq a$, or  $(p^s+2)^{2^{a-1}+(r-a)2^{a-2}-1}(\frac{p^{s}+1}{2})^{2^{a-1}+(r-a)2^{a-2}+1}$ if $r> a$.   
\end{proposition}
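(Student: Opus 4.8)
The plan is to turn the set-inclusion $C^{\bot}\subseteq C$ into a divisibility of generator polynomials, read this off factor-by-factor as inequalities among the exponents, and then count the integer solutions.

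First I would apply Theorem \ref{theo4}(1), which records the generators of $C$ and of $C^{\bot}$ as products of the pairwise coprime irreducible factors $m_0$, $m_{2^{r-1}}$ and $m_{s2^{r-i}}$ ($s\in D_i$) of $x^{2^r}-1$. For cyclic codes one has $\langle g'\rangle\subseteq\langle g\rangle$ exactly when $g\mid g'$, so $C^{\bot}\subseteq C$ is equivalent to the system of exponent inequalities obtained by comparing the generator of $C$ with that of $C^{\bot}$ factor by factor. For the two self-reciprocal factors this reads $j_0\le p^s-j_0$ and $j_{2^{r-1}}\le p^s-j_{2^{r-1}}$, and because $p^s$ is odd these are precisely $0\le j_0,j_{2^{r-1}}<\tfrac{p^s}{2}$. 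For each $s\in D_i$ the factor $m_{s2^{r-i}}$ carries exponent $j_{s2^{r-i}}$ in $C$ and $p^s-j_{-s2^{r-i}}$ in $C^{\bot}$ (note the sign on the index, inherited from $m^{*}_{s2^{r-i}}=m_{-s2^{r-i}}$), so the condition becomes $j_{s2^{r-i}}+j_{-s2^{r-i}}\le p^s$. This gives the claimed characterization.

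For the count I would treat the three kinds of factors separately. The factors $m_0$ and $m_{2^{r-1}}$ each admit $\tfrac{p^s+1}{2}$ admissible exponents (the integers in $[0,\tfrac{p^s}{2})$). The remaining indices come in genuine pairs $\{s,-s\}$: by Proposition \ref{prop2} every element of $D_i$ has the shape $\pm 3^{k}$, so $s\mapsto -s$ is a fixed-point-free involution on $D_i$ and partitions it into $|D_i|/2$ pairs, each carrying a single constraint $j_{s2^{r-i}}+j_{-s2^{r-i}}\le p^s$ on its two otherwise free exponents. By Lemma \ref{lemm6} with $n=p^s$, each such pair contributes $\tfrac{(p^s+2)(p^s+1)}{2}$ admissible solutions.

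It then remains to sum the number of pairs, $N=\sum_{i=2}^{r}|D_i|/2$, where the case distinction originates: $|D_i|=2^{i-1}$ for $2\le i\le a$, while $|D_i|$ saturates at $2^{a-1}$ for $a+1\le i\le r$. Evaluating the geometric sums gives $N=2^{r-1}-1$ if $r\le a$ and $N=2^{a-1}+(r-a)2^{a-2}-1$ if $r>a$. Multiplying all contributions yields $\big(\tfrac{p^s+1}{2}\big)^2\big(\tfrac{(p^s+2)(p^s+1)}{2}\big)^{N}=(p^s+2)^{N}\big(\tfrac{p^s+1}{2}\big)^{N+2}$, and substituting the two values of $N$ reproduces both displayed formulas. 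I expect no genuine obstacle here; the only care needed is bookkeeping — matching the index $-s2^{r-i}$ in $C^{\bot}$ to $s2^{r-i}$ in $C$ so that the pairing is the true $s\leftrightarrow -s$ pairing, checking that $D_i$ has no self-paired element, and keeping the two ranges of $i$ straight in the summation.
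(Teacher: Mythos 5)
Your proposal is correct and follows essentially the same route as the paper's proof: invoke Theorem \ref{theo4}(1) for the generator of $C^{\bot}$, translate $C^{\bot}\subseteq C$ into the factor-by-factor exponent inequalities $j_0\le p^s-j_0$, $j_{2^{r-1}}\le p^s-j_{2^{r-1}}$, $j_{s2^{r-i}}\le p^s-j_{-s2^{r-i}}$, then count via Lemma \ref{lemm6} applied to each pair $\{s2^{r-i},-s2^{r-i}\}$ together with the pair counts $2^{r-1}-1$ (resp.\ $2^{a-1}+(r-a)2^{a-2}-1$). Your extra care in noting that $s\mapsto -s$ must be a fixed-point-free involution on $D_i$ is a point the paper glosses over, but it does not change the argument.
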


\begin{proof} 
According to Theorem \ref{theo4}, we have $$C^{\bot}=  \textbf{\Big\langle}\textbf{\big(} m_0(x)\textbf{\big)}^{p^s-j_0}\textbf{\big(}m_{2^{r-1}}(x)\textbf{\big)}^{p^s-j_{2^{r-1}}}\prod_{i=2}^r\prod_{s\in D_i}\textbf{\big(}m_{s2^{r-i}}(x)\textbf{\big)}^{p^s-j_{-s2^{r-i}}}\textbf{\Big\rangle}.$$
Clearly,  $C^{\bot}\subseteq C$ iff $0\leq j_0\leq p^s-j_0,0\leq j_{2^{r-1}}\leq p^s-j_{2^{r-1}}$ and $0\leq j_{s2^{r-i}}\leq p^s-j_{-s2^{r-i}}$ i.e., $0\leq j_0,j_{2^{r-1}}<\frac{p^s}{2}$ and $0\leq j_{s2^{r-i}}+j_{-s2^{r-i}}\leq p^s$ for any $s\in D_i$ with $2\leq i\leq r$. According to Proposition \ref{prop2}, when $r\leq a$, we can see that there are $\sum_{i=2}^{r}2^{i-2}=2^{r-1}-1$ pairs $\{s2^{r-i},-s2^{r-i}\}$ in $T_n$. When $r>a$, there are $\sum_{i=2}^{a}2^{i-2}+\sum_{i=a+1}^{r}2^{a-2}=2^{a-1}+(r-a)2^{a-2}-1$ pairs $\{s2^{r-i},-s2^{r-i}\}$ in $T_n$. For each pair $\{s2^{r-i},-s2^{r-i}\}$, according to Lemma \ref{lemm6}, there are $(p^s+2)\frac{p^s+1}{2}$ values to choose $j_{s2^{r-i}},j_{-s2^{r-i}}$ such that $0\leq j_{s2^{r-i}}+j_{-s2^{r-i}}\leq p^s$. Moreover, it is clear that there are $\frac{p^s+1}{2}$ values to choose each of $j_0,j_{2^r-1}$. Therefore, we can obtain the desired result. 
\end{proof}

Based on Proposition \ref{prop3}, Theorem \ref{theo4}, and Lemma \ref{lemm6}, we can derive the following important result. Since the proof process is the same as Proposition \ref{prop7}, we omit it here.

\begin{proposition}\label{prop8}
	Let $q$ be a prime power and has the form $q=2^ab-1$. Suppose that $C$ is a cyclic code of length $2^rp^s$ over $\mathbb{F}_{q}$ and can be represented in the form $C= \textbf{\Big\langle}\textbf{\big(} m_0(x)\textbf{\big)}^{j_0}\textbf{\big(}m_{2^{r-1}}(x)\textbf{\big)}^{j_{2^{r-1}}}\textbf{\big(}m_{2^{r-2}}(x)\textbf{\big)}^{j_{2^{r-2}}}\prod\limits_{i=3}^r\prod\limits_{s\in O_i}\textbf{\big(}m_{s2^{r-i}}(x)\textbf{\big)}^{j_{s2^{r-i}}}\textbf{\Big\rangle}$,  where $0\leq j_h\leq p^s$ for each $h\in T_n$. Then, we have the following results.
		
 (1) When $a=2$, $C^{\bot}\subseteq C$ iff $0\leq j_0,j_{2^{r-1}},j_{2^{r-2}}<\frac{p^s}{2}$  and $0\leq j_{s2^{r-i}}+j_{-s2^{r-i}}\leq p^s$ for any $s\in O_i$ with $3\leq i\leq r$. Moreover, the number of such codes satisfying $C^{\bot}\subseteq C$ equals $(p^s+2)^{r-2}(\frac{p^{s}+1}{2})^{r+1}$.  

(2) When $a\geq3$, $C^{\bot}\subseteq C$ iff $0\leq j_0,j_{2^{r-1}},j_{2^{r-2}}<\frac{p^s}{2}$, $0\leq j_{s2^{r-i}}\leq \frac{p^s}{2}$ for any $s\in O_i$ with $3\leq i\leq a$ and $0\leq j_{s2^{r-i}}+j_{-s2^{r-i}}\leq p^s$ for any $s\in O_i$ with $a+1\leq i\leq r$. Moreover, the number of such codes satisfying $C^{\bot}\subseteq C$ equals $(p^s+2)^{(r-a)2^{a-2}}(\frac{p^{s}+1}{2})^{2^{a-1}+(r-a)2^{a-2}+1}$.  
\end{proposition}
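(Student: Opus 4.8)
The plan is to mirror the argument of Proposition \ref{prop7}, with the extra bookkeeping forced by the fact that, when $q=2^ab-1$, the minimal polynomials split into three types rather than two. First I would invoke Theorem \ref{theo4}(2) to write down $C^{\bot}$ explicitly in each of the two sub-cases $a=2$ and $a\geq 3$. Since both $C$ and $C^{\bot}$ are generated by divisors of $x^{2^rp^s}-1$, the containment $C^{\bot}\subseteq C$ holds iff the generator polynomial of $C$ divides that of $C^{\bot}$; because $x^{2^rp^s}-1$ factors into powers of the pairwise coprime minimal polynomials $m_h(x)$, this divisibility reduces to a factor-by-factor comparison of exponents: for every $h$, the exponent of $m_h$ in $C$ must not exceed its exponent in $C^{\bot}$.

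Next I would classify the factors according to how the reciprocal map acts, using the relations $m_0^{*}=m_0$, $m_{2^{r-1}}^{*}=m_{2^{r-1}}$, $m_{2^{r-2}}^{*}=m_{2^{r-2}}$ together with the self-reciprocity or pairing of the $m_{s2^{r-i}}$ already established in the proof of Theorem \ref{theo4}(2). For a self-reciprocal factor the exponent of $m_h$ in $C^{\bot}$ is $p^s-j_h$, so the comparison gives $j_h\leq p^s-j_h$, i.e. $0\leq j_h<\frac{p^s}{2}$ (equivalently $j_h\leq\frac{p^s-1}{2}$, since $p^s$ is odd), yielding $\frac{p^s+1}{2}$ admissible values. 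For a genuine reciprocal pair $\{s2^{r-i},-s2^{r-i}\}$ the exponent of $m_{s2^{r-i}}$ in $C^{\bot}$ is $p^s-j_{-s2^{r-i}}$, so the comparison becomes $j_{s2^{r-i}}+j_{-s2^{r-i}}\leq p^s$, and Lemma \ref{lemm6} supplies $(p^s+2)\frac{p^s+1}{2}$ admissible pairs. This reproduces exactly the stated characterizations: in case $a=2$ every $O_i=\{\pm1\}$ with $3\leq i\leq r$ contributes one reciprocal pair, while in case $a\geq3$ the factors with $3\leq i\leq a$ are self-reciprocal and those with $a+1\leq i\leq r$ come in pairs.

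Finally I would carry out the counting by tallying the two kinds of factors read off from Proposition \ref{prop3}. For $a=2$ there are the three self-reciprocal singletons $0,2^{r-1},2^{r-2}$ together with one pair for each $i$ from $3$ to $r$, i.e. $r-2$ pairs, giving $(\frac{p^s+1}{2})^{3}\cdot\big((p^s+2)\frac{p^s+1}{2}\big)^{r-2}=(p^s+2)^{r-2}(\frac{p^s+1}{2})^{r+1}$. For $a\geq3$ the three singletons are joined by the self-reciprocal factors with $3\leq i\leq a$, numbering $\sum_{i=3}^{a}|O_i|=\sum_{i=3}^{a}2^{i-2}=2^{a-1}-2$, and by the pairs with $a+1\leq i\leq r$, numbering $\sum_{i=a+1}^{r}\frac{|O_i|}{2}=(r-a)2^{a-2}$; multiplying the respective contributions gives $(\frac{p^s+1}{2})^{3+(2^{a-1}-2)}\cdot\big((p^s+2)\frac{p^s+1}{2}\big)^{(r-a)2^{a-2}}=(p^s+2)^{(r-a)2^{a-2}}(\frac{p^s+1}{2})^{2^{a-1}+(r-a)2^{a-2}+1}$, as claimed.

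The main obstacle I anticipate is not any single computation but keeping the combinatorial bookkeeping of the three factor-types correct across the two sub-cases, in particular correctly reading off the cardinalities $|O_i|$ from Proposition \ref{prop3} (noting that for $a+1\leq i\leq r$ the $2^{a-1}$ elements of $O_i$ split into $2^{a-2}$ reciprocal pairs, whereas for $3\leq i\leq a$ all $2^{i-2}$ elements give self-reciprocal factors) and confirming that the exponents $2^{a-1}-2$ and $(r-a)2^{a-2}$ assemble into the advertised closed forms.
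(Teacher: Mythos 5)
Your proposal is correct and follows essentially the same route as the paper: the paper itself omits the proof of Proposition \ref{prop8}, stating that it is identical to that of Proposition \ref{prop7}, and your argument is exactly that proof transplanted — write $C^{\bot}$ via Theorem \ref{theo4}(2), compare exponents factor by factor, and count using Lemma \ref{lemm6} for the reciprocal pairs and $\frac{p^s+1}{2}$ choices for each self-reciprocal factor. Your bookkeeping of the factor types ($r-2$ pairs when $a=2$; $2^{a-1}-2$ self-reciprocal factors plus $(r-a)2^{a-2}$ pairs when $a\geq 3$) correctly reproduces both stated counts.
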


According to CSS construction, we can obtain the following QEC codes of length $2^rp^s$ by using the above cyclic codes over $F_q$. 

\begin{theorem}\label{theo9}
	Let $q$ be a prime power and has the form $q=2^ab+1$. Suppose that $C= \textbf{\Big\langle}\textbf{\big(} m_0(x)\textbf{\big)}^{j_0}\textbf{\big(}m_{2^{r-1}}(x)\textbf{\big)}^{j_{2^{r-1}}}\prod\limits_{i=2}^r\prod\limits_{s\in D_i}\textbf{\big(}m_{s2^{r-i}}(x)\textbf{\big)}^{j_{s2^{r-i}}}\textbf{\Big\rangle}$ is a cyclic code of length $2^rp^s$ over $\mathbb{F}_{q}$ with
	 $0\leq j_0,j_{2^{r-1}}<\frac{p^s}{2}$  and $0\leq j_{s2^{r-i}}+j_{-s2^{r-i}}\leq p^s$ for any $s\in D_i$ with $2\leq i\leq r$. Then, we have the following results.
	
	(1) When $r\leq a$, a QEC code with parameters $[[2^rp^s,2^rp^s-2j_0-2j_{2^{r-1}}-2\sum_{i=2}^{r}\sum_{s\in D_i}j_{s2^{r-i}},d_H(C)]]_q$ is obtained. And, the number of QEC codes constructed based on CSS construction by using all cyclic codes of length $2^rp^s$ over $\mathbb{F}_q$ equals $(p^s+2)^{2^{r-1}-1}(\frac{p^{s}+1}{2})^{2^{r-1}+1}$.
	
	(2) When $r>a$, a QEC code with parameters $[[2^rp^s,2^rp^s-2j_0-2j_{2^{r-1}}-2\sum_{i=2}^{a}\sum_{s\in D_i}j_{s2^{r-i}}-2\sum_{i=a+1}^{r}\sum_{s\in D_i}(2^{i-a}j_{s2^{r-i}}),d_H(C)]]_q$ is obtained. And, the number of QEC codes constructed based on CSS construction by using all cyclic codes of length $2^rp^s$ over $\mathbb{F}_q$ equals $(p^s+2)^{2^{a-1}+(r-a)2^{a-2}-1}(\frac{p^{s}+1}{2})^{2^{a-1}+(r-a)2^{a-2}+1}$.     
\end{theorem}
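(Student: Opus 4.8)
The plan is to apply the CSS construction (Theorem \ref{theo5}) with $C'=C^{\bot}$ and then reduce the whole statement to a single degree computation. Since the hypotheses $0\leq j_0,j_{2^{r-1}}<\frac{p^s}{2}$ and $0\leq j_{s2^{r-i}}+j_{-s2^{r-i}}\leq p^s$ are precisely the dual-containing conditions established in Proposition \ref{prop7}, we have $C^{\bot}\subseteq C$, so Theorem \ref{theo5} immediately produces a $[[2^rp^s,\,2k-2^rp^s,\,d_H(C)]]_q$ QEC code, where $k=\dim_{\mathbb{F}_q}C$ and the minimum distance is read off directly as $d_H(C)$. Thus everything comes down to computing $k$, or equivalently $\deg g(x)$ for the generator polynomial $g(x)$, since $k=2^rp^s-\deg g(x)$ and the quantum dimension is $2k-2^rp^s=2^rp^s-2\deg g(x)$.

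First I would expand $\deg g(x)$ over its irreducible factors. Working in characteristic $p$ we have $x^{2^rp^s}-1=(x^{2^r}-1)^{p^s}$, so the minimal polynomials appearing are those of a primitive $2^r$-th root of unity and $\deg m_s(x)=|C_{(s,2^r)}|$ is a $q$-cyclotomic coset size modulo $2^r$. Then $\deg g(x)=j_0\deg m_0+j_{2^{r-1}}\deg m_{2^{r-1}}+\sum_{i=2}^{r}\sum_{s\in D_i}j_{s2^{r-i}}\deg m_{s2^{r-i}}$, where the two trivial cosets $C_{(0,2^r)}$ and $C_{(2^{r-1},2^r)}$ are singletons contributing degree $1$. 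The key step is to determine $|C_{(s2^{r-i},2^r)}|$ for odd $s$: since $s$ is a unit modulo $2^i$, one checks $s2^{r-i}q^{j}\equiv s2^{r-i}\pmod{2^r}$ iff $q^{j}\equiv1\pmod{2^i}$, so $|C_{(s2^{r-i},2^r)}|=\textrm{ord}_{2^i}(q)$, independent of which $s\in D_i$ is chosen. From $q=2^ab+1$ with $b$ odd we have $q\equiv1\pmod{2^a}$ but $q\not\equiv1\pmod{2^{a+1}}$, and a standard lifting-the-exponent argument then gives $\textrm{ord}_{2^i}(q)=1$ for $2\leq i\leq a$ and $\textrm{ord}_{2^i}(q)=2^{i-a}$ for $a+1\leq i\leq r$. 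This computation of the multiplicative order is the crux of the argument; the remainder is substitution.

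Finally I would split on the two cases. When $r\leq a$, every index $i$ with $2\leq i\leq r$ satisfies $i\leq a$, so $\deg m_{s2^{r-i}}=1$ throughout and $\deg g(x)=j_0+j_{2^{r-1}}+\sum_{i=2}^{r}\sum_{s\in D_i}j_{s2^{r-i}}$, whence $2^rp^s-2\deg g(x)$ is exactly the claimed dimension. When $r>a$, the indices split at $i=a$: the degree is $1$ for $2\leq i\leq a$ and $2^{i-a}$ for $a+1\leq i\leq r$, so the second block of the sum carries the weight $2^{i-a}$, matching the stated dimension. For the enumeration, the constructed QEC codes are in bijection with the dual-containing cyclic codes of length $2^rp^s$, so the counts $(p^s+2)^{2^{r-1}-1}\big(\frac{p^s+1}{2}\big)^{2^{r-1}+1}$ and $(p^s+2)^{2^{a-1}+(r-a)2^{a-2}-1}\big(\frac{p^s+1}{2}\big)^{2^{a-1}+(r-a)2^{a-2}+1}$ transfer verbatim from Proposition \ref{prop7}. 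The only genuine obstacle is pinning down $\textrm{ord}_{2^i}(q)$ correctly; once the coset sizes are known, both the parameter formulas and the enumeration follow by direct bookkeeping.
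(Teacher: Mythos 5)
Your proposal is correct and follows essentially the same route as the paper's proof: verify the dual-containing property via Proposition \ref{prop7}, apply the CSS construction with $C'=C^{\bot}$, reduce the dimension formula to the sizes of the $q$-cyclotomic cosets modulo $2^r$ (equal to $\textrm{ord}_{2^i}(q)$, which is $1$ for $i\leq a$ and $2^{i-a}$ for $i>a$), and transfer the enumeration directly from Proposition \ref{prop7}. The only difference is that you spell out the justification of the coset sizes (the unit argument and the lifting-the-exponent computation) which the paper merely asserts.
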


\begin{proof} 
(1) Notice that $|C_{(2^{r-i}s,n)}|=\textrm{ord}_{\frac{2^{r-i}s}{\textrm{gcd}(2^{r-i}s,n)}}(q).$
For each $s\in D_i$ with $2\leq i\leq r$, we then have $|C_{(2^{r-i}s,n)}|=1$, which means the degree of $m_{s2^{r-i}}(x)$ is $1$. Hence, the cyclic code $C$ has parameters $[2^rp^s,2^rp^s-j_0-j_{2^{r-1}}-\sum_{i=2}^{r}\sum_{s\in D_i}j_{s2^{r-i}},d_H(C)]$. From Proposition \ref{prop7}, the cyclic code $C$ is a dual-containing code. Finally, based on CSS construction, we obtain a QEC code with parameters $[[2^rp^s,2^rp^s-2j_0-2j_{2^{r-1}}-2\sum_{i=2}^{r}\sum_{s\in D_i}j_{s2^{r-i}},d_H(C)]_q$ by using the cyclic code $C$. 

(2) Note that $\textrm{Ord}_{2^i}(q)=2^{i-a}$ when $i\geq a+1$. For each $s\in D_i$ with $a+1\leq i\leq r$, it is easy to verify that $|C_{(2^{r-i}s,n)}|=2^{i-a}$, which implies that the degree of $m_{s2^{r-i}}(x)$ is $2^{i-a}$. Recall that $|C_{(2^{r-i}s,n)}|=1$ for each $s\in D_i$ with $2\leq i\leq a$. Then the cyclic code $C$ has parameters $[2^rp^s,2^rp^s-j_0-j_{2^{r-1}}-\sum_{i=2}^{a}\sum_{s\in D_i}j_{s2^{r-i}}-\sum_{i=a+1}^{r}\sum_{s\in D_i}(2^{i-a}j_{s2^{r-i}}),d_H(C)]$.
Similarly, we get a QEC code with parameters $[[2^rp^s,2^rp^s-2j_0-2j_{2^{r-1}}-2\sum_{i=2}^{a}\sum_{s\in D_i}j_{s2^{r-i}}-2\sum_{i=a+1}^{r}\sum_{s\in D_i}(2^{i-a}j_{s2^{r-i}}),d_H(C)]]_q$ by using the cyclic code $C$. 
\end{proof}

Similarly, we can obtain the following theorem. The proof process is identical to that of Theorem \ref{theo9}, so it is omitted.

\begin{theorem}\label{theo10}
	Let $q$ be a prime power and has the form $q=2^ab-1$. Suppose that $C$ is a cyclic code of length $2^rp^s$ over $\mathbb{F}_{q}$ and can be represented in the form $C= \textbf{\Big\langle}\textbf{\big(} m_0(x)\textbf{\big)}^{j_0}\textbf{\big(}m_{2^{r-1}}(x)\textbf{\big)}^{j_{2^{r-1}}}\textbf{\big(}m_{2^{r-2}}(x)\textbf{\big)}^{j_{2^{r-2}}}\prod\limits_{i=3}^r\prod\limits_{s\in O_i}\textbf{\big(}m_{s2^{r-i}}(x)\textbf{\big)}^{j_{s2^{r-i}}}\textbf{\Big\rangle}$,  where $0\leq j_h\leq p^s$ for each $h\in T_n$. Then, we have the following results.

(1) When $a=2$, if $0\leq j_0,j_{2^{r-1}},j_{2^{r-2}}<\frac{p^s}{2}$ and $0\leq j_{s2^{r-i}}+j_{-s2^{r-i}}\leq p^s$ for any $s\in O_i$ with $3\leq i\leq r$, then a QEC code with parameters $[[2^rp^s,2^rp^s-2j_0-2j_{2^{r-1}}-4j_{2^{r-2}}-2\sum_{i=3}^{r}\sum_{s\in O_i}(2^{i-a}j_{s2^{r-i}}),d_H(C)]]_q$ is obtained. Moreover, the number of QEC codes constructed based on CSS construction by using all cyclic codes of length $2^rp^s$ over $\mathbb{F}_q$ equals $(p^s+2)^{r-2}(\frac{p^{s}+1}{2})^{r+1}$.  

(2) When $a\geq3$, if $0\leq j_0,j_{2^{r-1}},j_{2^{r-2}}<\frac{p^s}{2}$, $0\leq j_{s2^{r-i}}\leq \frac{p^s}{2}$ for any $s\in O_i$ with $3\leq i\leq a$ and $0\leq j_{s2^{r-i}}+j_{-s2^{r-i}}\leq p^s$ for any $s\in O_i$ with $a+1\leq i\leq r$, then a QEC code with parameters $[[2^rp^s,2^rp^s-2j_0-2j_{2^{r-1}}-4j_{2^{r-2}}-4\sum_{i=3}^{a}\sum_{s\in D_i}j_{s2^{r-i}}-2\sum_{i=a+1}^{r}\sum_{s\in D_i}(2^{i-a}j_{s2^{r-i}}),d_H(C)]]_q$ is obtained. Moreover,  the number of QEC codes constructed based on CSS construction by using all cyclic codes of length $2^rp^s$ over $\mathbb{F}_q$ equals $(p^s+2)^{(r-a)2^{a-2}}(\frac{p^{s}+1}{2})^{2^{a-1}+(r-a)2^{a-2}+1}$.     
\end{theorem}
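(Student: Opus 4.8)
The plan is to follow the template already used for Theorem \ref{theo9}. First I would determine the degree of every minimal polynomial occurring in the generator of $C$, which fixes $\dim C$; then invoke Proposition \ref{prop8} to guarantee $C^{\bot}\subseteq C$ under the stated inequalities; and finally apply the CSS construction (Theorem \ref{theo5}) with $C'=C^{\bot}$ to obtain an $[[2^rp^s,\,2\dim C-2^rp^s,\,d_H(C)]]_q$ quantum code. The two counting assertions are then read off verbatim from the respective counts in Proposition \ref{prop8}.

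The computational core is the degree of each $m_{s2^{r-i}}(x)$, i.e. the cardinality $|C_{(2^{r-i}s,n)}|$ of the associated $q$-cyclotomic coset. Exactly as in Theorem \ref{theo9}, since $s$ is odd one checks that the least $\ell$ with $2^{r-i}sq^{\ell}\equiv 2^{r-i}s\pmod{2^r}$ is the least $\ell$ with $q^{\ell}\equiv 1\pmod{2^i}$, so $|C_{(2^{r-i}s,n)}|=\textrm{ord}_{2^i}(q)$. Together with $\deg m_0=\deg m_{2^{r-1}}=1$ and $\deg m_{2^{r-2}}=|C_{(2^{r-2},n)}|=2$ (the coset $\{2^{r-2},-2^{r-2}\}$, as already noted in the proof of Theorem \ref{theo4}, the same computation holding for every $a\geq 2$), the whole problem reduces to evaluating $\textrm{ord}_{2^i}(q)$ for $q=2^ab-1$.

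Here lies the one genuine divergence from Theorem \ref{theo9}, and it is the step I expect to need the most care. Because $q=2^ab-1\equiv -1\pmod{2^i}$ for every $3\leq i\leq a$, one has $q^2\equiv 1$ but $q\not\equiv 1\pmod{2^i}$, hence $\textrm{ord}_{2^i}(q)=2$; the analogous cosets in Theorem \ref{theo9} had size $1$ because there $q\equiv 1$. For $a+1\leq i\leq r$ the order is $\textrm{ord}_{2^i}(q)=2^{i-a}$, just as before. Consequently, when $a\geq 3$ the minimal polynomials split as $\deg m_{s2^{r-i}}=2$ for $3\leq i\leq a$ and $\deg m_{s2^{r-i}}=2^{i-a}$ for $a+1\leq i\leq r$, whereas when $a=2$ the range $3\leq i\leq a$ is empty and every $m_{s2^{r-i}}$ with $3\leq i\leq r$ has degree $2^{i-a}$.

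Assembling these degrees yields $\dim C=2^rp^s-j_0-j_{2^{r-1}}-2j_{2^{r-2}}-\Sigma$, where $\Sigma=\sum_{i=3}^{r}\sum_{s\in O_i}2^{i-a}j_{s2^{r-i}}$ when $a=2$ and $\Sigma=\sum_{i=3}^{a}\sum_{s\in O_i}2j_{s2^{r-i}}+\sum_{i=a+1}^{r}\sum_{s\in O_i}2^{i-a}j_{s2^{r-i}}$ when $a\geq 3$. Under the stated bounds Proposition \ref{prop8} forces $C^{\bot}\subseteq C$, so CSS delivers dimension $2\dim C-2^rp^s$ and distance $d_H(C)$; doubling each term of $2^rp^s-\dim C$ produces exactly the claimed parameters, the coefficient $4$ on $j_{2^{r-2}}$ (and, for $a\geq 3$, on the $3\leq i\leq a$ sum) being the visible signature of the degree-$2$ polynomials. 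The code counts are then immediate from Proposition \ref{prop8}. The remaining concerns are purely bookkeeping: one must keep the degree-$2$ factor straight so that the coefficients $2$ and $4$ attach to the correct terms, and note that the $a\geq 3$ parameter in the statement writes $D_i$ where $O_i$ is meant.
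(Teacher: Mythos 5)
Your proposal is correct and follows exactly the route the paper intends: the paper omits the proof of Theorem~\ref{theo10} as ``identical to that of Theorem~\ref{theo9},'' and your argument is precisely that proof adapted to $q=2^ab-1$, computing $\deg m_{s2^{r-i}}=\textrm{ord}_{2^i}(q)$ (equal to $2$ for $3\leq i\leq a$ since $q\equiv-1\pmod{2^i}$, and $2^{i-a}$ for $i\geq a+1$), invoking Proposition~\ref{prop8} for $C^{\bot}\subseteq C$ and the code counts, and applying the CSS construction with $C'=C^{\bot}$. You also correctly pinpoint the one place where care is needed beyond Theorem~\ref{theo9} (the degree-$2$ factors producing the coefficients $4$ on $j_{2^{r-2}}$ and on the $3\leq i\leq a$ sum) and correctly flag that the statement's $D_i$ in part (2) should read $O_i$.
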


\begin{example}\label{exam1}
 Put $a=4,b=1$, then $q=2^ab+1=17$ and $p=17$. Let $r=3,s=1$, then $n=2^rp^s=136$. By factorization, we have 
 $x^{136}-1=(x+1)^{17}(x-1)^{17}(x+2)^{17}(x-2)^{17}(x+4)^{17}(x-4)^{17}(x+8)^{17}(x-8)^{17}$.
 According to Theorem \ref{theo9}, the number of QEC codes constructed based on CSS construction by using all cyclic codes of length $136$ over $\mathbb{F}_{17}$ equals $115279213$. Some specific parameters are listed in Table \ref{tab1}.
 \begin{table}[!htbp]  \label{tab1}
 	\caption{\centering {Some QEC codes obtained from Conclusion (1) in Theorem \ref{theo9}} }
 	\begin{tabular}{|c|c|c|}
 		\hline
 		the generator polynomial $g(x)$ \textrm{ of } $C$ & the parameters of $C$ &  QEC code \\
 		\hline
 		$(x+1)$&  $[136,135,2]_{17}$  & $[[136,135,2]]_{17}$ \\
        \hline
 		$(x+1)^2(x+8)$&  $[136,133,3]_{17}$  & $[[136,130,3]]_{17}$ \\
 	    \hline
     	$(x+1)^{3}(x+2)(x+4)$&  $[136,131,4]_{17}$  & $[[136,126,4]]_{17}$ \\
        \hline
        	\makecell[c]{$(x+1)^{4}(x+2)$\\$(x+4)(x+8)^2$}&  $[136,128,5]_{17}$  & $[[136,120,5]]_{17}$ \\
        \hline
        	\makecell[c]{$(x+1)^{5}(x-1)$\\$(x+2)(x+4)(x+8)^{2}$}&  $[136,126,6]_{17}$  & $[[136,116,6]]_{17}$ \\
        \hline
       	\makecell[c]{ $(x+1)^{6}(x-1)(x+2)^2$\\$(x+4)(x+8)^3(x-8)$}&  $[136,122,7]_{17}$  & $[[136,108,7]]_{17}$ \\
        \hline
        	\makecell[c]{$(x+1)^{7}(x-1)^2(x+2)$\\$(x-2)(x+4)(x+8)^3(x-8)$}&  $[136,120,8]_{17}$  &  $[[136,104,8]]_{17}$ \\
        \hline
 	\end{tabular}
 \end{table}
\end{example}

\begin{example}\label{exam2}
 Put $a=2,b=3$, then $q=2^ab+1=13$ and $p=13$. Let $r=3,s=1$, then $n=2^rp^s=104$. By factorization, we have 
 $x^{104}-1=(x+1)^{13}(x-1)^{13}(x+5)^{13}(x-5)^{13}(x^2+5)^{13}(x^2-5)^{13}$.
According to Theorem \ref{theo9}, the number of QEC codes constructed based on CSS construction by using all cyclic codes of length $104$ over $\mathbb{F}_{13}$ equals $540225$. Some specific parameters are listed in Table \ref{tab2}. 

\begin{table} [h]  \label{tab2}
	\caption{\centering {Some QEC codes obtained from Conclusion (2) in Theorem \ref{theo9}} }
	\begin{tabular}{|c|c|c|}
		\hline
		the generator polynomial $g(x)$ \textrm{ of } $C$ & the parameters of $C$ &  QEC code \\
		\hline
		$(x+1)$&  $[104,103,2]_{13}$  & $[[104,102,2]]_{13}$ \\
		\hline
		$(x+1)^2(x^2+5)$&  $[104,100,3]_{13}$  & $[[104,96,3]]_{13}$ \\
		\hline
		$(x+1)^{3}(x^2+5)(x^2-5)$& $[104,97,4]_{13}$   & $[[104,90,4]]_{13}$  \\
		\hline
		$(x+1)^{4}(x+5)(x^2+5)^2(x^2-5)$&  $[104,93,5]_{13}$  & $[[104,82,5]]_{13}$ \\
		\hline
		$(x+1)^{5}(x+5)(x^2+5)^2(x^2-5)$&  $[104,92,6]_{13}$  & $[[104,80,6]]_{13}$ \\
		\hline
		$(x+1)^{6}(x-1)(x+5)^2(x^2+5)^3(x^2-5)$&  $[104,87,7]_{13}$  & $[[104,70,7]]_{13}$  \\
		\hline
		$(x+1)^7(x-1)(x+5)^2(x^2+5)^3(x^2-5))$&  $[104,86,8]_{13
		}$  & $[[104,68,8]]_{13}$ \\
		\hline
	\end{tabular}
\end{table}
\end{example}

\begin{example}\label{exam3}
 Put $a=2,b=3$, then $q=2^ab-1=11$ and $p=11$. Let $r=3,s=1$, then $n=2^rp^s=88$. By factorization, we have 
$x^{88}-1=(x+1)^{11}(x-1)^{11}(x^2+1)^{11}(x^2+3x+10)^{11}(x^2+8x+10)^{11}$.
According to Theorem \ref{theo9}, the number of QEC codes constructed based on CSS construction by using all cyclic codes of length $88$ over $\mathbb{F}_{11}$ equals $16848$. Some specific parameters are listed in Table \ref{tab3}. 
\begin{table}[!h]  \label{tab3}
	\caption{\centering {Some QEC codes obtained from Conclusion (1) in Theorem \ref{theo10}} }
	\begin{tabular}{|c|c|c|}
		\hline
		the generator polynomial $g(x)$ \textrm{ of } $C$ & the parameters of $C$ &  QEC code \\
		\hline
		$(x+1)$&  $[88,87,2]_{11}$  & $[[88,86,2]]_{11}$ \\
		\hline
	    $(x+1)^2(x^2+3x+10)$&  $[88,84,3]_{11}$  & $[[88,80,3]]_{11}$ \\
		\hline
		$(x+1)^3(x^2+3x+10)$& $[88,83,4]_{11}$   & $[[88,78,4]]_{11}$  \\
		\hline
		$(x+1)^4(x^2+1)(x^2+3x+10)^2$&  $[88,78,5]_{11}$  & $[[88,68,5]]_{11}$ \\
		\hline
		$(x+1)^5(x^2+1)(x^2+3x+10)^2$&  $[88,77,6]_{11}$  & $[[88,66,6]]_{11}$ \\
		\hline
		\makecell[c]{$(x+1)^3(x-1)(x^2+1)$\\$(x^2+3x+10)^6(x^2+8x+10)$}&  $[88,68,7]_{11}$  & $[[88,48,7]]_{11}$ \\
		\hline
		\makecell[c]{$(x+1)^3(x-1)(x^2+1)^2$\\$(x^2+3x+10)^7(x^2+8x+10)$}&  $[88,64,8]_{11
		}$  & $[[88,40,8]]_{11}$ \\
		\hline
		\makecell[c]{	$(x+1)^4(x-1)(x^2+1)^2$\\$(x^2+3x+10)^8(x^2+8x+10)$}&  $[88,61,9]_{11
		}$  & $[[88,34,9]]_{11}$ \\
		\hline
		\makecell[c]{	$(x+1)^3(x-1)^3(x^2+1)^4$\\$(x^2+3x+10)^{10}(x^2+8x+10)$}&  $[88,52,10]_{11
		}$  & $[[88,16,10]]_{11}$ \\
		\hline
		\makecell[c]{	$(x+1)^4(x-1)^4(x^2+1)^5$\\$(x^2+3x+10)^{10}(x^2+8x+10)$}&  $[88,48,11]_{11
		}$  & $[[88,8,11]]_{11}$ \\
		\hline
	\end{tabular}
\end{table}
\end{example}

\begin{example}\label{exam4}

	Put $a=3,b=1$, then $q=2^ab-1=7$ and $p=7$. Let $r=3,s=1$, then $n=2^rp^s=56$. By factorization, we have 
	$x^{56}-1=(x+1)^{7}(x-1)^{7}(x^2+1)^{7}(x^2+3x+1)^{7}(x^2+4x+1)^{7}$.
According to Theorem \ref{theo9}, the number of QEC codes constructed based on CSS construction by using all cyclic codes of length $56$ over $\mathbb{F}_{7}$ equals $1024$. Some specific parameters are listed in Table \ref{tab4}.
\begin{table}[!h]  \label{tab4}
	\caption{\centering {Some QEC codes obtained from Conclusion (2) in Theorem \ref{theo10}} }
	\begin{tabular}{|c|c|c|}
		\hline
		the generator polynomial $g(x)$ \textrm{ of } $C$ & the parameters of $C$ &  QEC code \\
		\hline
		$(x+1)$&  $[56,55,2]_{7}$  & $[[56,54,2]]_{7}$ \\
		\hline
		$(x+1)^2(x^2+3x+1)$&  $[56,52,3]_{7}$  & $[[56,48,3]]_{7}$ \\
		\hline
	 $(x+1)^3(x^2+3x+1)$& $[56,51,4]_{7}$   & $[[56,46,4]]_{7}$  \\
		\hline
	\end{tabular}
\end{table}
\end{example}

\begin{remark}\label{rem1}
The Hamming distances of all cyclic codes in Examples \ref{exam1},\ref{exam2},\ref{exam3}, \ref{exam4} can be determined using our latest research results in \cite{Pan2024}. Comparing these QEC codes in Examples \ref{exam1},\ref{exam2},\ref{exam3},\ref{exam4} and online table \cite{Grassl2007}, we find that our QEC codes are all new, meaning that they exhibit parameters distinct from those of previously known constructions. Specifically, our QEC codes in Example \ref{exam1} are better than all the QEC codes with the same length and Hamming distance given in \cite{Grassl2007}, that is, the dimension of our QEC codes is greater than the dimension of all the QEC codes with the same length and Hamming distance in \cite{Grassl2007}. In Example \ref{exam2}, our QEC codes $[[104,103,2]]$, $[[104,96,3]]$ and $[[104,80,6]]$ are better than all the QEC codes with the same length and Hamming distance given in \cite{Grassl2007}, and the other codes are consistent with  some known codes constructed in \cite{Grassl2007}. In Example \ref{exam3}, our QEC codes $[[88,86,2]]$, $[[88,78,4]]$, $[[88,68,5]]$ and $[[88,66,6]]$ are consistent with some known codes constructed in \cite{Grassl2007}. In Example \ref{exam4}, our QEC codes $[[56,54,2]]$ and $[[56,46,4]]$ are consistent with some known codes constructed in \cite{Grassl2007}. Note that these known QEC codes in \cite{Grassl2007} and our QEC codes in Examples \ref{exam1},\ref{exam2},\ref{exam3},\ref{exam4} are constructed over different fields. This demonstrates that these QEC codes derived in Examples \ref{exam1},\ref{exam2},\ref{exam3},\ref{exam4} are all novel.
\end{remark} 

Steane's construction method refers to a technique for QEC codes developed by Steane in 1999. This method is a famous way to construct QEC codes using classical linear codes, which is given in \cite{Steane1999}.

\begin{theorem}\label{theo11}
Let $C$ be an $[n,k,d]_q$ linear code, and let $C^{'}$ be an $[n,k^{'},d^{'}]_q$ linear code, where $k^{'}\geq k+1$. Suppose that $C$ is a dual-containing code and $C\subseteq C^{'}$, then a QEC code with parameters $[[n,k+k^{'}-n,\textrm{min}\{d,\lceil\frac{q+1}{q}\rceil d^{'}\}]]_q$ is obtained. 
\end{theorem}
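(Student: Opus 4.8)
The plan is to recognize this statement as the $q$-ary version of Steane's enlargement of CSS codes and to prove it inside the stabilizer (symplectic) framework, where an $[[n,K]]_q$ code corresponds to an $\mathbb{F}_q$-linear subspace $\mathcal{S}\subseteq\mathbb{F}_q^{2n}$ that is self-orthogonal under the symplectic form $\langle(\overrightarrow{a}\mid\overrightarrow{b}),(\overrightarrow{a'}\mid\overrightarrow{b'})\rangle=\overrightarrow{a}\cdot\overrightarrow{b'}-\overrightarrow{b}\cdot\overrightarrow{a'}$, with $\dim\mathcal{S}=n-K$ and distance read off from the symplectic weights of $\mathcal{S}^{\perp}\setminus\mathcal{S}$. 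Since $C$ is dual-containing and $C\subseteq C'$, the first step is to record the nested chain $C'^{\perp}\subseteq C^{\perp}\subseteq C\subseteq C'$, which is the structural backbone of the whole argument.

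Next I would construct the stabilizer explicitly. Fix $\lambda\in\mathbb{F}_q^{*}$ and choose vectors $g_1,\dots,g_{k'-k}\in C^{\perp}$ whose images extend $C'^{\perp}$ to $C^{\perp}$. Let $\mathcal{S}$ be generated by $(\overrightarrow{c}\mid 0)$ and $(0\mid\overrightarrow{c})$ for $\overrightarrow{c}\in C'^{\perp}$, together with the mixed generators $(g_i\mid\lambda g_i)$. Self-orthogonality is exactly where the hypotheses enter: the two pure families commute because $C'^{\perp}\subseteq C$ forces $\overrightarrow{c}\cdot\overrightarrow{c'}=0$; each mixed generator commutes with the pure ones because $g_i\in C^{\perp}\subseteq C'$ is orthogonal to $C'^{\perp}$; and two mixed generators commute since $\langle(g_i\mid\lambda g_i),(g_j\mid\lambda g_j)\rangle=\lambda(g_i\cdot g_j)-\lambda(g_i\cdot g_j)=0$ identically. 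A dimension count then gives $\dim\mathcal{S}=2(n-k')+(k'-k)=2n-k-k'$, hence $K=n-\dim\mathcal{S}=k+k'-n$ logical qudits, as claimed.

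For the distance I would describe the dual intrinsically as $\mathcal{S}^{\perp}=\{(\overrightarrow{a}\mid\overrightarrow{b}):\overrightarrow{a},\overrightarrow{b}\in C',\ \lambda\overrightarrow{a}-\overrightarrow{b}\in C\}$ (using $\overrightarrow{a},\overrightarrow{b}\in C'\perp C'^{\perp}$ together with $C^{\perp}=C'^{\perp}+\langle g_i\rangle$), and split $\mathcal{S}^{\perp}\setminus\mathcal{S}$ according to whether the two components $\overrightarrow{a},\overrightarrow{b}$ are linearly dependent or not. In the dependent case the governing codeword is driven into $C$, so its symplectic weight is at least $d=d_H(C)$. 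In the independent case I would run the pencil-counting argument: the plane $\langle\overrightarrow{a},\overrightarrow{b}\rangle\subseteq C'$ contains $q+1$ one-dimensional subspaces, each spanned by a nonzero codeword with at most $n-d'$ zero coordinates, while every position in $\textrm{supp}(\overrightarrow{a})\cup\textrm{supp}(\overrightarrow{b})$ is a zero for exactly one of these $q+1$ lines. Summing zeros over the $q+1$ representatives gives $(q+1)(n-N)+N\le(q+1)(n-d')$ with $N=\textrm{swt}(\overrightarrow{a}\mid\overrightarrow{b})$, which rearranges to $N\ge\frac{q+1}{q}d'$, hence $N\ge\lceil\frac{q+1}{q}d'\rceil$. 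Combining the two regimes yields the distance $\min\{d,\lceil\frac{q+1}{q}d'\rceil\}$.

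The hard part will be the distance bound, not the parameter count. Two points demand real care: first, extracting the exact factor $\frac{q+1}{q}$ from the counting identity, which is clean only when all $q+1$ line representatives are genuinely nonzero codewords of $C'$, i.e. when $\overrightarrow{a}$ and $\overrightarrow{b}$ are independent; and second, ruling out a low-weight ``degenerate'' logical operator in the single proportionality class $\overrightarrow{b}=\lambda\overrightarrow{a}$, where the dependent-case reduction does not by itself force $\overrightarrow{a}\in C$. This exceptional class is precisely where the enlargement hypothesis $k'\ge k+1$ and the placement of the scalar $\lambda$ must be used, so that every such pair falls either into $C$ (weight $\ge d$) or into $\mathcal{S}$. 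Because the statement is the established $q$-ary Steane enlargement, an acceptable alternative is to invoke \cite{Steane1999} directly and present only the specialization of its parameters to the present setting, reusing the nested pair $C'^{\perp}\subseteq C$ already available from the CSS construction in Theorem \ref{theo5}.
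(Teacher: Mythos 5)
The paper itself contains no proof of this theorem: it is quoted as a known result from Steane's 1999 enlargement construction, so your attempt has to be measured against the standard proof of that result. Your framework is the right one, and most of it is executed correctly: the chain $C'^{\perp}\subseteq C^{\perp}\subseteq C\subseteq C'$, the commutation checks, the dimension count $2(n-k')+(k'-k)=2n-k-k'$ giving $k+k'-n$ logical qudits, the description of $\mathcal{S}^{\perp}$, and the pencil-counting bound $N\geq\frac{q+1}{q}d'$ in the linearly independent case are all sound. The genuine gap is exactly the point you flag at the end but do not close: your mixed generators $(g_i\mid\lambda g_i)$ use a \emph{scalar} map $M=\lambda I$ on the complement of $C'^{\perp}$ in $C^{\perp}$, and no choice of the scalar $\lambda$ can rescue the proportionality class $\overrightarrow{b}=\lambda\overrightarrow{a}$. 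Indeed, for every $a\in C'\setminus C^{\perp}$ the pair $(a\mid\lambda a)$ lies in $\mathcal{S}^{\perp}$ (since $\lambda a-\lambda a=0\in C$) but not in $\mathcal{S}$ (membership in $\mathcal{S}$ forces $a\in C'^{\perp}+\langle g_1,\dots,g_{k'-k}\rangle=C^{\perp}$), and such $a$ exist because dual-containment and $k'\geq k+1$ give $\dim C^{\perp}=n-k<k'$. The symplectic weight of such a pair is $\mathrm{wt}(a)$, which in general is only $\geq d'$, not $\geq\min\{d,\lceil\frac{q+1}{q}d'\rceil\}$; so as written your construction only yields distance $\geq\min\{d,d'\}$, no better than the CSS construction applied to the pair $C'^{\perp}\subseteq C$.

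The missing idea is the one that forces the stronger hypothesis found in the literature: replace $\lambda I$ by a linear map $M$ on the $(k'-k)$-dimensional complement whose characteristic polynomial has no root in $\mathbb{F}_q$ (e.g., the companion matrix of an irreducible polynomial of degree $k'-k$). Then $M-\mu I$ is invertible for every $\mu\in\mathbb{F}_q$, so each dependent class $\overrightarrow{b}=\mu\overrightarrow{a}$ forces $a$ to be orthogonal to $(M-\mu I)(W)=W$ and hence $a\in C$, giving weight $\geq d$; the independent case is then covered by your counting argument. Such an $M$ exists if and only if $k'-k\geq2$, since on a one-dimensional space every linear map is scalar and has an eigenvalue. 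This is precisely why Steane's theorem and its $q$-ary generalizations are stated with $k'\geq k+2$, not $k'\geq k+1$ as printed here; your fallback of simply citing Steane runs into the same mismatch, since the hypothesis in the paper's statement is off by one (and its ceiling is also misplaced: it should read $\lceil\frac{q+1}{q}d'\rceil$, as you correctly use, rather than $\lceil\frac{q+1}{q}\rceil d'$). In short: right strategy and correct auxiliary computations, but the exceptional class you identified cannot be eliminated at $k'=k+1$ by any scalar $\lambda$, and eliminating it requires both the hypothesis $k'\geq k+2$ and a non-scalar, eigenvalue-free $M$.
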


According to Steane's construction, we can obtain the following QEC codes by using cyclic codes of length $2^rp^s$ over $\mathbb{F}_q$. 

\begin{theorem}\label{theo12}
	Let $q$ be a prime power and has the form $q=2^ab+1$. Suppose that $C= \textbf{\Big\langle}\textbf{\big(} m_0(x)\textbf{\big)}^{j_0}\textbf{\big(}m_{2^{r-1}}(x)\textbf{\big)}^{j_{2^{r-1}}}\prod\limits_{i=2}^r\prod\limits_{s\in D_i}\textbf{\big(}m_{s2^{r-i}}(x)\textbf{\big)}^{j_{s2^{r-i}}}\textbf{\Big\rangle}$ is a cyclic code of length $2^rp^s$ over $\mathbb{F}_{q}$ with
	$0\leq j_0,j_{2^{r-1}}<\frac{p^s}{2}$  and $0\leq j_{s2^{r-i}}+j_{-s2^{r-i}}\leq p^s$ for any $s\in D_i$ with $2\leq i\leq r$. Put $C^{'}=\textbf{\Big\langle}\textbf{\big(} m_0(x)\textbf{\big)}^{j^{'}_0}\textbf{\big(}m_{2^{r-1}}(x)\textbf{\big)}^{j^{'}_{2^{r-1}}}\prod\limits_{i=2}^r\prod\limits_{s\in D_i}\textbf{\big(}m_{s2^{r-i}}(x)\textbf{\big)}^{j^{'}_{s2^{r-i}}}\textbf{\Big\rangle}$, where $0\leq j^{'}_h\leq j_h$ for each $h$. The following results hold.
	
	(1) When $r\leq a$, then $k=2^rp^s-(j_0+j_{2^{r-1}}+\sum_{i=2}^{r}\sum_{s\in D_i}j_{s2^{r-i}})$ and $k^{'}=2^rp^s-(j^{'}_0+j^{'}_{2^{r-1}}+\sum_{i=2}^{r}\sum_{s\in D_i}j^{'}_{s2^{r-i}})$. Furthermore, if $k^{'}\geq k+1$, a QEC code with parameters $[[2^rp^s,k+k^{'}-2^rp^s,\textrm{min}\{d,\lceil\frac{q+1}{q}\rceil d^{'}\}]]_q$ is essential.
	
	(2) When $r>a$, then $k=2^rp^s-(j_0+j_{2^{r-1}}+\sum_{i=2}^{a}\sum_{s\in D_i}j_{s2^{r-i}}+\sum_{i=a+1}^{r}\sum_{s\in D_i}(2^{i-a}j_{s2^{r-i}}))$ and $k^{'}=2^rp^s-(j^{'}_0+j^{'}_{2^{r-1}}+\sum_{i=2}^{a}\sum_{s\in D_i}j^{'}_{s2^{r-i}}+\sum_{i=a+1}^{r}\sum_{s\in D_i}(2^{i-a}j^{'}_{s2^{r-i}}))$. Furthermore, if $k^{'}\geq k+1$, a QEC code with parameters $[[2^rp^s,k+k^{'}-2^rp^s,\textrm{min}\{d,\lceil\frac{q+1}{q}\rceil d^{'}\}]]_q$ is essential.   
\end{theorem}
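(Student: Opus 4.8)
The plan is to apply Steane's construction (Theorem \ref{theo11}) to the nested pair $C\subseteq C'$, so the work reduces to checking the three hypotheses of that theorem and then reading off the two dimensions $k$ and $k'$ from the degrees of the respective generator polynomials.

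First I would record the inclusion $C\subseteq C'$. Writing $g(x)$ and $g'(x)$ for the generator polynomials of $C$ and $C'$, the hypothesis $0\leq j^{'}_h\leq j_h$ for every $h$ forces $g'(x)\mid g(x)$, and since a smaller generator polynomial yields a larger cyclic code, we get $\langle g(x)\rangle\subseteq\langle g'(x)\rangle$, i.e. $C\subseteq C'$. Next, the stated constraints $0\leq j_0,j_{2^{r-1}}<\frac{p^s}{2}$ and $0\leq j_{s2^{r-i}}+j_{-s2^{r-i}}\leq p^s$ for $s\in D_i$ with $2\leq i\leq r$ are precisely the dual-containing criterion established in Proposition \ref{prop7}; hence $C^{\bot}\subseteq C$. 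This disposes of two of the three hypotheses of Theorem \ref{theo11}, the third being the assumption $k^{'}\geq k+1$ that is imposed in the statement.

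It remains to compute $k=\dim C$ and $k^{'}=\dim C'$, which equal $2^rp^s-\deg g(x)$ and $2^rp^s-\deg g'(x)$ respectively. Here I would invoke the degree count already carried out in the proof of Theorem \ref{theo9}: for each $s\in D_i$ the minimal polynomial $m_{s2^{r-i}}(x)$ has degree $|C_{(2^{r-i}s,n)}|$, which equals $1$ when $2\leq i\leq a$ (and for all relevant $i$ when $r\leq a$), and equals $2^{i-a}$ when $a+1\leq i\leq r$. Summing these degrees with multiplicities $j_{s2^{r-i}}$ gives, in the case $r\leq a$, $\deg g(x)=j_0+j_{2^{r-1}}+\sum_{i=2}^{r}\sum_{s\in D_i}j_{s2^{r-i}}$, and in the case $r>a$ the analogous expression with the tail weighted by $2^{i-a}$; the same computation with primed exponents yields $k^{'}$. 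These are exactly the formulas recorded in parts (1) and (2) of the statement.

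Finally, with $C\subseteq C'$, $C^{\bot}\subseteq C$, and $k^{'}\geq k+1$ all in hand, Theorem \ref{theo11} produces a QEC code of length $2^rp^s$, dimension $k+k^{'}-2^rp^s$, and minimum distance $\textrm{min}\{d,\lceil\frac{q+1}{q}\rceil d^{'}\}$, where $d=d_H(C)$ and $d^{'}=d_H(C')$. I expect no genuine obstacle: the argument is a bookkeeping exercise combining Proposition \ref{prop7} with the degree tally from Theorem \ref{theo9}. The only point that deserves care is keeping the two regimes $r\leq a$ and $r>a$ separate in the dimension formulas, since the degree of $m_{s2^{r-i}}(x)$ jumps from $1$ to $2^{i-a}$ exactly at $i=a+1$; once those degrees are correctly attributed, the parameters fall out directly from Steane's construction.
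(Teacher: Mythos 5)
Your proof is correct and is exactly the argument the paper intends (the paper omits the proof of this theorem as routine): establish $C\subseteq C'$ from $j'_h\leq j_h$, invoke Proposition \ref{prop7} for $C^{\bot}\subseteq C$, reuse the degree count of $m_{s2^{r-i}}(x)$ from the proof of Theorem \ref{theo9} to get $k$ and $k'$ in the two regimes $r\leq a$ and $r>a$, and then apply Steane's construction (Theorem \ref{theo11}). No gaps; the bookkeeping matches the stated parameters.
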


Similarly, we can also obtain the following theorem.

\begin{theorem}\label{theo13}
	Let $q$ be a prime power and has the form $q=2^ab-1$. Suppose that $C$ is a cyclic code of length $2^rp^s$ over $\mathbb{F}_{q}$ and can be represented in the form $C= \textbf{\Big\langle}\textbf{\big(} m_0(x)\textbf{\big)}^{j_0}\textbf{\big(}m_{2^{r-1}}(x)\textbf{\big)}^{j_{2^{r-1}}}\textbf{\big(}m_{2^{r-2}}(x)\textbf{\big)}^{j_{2^{r-2}}}\prod\limits_{i=3}^r\prod\limits_{s\in O_i}\textbf{\big(}m_{s2^{r-i}}(x)\textbf{\big)}^{j_{s2^{r-i}}}\textbf{\Big\rangle}$,  where $0\leq j_h\leq p^s$ for each $h\in T_n$. Put $C^{'}= \textbf{\Big\langle}\textbf{\big(} m_0(x)\textbf{\big)}^{j^{'}_0}\textbf{\big(}m_{2^{r-1}}(x)\textbf{\big)}^{j^{'}_{2^{r-1}}}\textbf{\big(}m_{2^{r-2}}(x)\textbf{\big)}^{j^{'}_{2^{r-2}}}\prod\limits_{i=3}^r\prod\limits_{s\in O_i}\textbf{\big(}m_{s2^{r-i}}(x)\textbf{\big)}^{j^{'}_{s2^{r-i}}}\textbf{\Big\rangle}$, where $0\leq j^{'}_h\leq j_h$ for each $h$. Then, we have the following results.
	
	(1) When $a=2$, then $k=2^rp^s-(j_0+j_{2^{r-1}}+2j_{2^{r-2}}+\sum_{i=3}^{r}\sum_{s\in O_i}(2^{i-a}j_{s2^{r-i}}))\geq1$ and $k^{'}=2^rp^s-(j^{'}_0+j^{'}_{2^{r-1}}+2j^{'}_{2^{r-2}}+\sum_{i=3}^{r}\sum_{s\in O_i}(2^{i-a}j^{'}_{s2^{r-i}}))$, Furthermore, if $0\leq j_0,j_{2^{r-1}},j_{2^{r-2}}<\frac{p^s}{2}$ and $0\leq j_{s2^{r-i}}+j_{-s2^{r-i}}\leq p^s$ for any $s\in O_i$ with $3\leq i\leq r$, then a QEC code with parameters $[[2^rp^s,k+k^{'}-2^rp^s,\textrm{min}\{d,\lceil\frac{q+1}{q}\rceil d^{'}\}]]_q$ is essential for $k^{'}\geq k+1$. 
	
	(2) When $a\geq3$, then $2^rp^s-(j_0-j_{2^{r-1}}-2j_{2^{r-2}}-2\sum_{i=2}^{a}\sum_{s\in D_i}j_{s2^{r-i}}-\sum_{i=a+1}^{r}\sum_{s\in D_i}(2^{i-a}j_{s2^{r-i}}))\geq1$ and $2^rp^s-(j^{'}_0+j^{'}_{2^{r-1}}+2j^{'}_{2^{r-2}}+2\sum_{i=2}^{a}\sum_{s\in D_i}j^{'}_{s2^{r-i}}+\sum_{i=a+1}^{r}\sum_{s\in D_i}(2^{i-a}j^{'}_{s2^{r-i}}))$. Furthermore, if $0\leq j_0,j_{2^{r-1}},j_{2^{r-2}}<\frac{p^s}{2}$, $0\leq j_{s2^{r-i}}\leq \frac{p^s}{2}$ for any $s\in O_i$ with $3\leq i\leq a$ and $0\leq j_{s2^{r-i}}+j_{-s2^{r-i}}\leq p^s$ for any $s\in O_i$ with $a+1\leq i\leq r$, then a QEC code with parameters $[[2^rp^s,k+k^{'}-2^rp^s,\textrm{min}\{d,\lceil\frac{q+1}{q}\rceil d^{'}\}]]_q$ is essential for $k^{'}\geq k+1$. 
\end{theorem}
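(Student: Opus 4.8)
The plan is to mirror the proof of Theorem \ref{theo10}, reducing everything to a dimension count together with verification of the two hypotheses of Steane's construction (Theorem \ref{theo11}): that $C$ is dual-containing and that $C\subseteq C'$. Once those are in place, the parameters fall out by quoting Theorem \ref{theo11} directly, so the work is purely structural.

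First I would pin down the degrees of the relevant minimal polynomials, since the dimension of a cyclic code $\langle g(x)\rangle$ of length $2^rp^s$ is $2^rp^s-\deg g(x)$. Because $x^{2^rp^s}-1=(x^{2^r}-1)^{p^s}$ over $\mathbb{F}_q$, every $m_t(x)$ occurring is a factor of $x^{2^r}-1$ with $\deg m_t(x)=|C_{(t,2^r)}|=\textrm{ord}_{2^r/\gcd(t,2^r)}(q)$, which for $t=s2^{r-i}$ with $s$ odd equals $\textrm{ord}_{2^i}(q)$. I would then record, exactly as in the proof of Theorem \ref{theo10}: $\deg m_0=\deg m_{2^{r-1}}=1$ and $\deg m_{2^{r-2}}=2$; for $3\le i\le a$ (when $a\ge3$) one has $q\equiv-1\pmod{2^i}$, so $\textrm{ord}_{2^i}(q)=2$; and for $a+1\le i\le r$ one has $\textrm{ord}_{2^i}(q)=2^{i-a}$. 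Summing $j_h\deg m_h$ over all coset representatives produces precisely the stated $k$, and repeating the sum with $j_h'$ in place of $j_h$ yields $k'$. This is the only genuinely computational step, and it is identical in the two cases $a=2$ and $a\ge3$ apart from the ranges of $i$ carrying each weight.

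Next I would verify the two structural hypotheses. The containment $C\subseteq C'$ is immediate from $0\le j_h'\le j_h$ for every $h$: the generator polynomial of $C'$ divides that of $C$, hence $\langle g\rangle\subseteq\langle g'\rangle$, i.e. $C\subseteq C'$. The dual-containing property $C^{\bot}\subseteq C$ is exactly the content of Proposition \ref{prop8}: in case $a=2$ the assumed conditions $0\le j_0,j_{2^{r-1}},j_{2^{r-2}}<\frac{p^s}{2}$ together with $j_{s2^{r-i}}+j_{-s2^{r-i}}\le p^s$ are precisely those of Proposition \ref{prop8}(1), and in case $a\ge3$ the additional constraints $0\le j_{s2^{r-i}}\le\frac{p^s}{2}$ for $3\le i\le a$ match Proposition \ref{prop8}(2) verbatim. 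Thus in both cases Proposition \ref{prop8} applies directly.

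Finally, with $C$ dual-containing, $C\subseteq C'$, and the assumed $k'\ge k+1$, I would invoke Steane's construction (Theorem \ref{theo11}) to obtain the $[[2^rp^s,\,k+k'-2^rp^s,\,\textrm{min}\{d,\lceil\frac{q+1}{q}\rceil d'\}]]_q$ code, splitting into the displayed cases $a=2$ and $a\ge3$ only because the formulas for $k$ and $k'$ differ. There is no serious obstacle beyond bookkeeping; the one point that needs care is attaching the coset-size weights $1$, $2$, and $2^{i-a}$ to the correct ranges of $i$ so that $k$ and $k'$ emerge in the claimed form, and observing that nestedness forces the chain $C'^{\bot}\subseteq C^{\bot}\subseteq C\subseteq C'$, which confirms that Steane's two hypotheses are genuinely met.
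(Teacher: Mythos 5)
Your proposal is correct and follows exactly the route the paper intends: the paper omits the proof of Theorem \ref{theo13} (``Similarly, we can also obtain the following theorem''), the intended argument being precisely your combination of the coset-size/degree computation from the proof of Theorem \ref{theo10}, the dual-containing criterion of Proposition \ref{prop8}, the trivial containment $C\subseteq C'$ from $j_h'\leq j_h$, and an application of Steane's construction (Theorem \ref{theo11}). As a bonus, your reconstruction of $k$ in case (2) also implicitly corrects the sign and index typos ($-$ for $+$, $D_i$ for $O_i$, $i=2$ for $i=3$) in the paper's statement.
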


\begin{example}\label{exam5}
Let $a=4,b=1$, then $q=2^ab+1=17$ and $p=17$. Let $r=3,s=1$, then $n=2^rp^s=136$. By factorization, we have 
$x^{136}-1=(x+1)^{17}(x-1)^{17}(x+2)^{17}(x-2)^{17}(x+4)^{17}(x-4)^{17}(x+8)^{17}(x-8)^{17}$. Put $C=\langle g(x)m(x) \rangle$ and $C^{'}=\langle g(x)\rangle$, where $g(x)m(x)$ is a factor of $x^{136}-1$. Specific parameters of some codes are shown in table \ref{tab5}.
 \begin{table}[!htbp]  \label{tab5}
	\caption{\centering {Some QEC codes constructed from Conclusion (1) in Theorem \ref{theo12}} }
	\begin{tabular}{|c|c|c|c|c|}
		\hline
	 $g(x)$ & $m(x)$&  $C$ &  $C^{'}$ &   QEC code\\
		\hline
		$(x+1)$&  	$(x+1)(x+8)$& $[136,133,3]_{17}$  &$[136,135,2]_{17}$  & $[[136,132,3]]_{17}$ \\
		\hline
		$(x+1)^2(x+8)$&  	$(x+1)(x+4)$& $[136,131,4]_{17}$  &$[136,133,3]_{17}$   & $[[136,130,4]]_{17}$ \\
		\hline
	$(x+1)^{3}(x+2)(x+4)$&  	$(x+1)(x+8)^2$& $[136,128,5]_{17}$  &$[136,131,4]_{17}$  & $[[136,123,5]]_{17}$ \\
		\hline
		\makecell[c]{$(x+1)^{4}(x+2)$\\$(x+4)(x+8)^2$}&  	$(x+1)(x-1)$& $[136,126,6]_{17}$  &$[136,128,5]_{17}$  & $[[136,118,6]]_{17}$ \\
		\hline
	\makecell[c]{$(x+1)^{5}(x-1)$\\$(x+2)(x+4)(x+8)^{2}$}&  \makecell[c]{	$(x+1)(x+2)$\\$(x+8)(x-8)$}& $[136,122,7]_{17}$  &$[136,126,6]_{17}$  & $[[136,112,7]]_{17}$ \\
		\hline
		\makecell[c]{$(x+1)^{6}(x-1)(x+2)^2$\\$(x+4)(x+8)^3(x-8)$}&  	$(x+1)(x-2)$& $[136,120,8]_{17}$  &$[136,122,7]_{17}$  & $[[136,106,8]]_{17}$ \\
		\hline
	\end{tabular}
\end{table}

\end{example}
\begin{example}\label{exam6}
 Let $a=2,b=3$, then $q=2^ab+1=13$ and $p=13$. Let $r=3,s=1$, then $n=2^rp^s=104$. By factorization, we have $x^{104}-1=(x+1)^{13}(x-1)^{13}(x+5)^{13}(x-5)^{13}(x^2+5)^{13}(x^2-5)^{13}$. Put $C=\langle g(x)m(x) \rangle$ and $C^{'}=\langle g(x)\rangle$, where $g(x)m(x)$ is a factor of $x^{136}-1$. Specific parameters of some codes are shown in table \ref{tab6}.
\begin{table}[!h]  \label{tab6}
	\caption{\centering {Some QEC codes constructed from Conclusion (2) in Theorem \ref{theo12}} }
	\begin{tabular}{|c|c|c|c|c|}
		\hline
		$g(x)$ & $m(x)$&  $C$ &  $C^{'}$ &   QEC code \\
		\hline
		$(x+1)$&  	$(x+1)(x^2+5)$& $[104,100,3]_{17}$  &$[104,103,2]_{17}$  & $[[104,99,3]]_{17}$ \\
		\hline
	$(x+1)^2(x^2+5)$&  	$(x+1)(x^2-5)$& $[104,97,4]_{17}$  &$[104,100,3]_{17}$  & $[[104,93,4]]_{17}$ \\
		\hline
	$(x+1)^{3}(x^2+5)(x^2-5)$&  $(x+1)(x^2+5)$& $[104,93,5]_{17}$  &$[104,97,4]_{17}$  & $[[104,86,5]]_{17}$ \\
		\hline
	\makecell[c]{$(x+1)^{4}(x+5)$\\$(x^2+5)^2(x^2-5)$}&  $(x+1)$& $[104,92,6]_{17}$  &$[104,93,5]_{17}$  & $[[104,81,6]]_{17}$ \\
	\hline
		\makecell[c]{$(x+1)^{5}(x+5)$\\$(x^2+5)^2(x^2-5)$}& 	\makecell[c]{ $(x+1)(x-1)$\\$(x+5)(x^2+5)$}& $[104,87,7]_{17}$  &$[104,92,6]_{17}$  & $[[104,75,7]]_{17}$ \\
	\hline
		\makecell[c]{$(x+1)^{6}(x-1)$\\$(x+5)^2(x^2+5)^3(x^2-5)$}&  $(x+1)$& $[104,86,8]_{17}$  &$[104,87,7]_{17}$  & $[[104,69,8]]_{17}$ \\
	\hline
	\end{tabular}
\end{table}

\end{example}

\begin{example}\label{exam7}
 Let $a=2,b=3$, then $q=2^ab-1=11$ and $p=11$. Let $r=3,s=1$, then $n=2^rp^s=88$. By factorization, we have 
$x^{88}-1=(x+1)^{11}(x-1)^{11}(x^2+1)^{11}(x^2+3x+10)^{11}(x^2+8x+10)^{11}$. Put $C=\langle g(x)m(x) \rangle$ and $C^{'}=\langle g(x)\rangle$, where $g(x)m(x)$ is a factor of $x^{136}-1$. Specific parameters of some codes are shown in table \ref{tab7}.
\begin{table}[!h]  \label{tab7}
	\caption{\centering {Some QEC codes constructed from Conclusion (2) in Theorem \ref{theo13}} }
	\begin{tabular}{{|p{3.9cm}<{\centering}|p{2.1cm}<{\centering}|p{1.5cm}<{\centering}|p{1.5cm}<{\centering}|p{1.5cm}<{\centering}|}}
		\hline
		$g(x)$ & $m(x)$&  $C$ &  $C^{'}$ &   QEC code \\
		\hline
		$(x+1)$&  	\makecell[c]{$(x+1)$\\$(x^2+3x+10)$}& $[88,84,3]_{11}$  &$[88,87,2]_{11}$  & $[[88,82,3]]_{11}$ \\
		\hline
		$(x+1)^2(x^2+3x+10)$& $(x+1)$&  $[88,83,4]_{11}$  & $[88,84,3]_{11}$& $[[88,79,4]]_{11}$ \\
	\hline
	$(x+1)^3(x^2+3x+10)$& \makecell[c]{$(x+1)(x^2+1)$\\$(x^2+3x+10)$}&  $[88,78,5]_{11}$  & $[88,83,4]_{11}$& $[[88,73,5]]_{11}$ \\
	\hline
	$(x+1)^4(x^2+1)(x^2+3x+10)^2$&  $(x+1)$&  $[88,77,6]_{11}$  & $[88,78,5]_{11}$  & $[[88,67,6]]_{11}$ \\
	\hline
	\makecell[c]{$(x+1)^2(x-1)(x^2+1)$\\$(x^2+3x+10)^6(x^2+8x+10)$}&  $(x+1)$&  $[88,68,7]_{11}$  & $[88,69,6]_{11}$  & $[[88,49,7]]_{11}$ \\
	\hline
	\makecell[c]{$(x+1)^3(x-1)(x^2+1)$\\$(x^2+3x+10)^6(x^2+8x+10)$}&  $(x^2+3x+10)$&  $[88,66,8]_{11}$  & $[88,68,7]_{11}$  & $[[88,46,8]]_{11}$ \\
    \hline
\makecell[c]{$(x+1)^3(x-1)(x^2+1)$\\$(x^2+3x+10)^7(x^2+8x+10)$}&  \makecell[c]{$(x+1)(x^2+1)$\\$(x^2+3x+10)$}&  $[88,61,9]_{11}$  & $[88,66,8]_{11}$  & $[[88,39,9]]_{11}$ \\
\hline
\makecell[c]{$(x+1)^4(x-1)(x^2+1)^2$\\$(x^2+3x+10)^8(x^2+8x+10)$}&  \makecell[c]{$(x-1)^2(x^2+1)^2$\\$(x^2+3x+10)^2$}&  $[88,51,10]_{11}$  & $[88,61,9]_{11}$  & $[[88,24,10]]_{11}$ \\
\hline
\makecell[c]{$(x+1)^3(x-1)^3(x^2+1)^4$\\$(x^2+3x+10)^10(x^2+8x+10)$}&  $(x^2-1)(x^2+1)$&  $[88,48,11]_{11}$  & $[88,52,10]_{11}$  & $[[88,12,11]]_{11}$ \\
	\hline
	\end{tabular}
\end{table}
\end{example}

\begin{example}\label{exam8}
	Let $a=3,b=1$, then $q=2^ab-1=7$ and $p=7$. Let $r=3,s=1$, then $n=2^rp^s=56$. By factorization, we have 
$x^{56}-1=(x+1)^{7}(x-1)^{7}(x^2+1)^{7}(x^2+3x+1)^{7}(x^2+4x+1)^{7}$. Put $C=\langle g(x)m(x) \rangle$ and $C^{'}=\langle g(x)\rangle$, where $g(x)m(x)$ is a factor of $x^{136}-1$. Specific parameters of some codes are shown in table \ref{tab8}

\begin{table}[ht]  \label{tab8}
	\caption{\centering {Some QEC codes constructed from Conclusion (2) in Theorem \ref{theo13}} }
	\begin{tabular}{|c|c|c|c|c|}
		\hline
		$g(x)$ & $m(x)$&  $C$ &  $C^{'}$ &   QEC code\\
		\hline
		$(x+1)$&  	$(x+1)(x^2+3x+1)$& $[56,52,3]_{7}$  &$[56,55,2]_{7}$  & $[[56,51,3]]_{7}$ \\
		\hline
	$(x+1)^2(x^2+3x+1)$&  	$(x+1)$& $[56,51,4]_{7}$  &$[56,52,3]_{7}$  & $[[56,47,4]]_{7}$ \\
		\hline
	\end{tabular}
\end{table}
\end{example}
\begin{remark}\label{rem2}
The Hamming distances of all cyclic codes in Examples \ref{exam5},\ref{exam6},\ref{exam7} and \ref{exam8} can be determined using our latest research results in \cite{Pan2024}. Comparing with the results in Examples \ref{exam1},\ref{exam2},\ref{exam3},\ref{exam4}, it is easy to find that these QEC codes constructed in Examples \ref{exam5},\ref{exam6},\ref{exam7},\ref{exam8} have larger dimensions when the lengths and the minimum distances are the same, which indicates that these QEC codes have better parameters. Moreover, comparing these QEC codes in Examples \ref{exam5},\ref{exam6},\ref{exam7},\ref{exam8} and online table \cite{Grassl2007}, we find that our QEC codes are all new, meaning that they exhibit parameters distinct from those of previously known constructions. Specifically, the QEC code $[[104,86,5]]$ in Example \ref{exam2} is consistent with the known code constructed in \cite{Grassl2007}, and the others in Example \ref{exam5}, \ref{exam6} are better than all the QEC codes with the same length and Hamming distance given in \cite{Grassl2007}, that is, the dimension of our QEC codes is greater than the dimension of all the QEC codes with the same length and Hamming distance in \cite{Grassl2007}. In Example \ref{exam7}, our QEC codes $[[88,82,3]]$, $[[88,79,4]]$, $[[88,73,5]]$ and $[[88,67,6]]$ are consistent with some known codes constructed in \cite{Grassl2007}. In Example \ref{exam8}, our QEC codes $[[56,51,3]]$ and $[[56,47,4]]$ are consistent with some known codes constructed in \cite{Grassl2007}. Note that these known QEC codes in \cite{Grassl2007} and our QEC codes in Examples \ref{exam5},\ref{exam6},\ref{exam7},\ref{exam8} are constructed over different fields. This shows these QEC codes obtained in Examples \ref{exam5},\ref{exam6},\ref{exam7},\ref{exam8} are all new.
\end{remark}

\section{QEC MDS codes of length $2^rp^s$ }\label{sec5}

It is well-known that the parameters of a linear code must satisfy the Singleton bound: $d_H(C)\leq n-k+1$. 
In this section, we first construct all  MDS repeated-root cyclic codes with a length of $2^rp^s$ over $\mathbb{F}_q$. And then, based on CSS construction, we obtain all QEC MDS codes by using such MDS cyclic codes over $\mathbb{F}_q$.

Let $C$ be a repeated-root cyclic code of length $2^rp^s$ over $\mathbb{F}_q$. When $q= 2^ab+1$, from Theorem \ref{theo4}, we have $C= \textbf{\Big\langle}\textbf{\big(} m_0(x)\textbf{\big)}^{j_0}\textbf{\big(}m_{2^{r-1}}(x)\textbf{\big)}^{j_{2^{r-1}}}\prod\limits_{i=2}^r\prod\limits_{s\in D_i}\textbf{\big(}m_{s2^{r-i}}(x)\textbf{\big)}^{j_{s2^{r-i}}}\textbf{\Big\rangle}$,  where $0\leq j_h\leq p^s$ for each $h$. Suppose $a\geq r$, then $q\equiv 1 \pmod{2^r}$. In this case, it follows that the generator polynomial of $C$ is the product of some linear polynomials. Moreover, let $\omega$ be a primitive element of $\mathbb{F}_q$. Then $\theta=\omega^{\frac{q-1}{2^r}}\in\mathbb{F}_{q}^{*}$ is a $2^r$-th root of unity. Thus,  $x^{2^r}-1\in \mathbb{F}_q[x]$ can be decomposed into the product of $2^r$ irreducible polynomials as follows $$x^{2^r}-1=\prod_{i=0}^{2^r-1}(x-\theta^i).$$
For convenience, the generator polynomial of $C$ can be expressed as follows
$$g(x)=\prod_{h=0}^{2^r-1}(x-\theta^{i_h})^{j_h},$$
where $\{i_0,i_1,\cdots,i_{2^r-1}\}$ is a permutation of $\{0,1,2,\cdots,2^r-1\}$ such that $0\leq j_{2^r-1}\leq \cdots\leq j_{1}\leq j_{0}\leq p^s$. Thus, $C=\textbf{\Big\langle}\prod\limits_{h=0}^{2^r-1}(x-\theta^{i_h})^{j_h}\textbf{\Big\rangle}$ and it has $p^{m(2^rp^s-\sum_{h=0}^{2^r-1}j_h)}$ codewords. Note that $(x-\theta^i)^{*}=x-\theta^{2^r-i}$ for any $i\in \{0,1,2,\cdots,2^r-1\}$. The dual code of $C$ is the cyclic code $C^{\perp}=\textbf{\Big\langle}\prod\limits_{t=0}^{2^r-1}(x-\theta^{i_t})^{p^s-j_{2^r-i_t}}\textbf{\Big\rangle}$. 

For any $T=\{0,1,2,\cdots,p^s-1\}$ and $P_t=\prod^{s-1}\limits_{i=0}(t_i+1)$, where $t\in T$ and $t_i$ is the coefficient the $p$-adic representation of $t$. Let $1\leq\tau\leq p-2$ be an integer. From \cite{O2009}, the set $T$ can be represented as a union of $s$ disjoint sets as follows:
 $$T=\{0\}\cup\bigcup_{k=0}^{s-1}\bigcup_{\tau=0}^{p-2}\{i:p^s-p^{s-k}+\tau p^{s-k-1}+1\leq i\leq p^s-p^{s-k}+(\tau+1) p^{s-k-1}\}.$$
For any nonzero integer $l\in T$, there must exist integers $\tau$ and $k$ such that $p^s-p^{s-k}+\tau p^{s-k-1}+1\leq l\leq p^s-p^{s-k}+(\tau+1) p^{s-k-1}$. It follows from \cite{O2009} that $\textrm{min}\{P_t:t\geq l ~ \textrm{and }~t \in T\}=(\tau+2)p^{k}$. For fix $t\in T$, let us denote $g_t(x)=\prod_{h=0}^{2^r-1}(x-\theta^{i_h})^{j_{h,t}}$, where $e_{h,t}=1$ if $j_h>t$, otherwise $j_{h,t}=0$. Then $C_{t}=\langle g_t(x)\rangle$ is a cyclic code of length $2^r$ over $F_q$. According to the relationship between the minimum distance of repeated-root cyclic codes and single-root cyclic codes \cite{O2009}, we can obtain the following crucial result
\begin{eqnarray} \label{equ 2}
	d_H(C)=\textrm{min}\{P_td_H(C_t):t\in T\}.
\end{eqnarray} 
\begin{theorem}\label{theo14}
Let $r$ be an integer and $q= 2^ab+1$ with $a\geq r
$. Suppose that $C=\textbf{\Big\langle}\prod\limits_{h=0}^{2^r-1}(x-\theta^{i_h})^{j_h}\textbf{\Big\rangle}$ is a repeated-root cyclic code of length $2^rp^s$ over $\mathbb{F}_q$, where $\{i_1,i_2,\cdots,i_{2^r-1}\}$ is a permutation of $\{0,1,2,\cdots,2^r-1\}$ such that $0\leq j_{2^r-1}\leq \cdots\leq j_{1}\leq j_{0}\leq p^s$. When $0<j_h<p^s$, there is $t_h=p^s-p^{s-k_h}+(\tau_h+1) p^{s-k_h-1}+1$ such that $p^s-p^{s-k_h}+\tau_h p^{s-k_h-1}+1\leq j_h\leq t_h$. Denote the numbers of $j_h=p^s$ by $N$. Then 
	$$d_H(C)=\left\{ {{\begin{array}{ll}
				{\leq 2^r}, & \textrm{if}~ j_{2^r-1}=0,\\
				{ \textrm{min}\{(\tau_h+2)p^{k_h}d_H(C_{t_h}):h=N,N+1,\dots,2^r-1\}}, & \textrm{if}~ j_{2^r-1}>0.\\
	\end{array} }} \right.$$
\end{theorem}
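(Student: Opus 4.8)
The plan is to start from the distance formula (\ref{equ 2}), namely $d_H(C)=\min\{P_t\,d_H(C_t):t\in T\}$, and to simplify the right-hand side using the interval decomposition of $T$ together with the identity $\min\{P_t:t\ge l,\ t\in T\}=(\tau+2)p^{k}$ quoted from \cite{O2009}. Before doing so I would record the structural facts that make $C_t$ manageable: since $a\ge r$ we have $q\equiv 1\pmod{2^r}$, so every minimal polynomial $m_{s2^{r-i}}(x)$ is linear and $x^{2^r}-1=\prod_{h=0}^{2^r-1}(x-\theta^{i_h})$ splits completely. Consequently each $C_t=\langle\prod_{j_h>t}(x-\theta^{i_h})\rangle$ is a genuine simple-root cyclic code of length $2^r$, and $d_H(C_t)$ is a well-defined, non-increasing function of $t$, because enlarging $t$ deletes roots and hence enlarges the code.

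First I would dispose of the case $j_{2^r-1}=0$. Here the root $\theta^{i_{2^r-1}}$ is absent from $g_0(x)$, so $C_0=\langle g_0\rangle$ is a nonzero cyclic code of length $2^r$; any nonzero codeword has at most $2^r$ nonzero coordinates, whence $d_H(C_0)\le 2^r$. Since $P_0=1$, formula (\ref{equ 2}) immediately gives $d_H(C)\le P_0\,d_H(C_0)\le 2^r$, which is the first line of the claim.

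For the case $j_{2^r-1}>0$ every $j_h$ is positive, so $C_0=\{0\}$ and the minimum in (\ref{equ 2}) is effectively taken over those $t$ for which $C_t\ne\{0\}$. The key idea is that $t\mapsto C_t$ is constant on the blocks lying between consecutive values $j_h$: on each such block the factor $d_H(C_t)$ is fixed, so minimizing $P_t\,d_H(C_t)$ there reduces to minimizing $P_t$ alone, which is exactly what the quoted identity evaluates. Concretely, for each index $h$ with $0<j_h<p^s$ (these are precisely $h=N,\dots,2^r-1$, since the $N$ indices with $j_h=p^s$ are excluded and $j_{2^r-1}>0$) I would read off $k_h,\tau_h$ from the $p$-adic interval containing $j_h$, so that $\min\{P_t:t\ge j_h\}=(\tau_h+2)p^{k_h}$ is attained at the top $t=t_h-1$ of that interval, and identify the associated simple-root code $C_{t_h}$. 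Letting $h$ range over $N,\dots,2^r-1$ and taking the smallest product $(\tau_h+2)p^{k_h}\,d_H(C_{t_h})$ should then reproduce $\min_t P_t\,d_H(C_t)$, yielding the second line of the claim.

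The hard part will be the bookkeeping in this last step. One must verify that the blocks on which $C_t$ is constant are matched correctly to the indices $h$, that repeated exponents $j_h=j_{h+1}$ (empty blocks) and the boundaries between adjacent $p$-adic intervals neither drop nor double-count a block, and that the minimizer of $P_t$ on each relevant block really equals the value $(\tau_h+2)p^{k_h}$ furnished by \cite{O2009} rather than a smaller interior value. In effect one has to show that the global minimizer of $P_t\,d_H(C_t)$ always sits at the top of an interval attached to one of the $j_h$ and that the code read off there is indeed $C_{t_h}$; this is precisely where the monotonicity of $d_H(C_t)$ and the explicit digit description of $P_t=\prod_{i=0}^{s-1}(t_i+1)$ must be combined with care, particular attention being paid to the exponents $j_h$ that fall exactly on an interval boundary. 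Once this correspondence is pinned down, the remainder is routine substitution into (\ref{equ 2}).
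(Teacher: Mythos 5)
Your proposal is correct in substance and follows the same skeleton as the paper: both cases ultimately rest on the distance formula (\ref{equ 2}) and the facts quoted from \cite{O2009}. The differences are worth recording. For $j_{2^r-1}=0$ the paper does not invoke (\ref{equ 2}) at all; it exhibits the explicit codeword $\bigl(\prod_{h=0}^{2^r-2}(x-\theta^{i_h})\bigr)^{p^s}\in C$, which has at most $2^r$ nonzero terms because raising a degree-$(2^r-1)$ polynomial to the power $p^s$ in characteristic $p$ does not increase its number of terms; your route ($P_0=1$, $C_0\neq\{0\}$, hence $d_H(C)\le P_0\,d_H(C_0)\le 2^r$ by (\ref{equ 2})) is an equally valid one-liner. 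For $j_{2^r-1}>0$ the paper's entire proof is the assertion that the claim is obvious from (\ref{equ 2}), so the argument you sketch --- constancy of $C_t$ on the blocks between consecutive exponents, monotonicity of $d_H(C_t)$ in $t$, and replacement of each block-minimum of $P_t$ by the tail-minimum $(\tau_h+2)p^{k_h}$ --- is exactly the content the paper suppresses, and it does close that step: one inequality is trivial because each block sits inside the corresponding tail, and the reverse inequality follows since the minimizer of $P_t$ over a tail lies in some later block, whose associated code has smaller or equal distance. The boundary subtlety you flag is also genuine: with the theorem's choice of $t_h$ as one more than the top of the $p$-adic interval containing $j_h$, the code $C_{t_h}$ can differ from the code attached to the block of $j_h$ precisely when some exponent equals $t_h$, an ambiguity that neither the statement nor the paper's proof resolves. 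So your write-up is incomplete only where the paper itself is silent, and on the parts both address, yours is the more explicit.
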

\begin{proof} 
If $j_{2^s-1}=0$, we know that the  generator polynomial $g(x)=\prod_{t=0}^{2^r-2}(x-\theta^{i_h})^{j_h}$  must be a factor of $\prod_{h=0}^{2^r-2}(x-\theta^{i_h})^{p^s}$, which has at most $2^r$ non-zero terms. Therefore, we have $1\leq d_{H}(C)\leq 2^r$. If $j_{2^r-1}>0$, it follows from (\ref{equ 2}) that this result is obvious.
\end{proof} 

In the following, we can get all the MDS cyclic codes of length $2^rp^s$.

\begin{theorem}\label{theo15}
Let $r$ be an integer, $q= 2^ab+1$ with $a\geq r
$. Suppose that $C=\textbf{\Big\langle}\prod\limits_{h=0}^{2^r-1}(x-\theta^{i_h})^{j_h}\textbf{\Big\rangle}$ is a repeated-root cyclic code of length $2^rp^s$ over $\mathbb{F}_q$, where $\{i_1,i_2,\cdots,i_{2^r-1}\}$ is a permutation of $\{0,1,2,\cdots,2^r-1\}$ such that $0\leq j_{2^r-1}\leq \cdots\leq j_{1}\leq j_{0}\leq p^s$. Then $C$ is an MDS code iff one of the following conditions is satisfied.\\
	(1) $j_0=0$. In this case, $d_{H}(C)=1$.\\
	(2) $j_0=1, j_1=0$. In this case, $d_{H}(C)=2$.\\iff
	(3) $j_0=j_1=\cdots =j_{2^r-2}=p^s, j_{2^r-1}=p^s-1$. In this case, $d_{H}(C)=2^rp^s$. 
\end{theorem}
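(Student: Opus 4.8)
The plan is to reduce the MDS condition to an equality in the Singleton bound and then read off the admissible exponent patterns. Since $a\ge r$ forces $q\equiv 1\pmod{2^r}$, every factor is linear, so $\deg g=\sum_{h=0}^{2^r-1}j_h$ and $\dim C=2^rp^s-\sum_h j_h$. Hence $C$ is MDS iff $d_H(C)=n-\dim C+1=1+\sum_h j_h$, and I will verify or deny this equality case by case. Throughout I use the distance formula (\ref{equ 2}), $d_H(C)=\min_{t\in T}P_td_H(C_t)$, together with two elementary facts: each length-$2^r$ code $C_t$ obeys the Singleton bound $d_H(C_t)\le \delta_t+1$, where $\delta_t=\#\{h:j_h>t\}$ is its number of roots, and $P_t\le t+1$ for every $t\in T$ (an easy induction on $s$: writing $t=t_0+p\lfloor t/p\rfloor$ one has $P_t=(t_0+1)P_{\lfloor t/p\rfloor}\le (t_0+1)(\lfloor t/p\rfloor+1)\le t+1$).

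For the forward direction I simply evaluate $d_H(C)$ for the three patterns. In (1) $g=1$, so $C=\mathbb{F}_q^n=[n,n,1]_q$. In (2) $g=x-\theta^{i_0}$, and (\ref{equ 2}) gives $d_H(C)=\min\{P_0d_H(C_0),\min_{t\ge1}P_t\}=\min\{2,2\}=2$, so $C=[n,n-1,2]_q$. In (3) $g=(x^{n}-1)/(x-\theta^{i_{2^r-1}})$ has degree $n-1$, so $\dim C=1$; here $C_t=\{0\}$ for $t\le p^s-2$, while $C_{p^s-1}$ is the $1$-dimensional length-$2^r$ code whose generator $\prod_{h=0}^{2^r-2}(x-\theta^{i_h})$ has full weight $2^r$, whence $d_H(C)=P_{p^s-1}\cdot 2^r=p^s\cdot 2^r=n$. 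In each case the Singleton equality $d_H(C)=1+\sum_h j_h$ holds, so all three are MDS.

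For the converse, assume $C$ is MDS, i.e. $d_H(C)=1+\sum_h j_h$, and split on the smallest exponent $j_{2^r-1}$. If $j_{2^r-1}=0$, let $w=\#\{h:j_h\ge1\}\le 2^r-1$; then $d_H(C)\le P_0d_H(C_0)=d_H(C_0)\le w+1\le 1+\sum_h j_h$, so MDS forces $w=\sum_h j_h$ (every positive exponent equals $1$). If moreover $w\ge2$, the choice $t=1$ gives $C_1=\mathbb{F}_q^{2^r}$ and $P_1=2$, whence $d_H(C)\le 2<w+1$, a contradiction; thus $w\le1$, i.e. pattern (1) ($w=0$) or (2) ($w=1$). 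If instead $j_{2^r-1}\ge1$ (all exponents positive) and $j_0\le p^s-1$ (no exponent equals $p^s$), then $t=j_0\in T$ satisfies $\delta_{j_0}=0$, so $C_{j_0}=\mathbb{F}_q^{2^r}$ and $d_H(C)\le P_{j_0}\le j_0+1\le\sum_h j_h<1+\sum_h j_h$, using $\sum_h j_h\ge j_0+(2^r-1)$; this contradicts MDS.

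The remaining case is $j_{2^r-1}\ge1$ with $j_0=p^s$, which I will settle by duality: $C$ is MDS iff $C^{\perp}$ is MDS, and by Theorem \ref{theo4} the exponent multiset of $C^{\perp}$ is $\{p^s-j_h\}$. Since some $j_h=p^s$, the code $C^{\perp}$ has a zero exponent and therefore falls under the already-treated case, so being MDS it must be pattern (1) or (2). Pattern (1) for $C^{\perp}$ means all $j_h=p^s$, i.e. $C=\{0\}$, which we exclude; pattern (2) for $C^{\perp}$, namely exponents $(1,0,\dots,0)$, translates back to $C$ having exponents $(p^s,\dots,p^s,p^s-1)$, which is exactly (3). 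The main obstacle is this converse analysis: the delicate points are proving that positive-but-submaximal exponents cannot occur (handled uniformly by the bound $P_t\le t+1$ evaluated at $t=j_0$) and pinning down the unique extremal pattern (3). The duality observation that (2) and (3) are dual codes is what lets me recycle the $j_{2^r-1}=0$ analysis rather than grind through a separate count on the number of maximal exponents.
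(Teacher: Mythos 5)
Your proof is correct, and it takes a genuinely different (and noticeably shorter) route than the paper's. The paper grinds through an exhaustive case analysis: for $j_{2^r-1}=0$ it splits the range of $j_0$ into slices $\lambda p^{s-1}+1\leq j_0\leq(\lambda+1)p^{s-1}$ and separately compares $2^r$ against $p^s,2p^s,\dots$; for $j_{2^r-1}>0$ it invokes Theorem \ref{theo14} (built on the \"Ozadam--\"Ozbudak evaluation $\min\{P_t:t\geq l\}=(\tau+2)p^{k}$) and runs a long chain of inequalities indexed by the number $N$ of exponents equal to $p^s$. You avoid Theorem \ref{theo14} entirely, working only from the basic formula (\ref{equ 2}) plus two elementary lemmas: the Singleton bound $d_H(C_t)\leq \delta_t+1$ for the reduced codes, and the digit-induction bound $P_t\leq t+1$ (which I checked: $(t_0+1)(\lfloor t/p\rfloor+1)\leq t+1$ follows from $t_0+1\leq p$). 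Your evaluation at $t=j_0$ kills the whole intermediate regime $1\leq j_{2^r-1}$, $j_0\leq p^s-1$ in three lines, replacing the paper's most laborious subcases; and the observation that the dual of an MDS code is MDS (a classical fact, with the exponent multiset of $C^{\perp}$ being $\{p^s-j_h\}$ as the paper itself records before Theorem \ref{theo14}) lets you reflect the $j_0=p^s$ case back onto the already-settled $j_{2^r-1}=0$ case, which is exactly the symmetry the paper's Case 2 computation fails to exploit. What the paper's approach buys is self-containedness (no appeal to MDS duality) and explicit distance information in every regime; what yours buys is brevity, transparency, and independence from Theorem \ref{theo14}. Two small conventions you rely on are worth stating explicitly if this were written up: the minimum in (\ref{equ 2}) is taken over those $t$ with $C_t\neq\{0\}$ (needed in your computation of pattern (3)), and the zero code ($j_h=p^s$ for all $h$) is excluded from being MDS, matching the paper's treatment of its case $N=2^r$.
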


\begin{proof} Obviously, the cyclic code $C$ has dimension $k=2^rp^s-\sum_{h=0}^{2^r-1}j_h$. Furthermore, it is known that $C$ is an MDS code iff $d_H(C)=2^rp^s-k+1=\sum_{h=0}^{2^r-1}j_h+1$. Now, we can offer the following two cases are discussed.
		
\textbf{Case 1}: $j_{2^r-1}=0$. In this case, from Theorem $\ref{theo14}$, we have $d_H(C)\leq 2^r$. Next, we will analyze the value of $j_0$ on a case-by-case basis.

If $j_0=0$, we then have $C=\langle1\rangle$ and $d_H(C)=1$. Since $0\leq j_{2^r-1}\leq \cdots\leq j_{1}\leq j_{0}\leq p^s$, we know that $j_{1}=j_2=\cdots=j_{2^r-1}=0$. Hence, $d_H(C)=\sum_{h=0}^{2^r-1}j_t+1$ always holds.	
	
If $0<j_0\leq p^{s-1}$, then $j_h \leq p^{s-1}$ for any $h\in\{1,2,\cdots,2^r-1\}$. Obviously, we have $\prod_{h=0}^{2^r-1}(x-\theta^{i_h})^{j_h} |(x^{2^r}-1)^{p^{s-1}}$. This implies that $d_H(C)\leq2$. Hence, $d_H(C)=\sum_{h=0}^{2^r-1}j_h+1$ holds iff $j_0=1$ and the remaining $j_h=0$. 
		
If $p^{s-1}+1\leq j_0\leq 2p^{s-1}$, then $j_h\leq 2p^{s-1}$ for any $h\in\{1,2,\cdots,2^r-1\}$. Denote $t=2p^{s-1}$, it is evident that $P_t=3$ and $d_H(C_t)=1$. From (\ref{equ 2}), $d_H(C)\leq p_td_H(C_t)=3$. Since $j_0\geq 2$, then $\sum_{h=0}^{2^r-1}j_h+1\geq 3$. Hence, $d_H(C)=\sum_{h=0}^{2^r-1}j_h+1$ holds iff $d_H(C)=\sum_{h=0}^{2^r-1}j_h+1=3$ iff $j_0=2$ and the remaining $j_h=0$. This contradicts the fact that $d_H(C)=2$ when $j_0=2$ and the remaining $j_h=0$.
	
More generally, if $\lambda p^{s-1}+1\leq j_0\leq(\lambda+1) p^{s-1}$ with $2\leq\lambda\leq p-2$, then $j_h\leq (\lambda+1)p^{s-1}$ for any $h\in\{1,2,\cdots,2^r-1\}$. Denote $t=(\lambda+1)p^{s-1}$, it is evident that $P_t=\lambda+2$ and $d_H(C_t)=1$. From (\ref{equ 2}), $d_H(C)\leq p_td_H(C_t)=\lambda+2$. Since $j_0\geq \lambda+1$, then $\sum_{h=0}^{2^r-1}j_h+1\geq \lambda+2$. Hence, $d_H(C)=\sum_{h=0}^{2^r-1}j_h+1$ holds iff $d_H(C)=\sum_{h=0}^{2^r-1}j_h+1=\lambda+2$ iff $j_0=\lambda+1$ and the remaining $j_h=0$. This contradicts the fact that $d_H(C)=2$ when $j_0=\lambda+1$ and the remaining $j_h=0$.	
			
If $(p-1)p^{s-1}+1\leq j_0\leq p^{s}$, recall that $\sum_{h=0}^{2^r-1}j_h+1\geq j_0+1$ and $d_H(C)\leq 2^r$. Then $d_H(C)=\sum_{h=0}^{2^r-1}j_h+1$ is impossible when $2^r \leq j_0$. Therefore, we only need to discuss the case of $j_0<2^r$.
	
When $j_0<2^r\leq p^s$, we have $j_1< p^{s-1}-1$, otherwise, $\sum_{h=0}^{2^r-1}j_h+1>j_0+j_1\geq p^s$ which means that $d_H(C)=\sum_{h=0}^{2^r-1}j_h+1$ is impossible. Denote $t=p^{s-1}$, it is evident that $P_t=2$ and $d_H(C_t)=2$. From (\ref{equ 2}), $d_H(C)\leq p_td_H(C_t)=4$. 
Note that $\sum_{h=0}^{2^r-1}j_h+1\geq j_0+1\geq (p-1)p^{s-1}+2$ and $(p-1)p^{s-1}+2$ takes the minimum value $4$ only if $p=3$ and $s=1$. Hence,  $d_H(C)=\sum_{h=0}^{2^r-1}j_h+1$ holds iff $d_H(C)=\sum_{h=0}^{2^r-1}j_h+1=4$ iff $j_0=(p-1)p^{s-1}+1$ with $p=3,s=1$ and  the remaining $j_h=0$. This contradicts the fact that $d_H(C)=2$ when $j_0=(p-1)p^{s-1}+1$ and the remaining $j_h=0$.  
	
When $p^s<2^r\leq2p^s$, more precisely, we have $p^s<2^r<2p^s$ as $p$ is odd.  Now, we first consider the case of $s=1$. At this time, we have $j_0=p$ and $p<2^r<2p$. If $j_1=p$, then $\sum_{h=0}^{2^r-1}j_h+1\geq 2p+1$. As $d_H(C)\leq 2^r<2p$, $d_H(C)=\sum_{h=0}^{2^r-1}j_h+1$ is impossible. If $j_1=p-1$, then $\sum_{h=0}^{2^r-1}j_h+1\geq 2p$. Since $d_H(C)\leq 2^r<2p$, $d_H(C)=\sum_{h=0}^{2^r-1}j_h+1$ is impossible. Next, if $0\leq j_1\leq p-2$, denote $t=j_1$, then  it is evident that $P_t=j_1+1$ and $d_H(C_t)=2$. From (\ref{equ 2}), we have $d_H(C)\leq p_td_H(C_t)=2(j_1+1)$. Note that $ p+j_1+1>2(j_1+1)$ always holds when $0\leq j_1\leq p-2$. Hence, $\sum_{h=0}^{2^r-1}j_h+1\geq p+j_1+1>d_H(C)$, which means $d_H(C)=\sum_{h=0}^{2^r-1}j_h+1$ is impossible. Second, we consider the case of $s\geq 2$. Furthermore, we have $j_2< (p-1)p^{s-1}-1$, otherwise, $\sum_{h=0}^{2^r-1}j_h+1\geq j_0+j_1+j_2+1\geq 3(p-1)p^{s-1}=2p^s+(p-3)p^{s-1}\geq 2p^s$ which means that $d_H(C)=\sum_{h=0}^{2^r-1}j_h+1$ is impossible. Denote $t=(p-1)p^{s-1}$, it is evident that $P_t=p$ and $d_H(C_t)=2$ or $3$. If $d_H(C_t)=2$, it follows from (\ref{equ 2}) that $d_H(C)\leq p_td_H(C_t)=2p$. 
Note that $\sum_{h=0}^{2^r-1}j_h+1\geq j_0+1\geq (p-1)p^{s-1}+2>2p$. At this time, $d_H(C)=\sum_{h=0}^{2^r-1}j_h+1$ is impossible. Moreover, if $d_H(C_t)=3$, it follows from (\ref{equ 2}) that $d_H(C)\leq p_td_H(C_t)=3p$. In fact, we also know that $j_1\geq(p-1)p^{s-1}+1$ if $d_H(C_t)=3$ with $t=(p-1)p^{s-1}$. Then, $\sum_{h=0}^{2^r-1}j_h+1\geq j_0+j_1+1\geq 2(p-1)p^{s-1}+3>3p$. Hence, $d_H(C)=\sum_{h=0}^{2^r-1}j_h+1$ is impossible.
	
Similarly, we can prove that $d_H(C)=\sum_{h=0}^{2^r-1}j_h+1$ is impossible when $2p^s<2^r\leq3p^s$ or $3p^s<2^r\leq4p^s\cdots $. These process is similar to the discussion of the case of $p^s<2^r\leq2p^s$, so we omit it here.
	
\textbf{Case 2}: $j_{2^r-1}>0$. From Theorem $\ref{theo14}$, we have $d_H(C)=\textrm{min}\{(\tau_h+2)p^{k_h}d_H(C_{t_h}):h=N,N+1,\dots,2^r-1\}$, where $N$ is the numbers of $j_h=p^s$. Since $0<j_N<p^s-1$, there is $t_N=p^s-p^{s-k_N}+(\tau_N+1) p^{s-k_N-1}+1$ such that $p^s-p^{s-k_N}+\tau_N p^{s-k_N-1}+1\leq j_N\leq t_N$. Put $t=t_N$, we have $2\leq d_H(C_t)\leq N+1$. When $0\leq N\leq 2^r-2$, by calculations, we have 
\begin{eqnarray*}
		\sum_{h=0}^{2^r-1}j_h
		&&\geq \underbrace{p^s+\cdots +p^s}_N+\sum_{h=N}^{2^r-1}(p^s-p^{s-k_h}+\tau_h p^{s-k_h-1}+1)\\
		&&\geq Np^s+(p^s-p^{s-k_N}+\tau_N p^{s-k_N-1}+1)+(2^r-N-1)\\
		&&\geq(N+1)(p^s-p^{s-k_N}+\tau_N p^{s-k_N-1}+1)+2^r-1\\
		&&=(N+1)(p^{s-k_N}(p^{k_N}-1)+\tau_N p^{s-k_N-1}+1)+2^r-1 \\
		&&\geq(N+1)(p(p^{k_N}-1)+\tau_N +1)+2^r-1 \\
		&&\geq(N+1)((\tau_N+2)(p^{k_N}-1)+\tau_N +1)+2^r-1 \\
		&&= (N+1)((\tau_N+2)p^{k_N}-1)+2^r-1\\
		&&=(N+1)(\tau_N+2)p^{k_N}+2^r-N-2\\
		&&>(N+1)(\tau_N+2)p^{k_N}-1\\
		&&\geq(\tau_N+2)p^{k_N}d_H(C_{t})-1\\
		&&\geq\textrm{min}\{(\tau_h+2)p^{k_h}d_H(C_{t_h}):h=N,\dots,2^r-1\}-1\\
		&&=d_H(C)-1.
\end{eqnarray*}
Hence, $d_H(C)=\sum_{h=0}^{2^r-1}j_h+1$ is impossible. That means that $C$ is not MDS. 

When $N=2^r-1$, by calculations, we have
	\begin{eqnarray*}
	\sum_{h=0}^{2^r-1}j_h
	&&\geq \underbrace{p^s+\cdots +p^s}_{2^r-1}+p^s-p^{s-k_{2^r-1}}+\tau_{2^r-1} p^{s-k_{2^r-1}-1}+1\\
	&&=(2^r-1)p^s+(p^s-p^{s-k_{2^r-1}}+\tau_{2^r-1} p^{s-k_{2^r-1}-1}+1)\\
	&&\geq2^r(p^s-p^{s-k_{2^r-1}}+\tau_N p^{s-k_{2^r-1}-1}+1)+2^r-1\\
	&&\geq 2^r((\tau_{2^r-1}+2)p^{k_{2^r-1}}-1)+2^r-1\\
	&&=2^r(\tau_{2^r-1}+2)p^{k_{2^r-1}}-1\\
	&&\geq(\tau_{2^r-1}+2)p^{k_{2^r-1}}d_H(C_{t})-1\\
	&&\geq\textrm{min}\{(\tau_h+2)p^{k_h}d_H(C_{t_h}):h=2^r-1\}-1\\
	&&=d_H(C)-1.
\end{eqnarray*}	
Hence, $d_H(C)=\sum_{h=0}^{2^r-1}j_h+1$ holds iff $d_H(C)=\sum_{h=0}^{2^r-1}j_h+1=2^r(\tau_{2^r-1}+2)p^{k_{2^r-1}}$ iff $\tau_{2^r-1}=p-2$ and $k_{2^r-1}=s-1$, i.e.,  $j_0=j_1=\cdots =j_{2^r-2}=p^s, j_{2^r-1}=p^s-1$. 

When $N=2^r$, i.e., $j_0=j_1=\cdots =j_{2^r-1}=p^s$, it is obvious that the cyclic code $C$ is not MDS.
\end{proof}
Let $C=\langle g(x)\rangle$ be a repeated-root cyclic code of length $2^rp^s$ over $\mathbb{F}_q$. When $q= 2^ab+1$ and $a<r$, according to  Theorem \ref{theo4}, we have 
$$C= \textbf{\Big\langle}\textbf{\big(} m_0(x)\textbf{\big)}^{j_0}\textbf{\big(}m_{2^{r-1}}(x)\textbf{\big)}^{j_{2^{r-1}}}\prod_{i=2}^r\prod_{s\in D_i}\textbf{\big(}m_{s2^{r-i}}(x)\textbf{\big)}^{j_{s2^{r-i}}}\textbf{\Big\rangle},$$ where $0\leq j_h\leq p^s$ for each $h$. According to  Theorem \ref{theo9}, we know that the degree of $m_{s2^{r-i}}(x)$ is $1$ for any $s\in D_i$ with $2\leq i\leq a$ and the degree of $m_{s2^{r-i}}(x)$ is $2^{i-a}$ for any $s\in D_i$ with $a+1\leq i\leq r$. It is known that $\textrm{deg}(m_0(x))=\textrm{deg}(m_{2^{r-1}}(x))=1$. So, the cyclic code $C$ has dimension 
$$k=2^rp^s-(j_0+j_{2^{r-1}}+\sum_{i=2}^{a}\sum_{s\in D_i}j_{s2^{r-i}}+\sum_{i=a+1}^{r}\sum_{s\in D_i}(2^{i-a}j_{s2^{r-i}})).$$ 
For the purpose of facilitating discussion, we will denote the cyclic code $C$ in the following form: 
{\small$$C= \textbf{\Big\langle}\textbf{\big(} m_0(x)\textbf{\big)}^{j_0}\textbf{\big(}m_{2^{r-1}}(x)\textbf{\big)}^{j_{2^{r-1}}}\prod_{i=2}^a\prod_{s\in D_i}\textbf{\big(}m_{s2^{r-i}}(x)\textbf{\big)}^{j_{s2^{r-i}}}\prod_{i=a+1}^r\prod_{s\in D_i}\textbf{\big(}m_{s2^{r-i}}(x)\textbf{\big)}^{j_{s2^{r-i}}}\textbf{\Big\rangle}.$$}
%Note that $\textrm{Ord}_{2^r}(q)=2^{r-a}$. Then $2^r|(q^{2^{r-a}}-1)$. A cyclic code with generator polynomial $\textbf{\Big\langle}\textbf{\big(} m_0(x)\textbf{\big)}^{j_0}\textbf{\big(}m_{2^{r-1}}(x)\textbf{\big)}^{j_{2^{r-1}}}\prod_{i=2}^r\prod_{s\in D_i}\textbf{\big(}m_{s2^{r-i}}(x)\textbf{\big)}^{j_{s2^{r-i}}}\textbf{\Big\rangle}$ of length $2^rp^s$ over $F_{q^{2^{r-a}}}$ is denoted as $C^{'}$. Upon studying the relationship between $C$ and $C^{'}$, we arrive at the following conclusion.
%We can first consider whether $C$ is an MDS code over the extended field $F_{q^{2^{r-a}}}$. According to Theorem \ref{theo14}, $C=\langle g(x)\rangle$ is an MDS code over $F_{q^{2^{r-a}}}$ if and only if $g(x)=1$, or $g(x)$ is some linear polynomial over $F_{q^{2^{r-a}}}$ or $g(x)=(x^{2^rp^s}-1)/h(x)$ and $h(x)$ is some linear polynomial over $F_{q^{2^{r-a}}}$. 
%As we know, when one extend the field from $F_q$ to an extended field $F_{q^t}$ with any positive integer $t$, the properties of linearity and Hamming distance are preserved for the cyclic code $C$ if $g(x)=1$, or $g(x)$ is some linear polynomial over $F_q$ or $g(x)=(x^{2^rp^s}-1)/h(x)$ and $h(x)$ is some linear polynomial over $F_q$. Therefore, if $ C$ is an MDS code over $F_q$, it will also be an MDS code over any extended field $F_{q^t}$, since the fundamental properties of the code remain the same under field extensions. 
Furthermore, we can get the following theorem.
\begin{theorem}\label{theo16}
	Let $r$ be an integer, $q= 2^ab+1$ with $a< r$. Suppose that $C= \textbf{\Big\langle}\textbf{\big(} m_0(x)\textbf{\big)}^{j_0}\textbf{\big(}m_{2^{r-1}}(x)\textbf{\big)}^{j_{2^{r-1}}}\prod\limits_{i=2}^a\prod\limits_{s\in D_i}\textbf{\big(}m_{s2^{r-i}}(x)\textbf{\big)}^{j_{s2^{r-i}}}\prod\limits_{i=a+1}^r\prod\limits_{s\in D_i}\textbf{\big(}m_{s2^{r-i}}(x)\textbf{\big)}^{j_{s2^{r-i}}}\textbf{\Big\rangle}$
	is a repeated-root cyclic code of length $2^rp^s$ over $\mathbb{F}_q$, where $0\leq j_{h}\leq p^s$ for each $h$. Then $C$ is an MDS code iff one of the following conditions is satisfied.\\
	(1) $j_h=0$ for each $h$. In this case, $d_{H}(C)=1$.\\
	(2) There is only one $j_h\in \{j_0,j_{2^{r-1}}\}\cup\{j_{s2^{r-i}}:s\in D_i, 2\leq i\leq a\}$ such that $j_h=1$ and the remaining $j_h=0$. In this case, $d_{H}(C)=2$.\\
	(3) There is only one $j_h\in \{j_0,j_{2^{r-1}}\}\cup\{j_{s2^{r-i}}:s\in D_i, 2\leq i\leq a\}$ such that $j_h=p^s-1$ and the remaining $j_h=p^s$. In this case, $d_{H}(C)=2^rp^s$. 
\end{theorem}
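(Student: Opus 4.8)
The plan is to translate the MDS condition into a statement about the exponent vector $(j_h)$ and then resolve it with the distance formula (\ref{equ 2}), exactly as in the proof of Theorem \ref{theo15}. Writing $\Delta=\sum_h \deg(m_h)\,j_h$ for the sum taken over all irreducible factors $m_h$ of $x^{2^r}-1$, the degree data ($\deg m_0=\deg m_{2^{r-1}}=1$, $\deg m_{s2^{r-i}}=1$ for $2\le i\le a$, and $\deg m_{s2^{r-i}}=2^{i-a}$ for $a+1\le i\le r$) gives $k=2^rp^s-\Delta$. The Singleton bound forces $d_H(C)\le 2^rp^s-k+1=\Delta+1$, so $C$ is MDS if and only if $d_H(C)=\Delta+1$; equivalently, by (\ref{equ 2}), if and only if $P_t\,d_H(C_t)\ge \Delta+1$ for \emph{every} $t\in T$, where $C_t=\langle g_t(x)\rangle$ is the length-$2^r$ simple-root cyclic code collecting every irreducible factor whose exponent exceeds $t$. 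The theorem thus reduces to deciding for which exponent vectors this lower bound survives for all $t$.

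For sufficiency I would verify the three configurations directly. In case (1), $\Delta=0$ and $C=\langle 1\rangle=\mathbb{F}_q^{2^rp^s}$, so $d_H(C)=1=\Delta+1$. In case (2), $C$ is generated by a single linear factor $(x-\alpha)$ with $\alpha$ a nonzero root of unity; then $\Delta=1$, no weight-one word is a multiple of $x-\alpha$, while $x-\alpha$ itself is a weight-two word, so $d_H(C)=2$. In case (3), $k=1$ and the generator is $g(x)=(x^{2^rp^s}-1)/(x-\alpha)=\sum_{j=0}^{2^rp^s-1}\alpha^{j}x^{2^rp^s-1-j}$, whose coefficients are all nonzero, so $d_H(C)=2^rp^s=\Delta+1$. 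Hence each listed code is MDS.

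The substance is necessity, which I would organize by the minimum exponent $j_{\min}=\min_h j_h$, mirroring the two cases of Theorem \ref{theo15}. If $j_{\min}=0$, then the factor absent from $g$ yields, by the same low-weight-codeword construction underlying Theorem \ref{theo14}, a codeword of weight at most $2^r$, so $\Delta\le 2^r-1$; refining this by taking $t\ge j_{\max}$ that minimizes $P_t$ (the minimization formula of \cite{O2009} gives $\min\{P_t:t\ge l\}=(\tau+2)p^{k}$), one gets $d_H(C)\le (\tau+2)p^{k}\cdot 1$ because such a $C_t$ is the full space. Comparing this upper bound against $\Delta+1$ through the ranges of $j_{\max}$ forces either $\Delta=0$ (case (1)) or $\Delta=1$ with the unique nonzero exponent carried by a linear factor (case (2)); in particular, any positive exponent on a factor of degree $\ge 2$ already pushes $\Delta\ge 2$ while the distance estimate stays too small. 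If $j_{\min}>0$, every factor is present, and I would run the chain of inequalities of Theorem \ref{theo15}, Case 2: letting $N$ count the exponents equal to $p^s$ and taking $t$ just below the smallest unsaturated exponent, the estimate $\Delta>P_t\,d_H(C_t)-1\ge d_H(C)-1$ rules out equality unless $N$ exhausts all but one factor and that remaining factor is linear with exponent $p^s-1$, i.e.\ case (3).

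The main obstacle is precisely what distinguishes this theorem from Theorem \ref{theo15}: controlling $d_H(C_t)$ for the auxiliary length-$2^r$ codes whose factorization (Proposition \ref{prop2}) now contains irreducible factors of degree $2^{i-a}\ge 2$, and showing that weighting an exponent $j_{s2^{r-i}}$ by its degree in $\Delta$ always outpaces any distance such a factor can contribute through $C_t$. The delicate point is that $d_H(C_t)$ grows at most with the \emph{number} of included factors, whereas $\Delta$ grows with their \emph{degrees}, so no MDS code can carry a non-extremal exponent on a higher-degree factor; establishing the sharp upper bounds on $d_H(C_t)$ needed for these comparisons, together with the arithmetic bookkeeping of the inequality chain for large $N$, is where the real work lies, the remaining case distinctions being routine adaptations of Theorem \ref{theo15}.
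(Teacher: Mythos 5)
Your proposal is correct and takes essentially the same route as the paper's own proof: reduce the MDS property to $d_H(C)$ equaling the degree-weighted exponent sum plus one, split on whether the minimum exponent is zero, and settle each case through the range analysis on the maximal exponent and the inequality chains of Theorem \ref{theo15} applied via (\ref{equ 2}). If anything, your explicit sufficiency verifications and your attention to the degree-$2^{i-a}$ factors are spelled out more carefully than in the paper's write-up, which defers exactly these steps as ``identical'' or ``similar'' to Theorem \ref{theo15}.
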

\begin{proof} It is known that $C$ is an MDS code iff $d_H(C)=2^rp^s-k+1=j_0+j_{2^{r-1}}+\sum_{i=2}^{a}\sum_{s\in D_i}j_{s2^{r-i}}+\sum_{i=a+1}^{r}\sum_{s\in D_i}(2^{i-a}j_{s2^{r-i}})+1$. Denote $J=\textrm{max}\{\{j_0,j_{2^{r-1}}\}\cup\{j_{s2^{r-i}}:s\in D_i, 2\leq i\leq a\}\}$, $J_1=\textrm{max}\{\{j_0,j_{2^{r-1}}\}\cup\{j_{s2^{r-i}}:s\in D_i, 2\leq i\leq a\}/\{J\}\}$, $J_2=\textrm{max}\{\{j_0,j_{2^{r-1}}\}\cup\{j_{s2^{r-i}}:s\in D_i, 2\leq i\leq a\}/\{J,J_1\}\}$ and $j=\textrm{min}\{\{j_0,j_{2^{r-1}}\}\cup\{j_{s2^{r-i}}:s\in D_i, 2\leq i\leq a\}\}$. Furthermore, we can consider the following two cases: $j=0$ and $j>0$. 

\textbf{Case 1}: $j=0$. In a similar way to Theorem $\ref{theo14}$, we can get $d_H(C)\leq 2^r$. In the following, we will analyze the value of $J$ on a case-by-case basis.
	
If $J=0$, we then have $C=\langle1\rangle$ and $d_H(C)=1$. Hence, $d_H(C)=j_0+j_{2^{r-1}}+\sum_{i=2}^{a}\sum_{s\in D_i}j_{s2^{r-i}}+\sum_{i=a+1}^{r}\sum_{s\in D_i}(2^{i-a}j_{s2^{r-i}})+1$ always holds.	
	
If $0<J\leq p^{s-1}$, then $j_h \leq p^{s-1}$ for any $h$. Obviously, we have $g(x) |(x^{2^r}-1)^{p^{s-1}}$. This implies that $d_H(C)\leq2$. Hence, $d_H(C)=j_0+j_{2^{r-1}}+\sum_{i=2}^{a}\sum_{s\in D_i}j_{s2^{r-i}}+\sum_{i=a+1}^{r}\sum_{s\in D_i}(2^{i-a}j_{s2^{r-i}})+1$ holds iff $d_H(C)=2$, $J=1$ and the remaining $j_h=0$ iff there exist only one $j_h\in \{j_0,j_{2^{r-1}}\}\cup\{j_{s2^{r-i}}:s\in D_i, 2\leq i\leq a\}$ such that $j_h=1$ and the remaining $j_h=0$.

Moreover, for the subcase of $\lambda p^{s-1}+1\leq J \leq(\lambda+1) p^{s-1}$ with $1\leq\lambda\leq p-2$ and the subcase of $(p-1)p^{s-1}+1\leq J\leq p^{s}$, the proof is identical to that of theorem $\ref{theo15}$, and thus will not be reiterated here.

\textbf{Case 2}: $j>0$. The proof is also similar to that of theorem $\ref{theo15}$. Thus, we omit it here.
\end{proof}	

Combining Theorem \ref{theo5} and Proposition \ref{prop7}, we can construct the following QEC MDS codes by using the above MDS cyclic codes of length $2^rp^s$ over $\mathbb{F}_q$.

\begin{theorem}\label{theo17}
	Let $r$ be an integer, $q=2^ab+1$ with $a\geq r
	$. Suppose that $C= \textbf{\Big\langle}\textbf{\big(} m_0(x)\textbf{\big)}^{j_0}\textbf{\big(}m_{2^{r-1}}(x)\textbf{\big)}^{j_{2^{r-1}}}\prod\limits_{i=2}^r\prod\limits_{s\in D_i}\textbf{\big(}m_{s2^{r-i}}(x)\textbf{\big)}^{j_{s2^{r-i}}}\textbf{\Big\rangle}$ is a repeated-root cyclic code of length $2^rp^s$ over $\mathbb{F}_q$, where $0\leq j_{h}\leq p^s$ for each $h$. Denote $D=\{j_0,j_{2^{r-1}}\}\cup\{j_{s2^{r-i}}:s\in D_i, 2\leq i\leq r\}$. Then we have following results.
	
	(1) if $j_h=0$ for each $h$, then a QEC MDS code with parameters $[[2^rp^s,2^rp^s,1]]_q$ is obtained.
	
	(2) if $\sum\limits_{h\in D}j_{h}=1$, then a QEC MDS code with parameters $[[2^rp^s,2^rp^s-2,2]]_q$ is obtained.
\end{theorem}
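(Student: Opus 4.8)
The plan is to recognize both parts as direct applications of the CSS construction (Theorem \ref{theo5}) to the MDS cyclic codes already classified in Theorem \ref{theo15}, so the work reduces to three bookkeeping steps: confirming each code is MDS with the claimed classical parameters, verifying the dual-containing hypothesis $C^{\bot}\subseteq C$, and checking that the resulting quantum code saturates the Singleton bound of Theorem \ref{theo1}. Since $a\geq r$ forces $q\equiv 1\pmod{2^r}$, every minimal polynomial in the factorization is linear (as established in the proof of Theorem \ref{theo9}(1)), so $\deg g(x)=\sum_{h\in D}j_h$ and $C$ has dimension $k=2^rp^s-\sum_{h\in D}j_h$. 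I would record this dimension formula at the outset, since both parts depend on it.

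For part (1), the hypothesis $j_h=0$ for all $h$ gives $g(x)=1$, hence $C=\mathbb{F}_q^{2^rp^s}$ is the whole space, an $[2^rp^s,2^rp^s,1]_q$ code that is MDS (this is exactly case (1) of Theorem \ref{theo15}). Its dual is the zero code, so $C^{\bot}\subseteq C$ holds trivially and the dual-containing condition of Proposition \ref{prop7} is vacuously satisfied. Applying Theorem \ref{theo5} with $C'=C^{\bot}$ yields an $[[2^rp^s,2k-2^rp^s,1]]_q=[[2^rp^s,2^rp^s,1]]_q$ code, and substituting into $k\leq n-2d+2$ gives equality ($2^rp^s=2^rp^s-2+2$), so it is QEC MDS.

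For part (2), the hypothesis $\sum_{h\in D}j_h=1$ means exactly one exponent equals $1$ and all others vanish; after sorting the exponents this is precisely case (2) of Theorem \ref{theo15}, so $C$ is MDS with $d_H(C)=2$ and dimension $k=2^rp^s-1$. The step requiring care is verifying $C^{\bot}\subseteq C$ through Proposition \ref{prop7}, which I would do by splitting on which coordinate carries the single $1$. If it is $j_0$ or $j_{2^{r-1}}$, the requirement $0\leq j_0,j_{2^{r-1}}<\tfrac{p^s}{2}$ holds because $p$ odd forces $p^s\geq 3>2$; if it is some $j_{s2^{r-i}}$, then its paired partner $j_{-s2^{r-i}}$ is zero (both lie in $D$ and only one exponent is nonzero), so $j_{s2^{r-i}}+j_{-s2^{r-i}}=1\leq p^s$. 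Thus Proposition \ref{prop7} certifies $C^{\bot}\subseteq C$ in every sub-case, and Theorem \ref{theo5} produces $[[2^rp^s,2(2^rp^s-1)-2^rp^s,2]]_q=[[2^rp^s,2^rp^s-2,2]]_q$; since $k=2^rp^s-2=n-2\cdot 2+2$, the Singleton bound is met with equality, so the code is QEC MDS.

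The only genuinely delicate point is the case analysis in part (2) ensuring the Proposition \ref{prop7} inequalities hold for every placement of the single nonzero exponent, and in particular treating the paired exponents $j_{s2^{r-i}},j_{-s2^{r-i}}$ together. Because all remaining exponents are zero this is routine, and the rest is a matter of the dimension formula and the Singleton comparison.
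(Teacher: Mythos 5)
Your proposal is correct and follows essentially the same route as the paper's proof: invoke Theorem \ref{theo15} for the classical MDS parameters, Proposition \ref{prop7} for $C^{\bot}\subseteq C$, and the CSS construction (Theorem \ref{theo5}) to obtain the quantum code. The only difference is that you spell out the details the paper leaves implicit (the case analysis verifying Proposition \ref{prop7}'s inequalities for each placement of the single nonzero exponent, and the explicit check of equality in the quantum Singleton bound), which is a faithful filling-in rather than a different argument.
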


\begin{proof} 
(1) If $j_h=0$ for each $h$, from Theorem \ref{theo15}, the cyclic code $C$ is a $[2^rp^s,2^rp^s,1]_q$ MDS code. From Proposition \ref{prop7}, we know that $C^{\bot}\subseteq C$ if $j_h=0$ for each $h$. Based on CSS construction, we can obtain a QEC MDS code with parameters $[[2^rp^s,2^rp^s,1]]_q$.

(2) if $\sum_{h\in D}j_{h}=1$, from Theorem \ref{theo15}, the cyclic code $C$ is a $[2^rp^s,2^rp^s-1,2]_q$ MDS code. From Proposition \ref{prop7}, we know that $C^{\bot}\subseteq C$. Based on CSS construction, we can obtain a QEC MDS code with parameters $[[2^rp^s,2^rp^s-2,2]]_q$.
\end{proof} 

\begin{theorem}\label{theo18}
	Let $r$ be an integer, $q=2^ab+1$ with $a<r
	$. Suppose that $C= \textbf{\Big\langle}\textbf{\big(} m_0(x)\textbf{\big)}^{j_0}\textbf{\big(}m_{2^{r-1}}(x)\textbf{\big)}^{j_{2^{r-1}}}\prod\limits_{i=2}^a\prod\limits_{s\in D_i}\textbf{\big(}m_{s2^{r-i}}(x)\textbf{\big)}^{j_{s2^{r-i}}}\prod\limits_{i=a+1}^r\prod\limits_{s\in D_i}\textbf{\big(}m_{s2^{r-i}}(x)\textbf{\big)}^{j_{s2^{r-i}}}\textbf{\Big\rangle}$ is a repeated-root cyclic code of length $2^rp^s$ over $\mathbb{F}_q$, where $0\leq j_{h}\leq p^s$ for each $h$. Denote $D=\{j_0,j_{2^{r-1}}\}\cup\{j_{s2^{r-i}}:s\in D_i, 2\leq i\leq a\}$. Then we have following results.
	
	(1) if $j_h=0$ for each $h$, then a QEC MDS code with parameters $[[2^rp^s,2^rp^s,1]]_q$ is obtained.
	
	(2) if $\sum\limits_{h\in D} j_{h}=1$ and the remaining $j_h=0$, then a QEC MDS code with parameters $[[2^rp^s,2^rp^s-2,2]]_q$ is obtained.
\end{theorem}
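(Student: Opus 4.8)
The plan is to imitate the short derivation of Theorem \ref{theo17} verbatim, with the single change that the MDS classification invoked must be Theorem \ref{theo16} (the $a<r$ case) rather than Theorem \ref{theo15}. Concretely, for each of the two hypotheses I will (i) read off from Theorem \ref{theo16} that $C$ is an MDS cyclic code together with its explicit parameters $[n,k,d_H(C)]_q$, (ii) check via Proposition \ref{prop7} that the dual-containing condition $C^{\bot}\subseteq C$ holds, (iii) apply the CSS construction of Theorem \ref{theo5} with $C'=C^{\bot}$ to produce the QEC code $[[n,2k-n,d]]_q$, and (iv) confirm that the quantum Singleton bound $k\leq n-2d+2$ of Theorem \ref{theo1} is saturated, which is what certifies the quantum code as MDS. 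Here $n=2^rp^s$ throughout.

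For part (1), the hypothesis $j_h=0$ for all $h$ is condition (1) of Theorem \ref{theo16}, so $C=\langle 1\rangle=\mathbb{F}_q^{2^rp^s}$ is a $[2^rp^s,2^rp^s,1]_q$ MDS code. All exponents vanish, so the criterion of Proposition \ref{prop7} ($0\leq j_0,j_{2^{r-1}}<p^s/2$ and $0\leq j_{s2^{r-i}}+j_{-s2^{r-i}}\leq p^s$) holds trivially and $C^{\bot}\subseteq C$. Feeding $C$ into Theorem \ref{theo5} yields an $[[2^rp^s,\,2\cdot 2^rp^s-2^rp^s,\,1]]_q=[[2^rp^s,2^rp^s,1]]_q$ code, and since the bound reads $2^rp^s\leq 2^rp^s-2+2$ with equality, it is QEC MDS.

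For part (2), the hypothesis $\sum_{h\in D}j_h=1$ forces exactly one exponent among the degree-one constituents $\{j_0,j_{2^{r-1}}\}\cup\{j_{s2^{r-i}}:s\in D_i,\,2\leq i\leq a\}$ to equal $1$, all others being $0$; this is precisely condition (2) of Theorem \ref{theo16}, so $C$ is a $[2^rp^s,2^rp^s-1,2]_q$ MDS code. The codimension equals $1$ exactly because the single nonzero exponent sits on a linear minimal polynomial, by the degree computations recalled before Theorem \ref{theo16}. The dual-containing condition of Proposition \ref{prop7} again holds: the nonzero exponent is $1<p^s/2$ (using $p^s\geq 3$), and every paired sum $j_{s2^{r-i}}+j_{-s2^{r-i}}$ is at most $1\leq p^s$. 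Hence Theorem \ref{theo5} produces $[[2^rp^s,\,2(2^rp^s-1)-2^rp^s,\,2]]_q=[[2^rp^s,2^rp^s-2,2]]_q$, and the bound $2^rp^s-2\leq 2^rp^s-4+2$ is met with equality, so the code is QEC MDS.

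The argument is essentially bookkeeping, and the one point I would emphasize is the role of the restricted index set $D$: it is defined to contain only the linear factors, so that $\sum_{h\in D}j_h=1$ genuinely yields codimension $1$. Placing a nonzero exponent on one of the higher-degree factors $m_{s2^{r-i}}(x)$ with $a+1\leq i\leq r$ (which have degree $2^{i-a}>1$) would drop the dimension by more than $1$ and break the parameter count, so the only real care needed is to ensure the hypothesis confines the nonzero exponent to $D$. Apart from that, the two quantum Singleton-bound checks are the decisive one-line inequalities that upgrade each classical MDS cyclic code to a QEC MDS code.
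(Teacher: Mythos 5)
Your proposal is correct and is essentially the paper's own argument: the paper omits an explicit proof of Theorem \ref{theo18}, presenting it as the direct analogue of Theorem \ref{theo17} with Theorem \ref{theo16} playing the role of Theorem \ref{theo15}, and your steps (MDS classification via Theorem \ref{theo16}, dual-containment via Proposition \ref{prop7}, CSS construction via Theorem \ref{theo5}) reproduce exactly that intended proof. Your added remarks --- the explicit saturation check of the quantum Singleton bound and the observation that the hypothesis confines the nonzero exponent to the degree-one factors indexed by $D$ --- only make explicit what the paper leaves implicit, and are accurate.
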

\begin{example}\label{exam9}
 Let $a=3,b=3$, then $q=2^ab+1=25$ and $p=5$. Let $r=3,s=1$, then $n=2^rp^s=40$. Denote $\theta=\omega^3$, then we have $x^{40}-1=\prod_{h=0}^{7}(x-\theta^h)^{5}$. Put $C=\langle\prod_{h=0}^{7}(x-\theta^h)^{j_h}\rangle$, where $0\leq j_h\leq5$. 

(i) If $j_0=j_1=\cdots=j_7=0$, then $C=\langle1\rangle$ and $d_H(C)=1$. Obviously, the cyclic code $C$ is a $[40,40,1]_{25}$ MDS code. It follows from Proposition \ref{prop8} that $C^{\bot}\subseteq C$. From Theorem \ref{theo17}, we obtain a $[[40,40,1]]_{25}$ QEC MDS code.

(ii) If $\sum\limits_{h=0}^7 j_{h}=1$, then $C=\langle x-\theta^h\rangle$ with some $0\leq h\leq 7$ and $d_H(C)=2$. Obviously, the cyclic code $C$ is a $[40,39,2]_{25}$ MDS code. It follows from Proposition \ref{prop8} that $C^{\bot}\subseteq C$. From Theorem \ref{theo17}, we obtain a $[[40,38,2]]_{25}$ QEC MDS code.
\end{example}

\begin{example}\label{exam10}
 Let $a=2,b=7$, then $q=2^ab+1=29$ and $p=29$. Let $r=3,s=1$, then $n=2^rp^s=232$. By factorization, we have $x^{232}-1=(x+1)^{29}(x+12)^{29}(x+17)^{29}(x+28)^{29}(x^2+12)^{29}(x^2+17)^{29}$. Put $C=(x+1)^{j_0}(x+12)^{j_1}(x+17)^{j_2}(x+28)^{j_3}(x^2+12)^{j_4}(x^2+17)^{j_5}\rangle$, where $0\leq j_h\leq29$. 

(i) If $j_0=j_1=\cdots=j_5=0$, then $C=\langle1\rangle$ and $d_H(C)=1$. Obviously, the cyclic code $C$ is a $[232,232,1]_{29}$ MDS code. It follows from Proposition \ref{prop8} that $C^{\bot}\subseteq C$. From Theorem \ref{theo17}, we obtain a $[[232,232,1]]_{29}$ QEC MDS code.

(ii) If $\sum\limits_{h=0}^3 j_{h}=1$ and $j_4=j_5=0$, then $C=\langle x+1\rangle$ or $\langle x+12\rangle$ or $\langle x+17\rangle$ or $\langle x+28\rangle$ and $d_H(C)=2$. Obviously, the cyclic code $C$ is a $[232,231,2]_{29}$ MDS code. It follows from Proposition \ref{prop8} that $C^{\bot}\subseteq C$. From Theorem \ref{theo17}, we obtain a $[[232,230,2]]_{29}$ QEC MDS code.
\end{example}

Let $C$ be a repeated-root cyclic code of length $2^rp^s$ over $\mathbb{F}_q$. When $q= 2^ab-1$, from Theorem \ref{theo4}, we have $C= \textbf{\Big\langle}\textbf{\big(} m_0(x)\textbf{\big)}^{j_0}\textbf{\big(}m_{2^{r-1}}(x)\textbf{\big)}^{j_{2^{r-1}}}\textbf{\big(}m_{2^{r-2}}(x)\textbf{\big)}^{j_{2^{r-2}}}\prod\limits_{i=3}^r\prod\limits_{s\in O_i}\textbf{\big(}m_{s2^{r-i}}(x)\textbf{\big)}^{j_{s2^{r-i}}}\textbf{\Big\rangle}$,  where $0\leq j_h\leq p^s$ for each $h$.  Now, if we consider the cyclic code $C$ over $\mathbb{F}_q$, the following theorem is obtained, and the proof process is similar to Theorem \ref{theo15}.

\begin{theorem}\label{theo19}
	Let $q$ be a prime power and has the form $q=2^ab-1$. Suppose that $C= \textbf{\Big\langle}\textbf{\big(} m_0(x)\textbf{\big)}^{j_0}\textbf{\big(}m_{2^{r-1}}(x)\textbf{\big)}^{j_{2^{r-1}}}\textbf{\big(}m_{2^{r-2}}(x)\textbf{\big)}^{j_{2^{r-2}}}\prod\limits_{i=3}^r\prod\limits_{s\in O_i}\textbf{\big(}m_{s2^{r-i}}(x)\textbf{\big)}^{j_{s2^{r-i}}}\textbf{\Big\rangle}$ is a repeated-root cyclic code of length $2^rp^s$ over $\mathbb{F}_q$, where $0\leq j_{h}\leq p^s$ for each $h$. Then $C$ is an MDS codes over $\mathbb{F}_q$ iff one of the following conditions is satisfied.\\
	(1) $j_h=0$ for each $h$. In this case, $d_{H}(C)=1$.\\
(2) There is only one $j_h\in \{j_0,j_{2^{r-1}}\}$ such that $j_h=1$,  and the remaining $j_h=0$. In this case, $d_{H}(C)=2$.\\
(3) There is only one $j_h\in \{j_0,j_{2^{r-1}}\}$ such that $j_h=p^s-1$ and the remaining $j_h=p^s$. In this case, $d_{H}(C)=2^rp^s$. 
\end{theorem}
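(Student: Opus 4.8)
The plan is to follow the blueprint of Theorem \ref{theo15}, the only genuinely new ingredient being that here the minimal polynomials no longer all have degree one. First I would record the dimension of $C$: writing $d_h=\deg m_h$ for the degree of each minimal polynomial appearing in the generator, one has $k=2^rp^s-\sum_h d_hj_h$, so by the Singleton bound $C$ is MDS if and only if \[ d_H(C)=1+\sum_h d_hj_h. \] The arithmetic heart of the argument is the observation that when $q=2^ab-1$ \emph{only} the two minimal polynomials $m_0(x)=x-1$ and $m_{2^{r-1}}(x)=x+1$ are linear; indeed $\deg m_{2^{r-2}}=|C_{(2^{r-2},n)}|=\textrm{ord}_{4}(q)=2$, and for every $s\in O_i$ with $3\le i\le r$ one has $\deg m_{s2^{r-i}}=\textrm{ord}_{2^i}(q)\ge 2$ (see Proposition \ref{prop3} and Theorem \ref{theo4}). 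This is exactly what will force the ``exceptional'' exponent in conditions (2) and (3) to lie in $\{j_0,j_{2^{r-1}}\}$.

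Next I would invoke the distance formula (\ref{equ 2}), namely $d_H(C)=\textrm{min}\{P_td_H(C_t):t\in T\}$, together with the same low-weight-codeword device used in Theorem \ref{theo14}: if some exponent $j_{h_0}$ vanishes, then $g(x)$ divides $(x^{2^r}-1)^{p^s}/m_{h_0}(x)^{p^s}$, a polynomial whose expansion has at most $2^r$ nonzero terms, giving $d_H(C)\le 2^r$. I would then split into the two cases $j:=\min_h j_h=0$ and $j>0$, running the case analysis on $J:=\max_h j_h$ exactly as in Theorem \ref{theo15}. In the subcase $0<J\le p^{s-1}$ one has $g(x)\mid (x^{2^r}-1)^{p^{s-1}}=x^{2^rp^{s-1}}-1$, so $d_H(C)\le 2$; the MDS identity then forces $\sum_h d_hj_h\le 1$, and since $d_h\ge 2$ for every minimal polynomial other than $m_0$ and $m_{2^{r-1}}$, this is possible only when a single $j_h\in\{j_0,j_{2^{r-1}}\}$ equals $1$ and all others vanish, yielding condition (2) (with the exact value $d_H(C)=2$ coming from $C=\langle x\mp 1\rangle$). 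The subcase $J=0$ gives condition (1), and the remaining ranges of $J$ reproduce verbatim the contradictions obtained in Theorem \ref{theo15}.

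For the case $j>0$ I would transcribe the inequality chain of Theorem \ref{theo15}, now weighting each exponent by its degree $d_h$, to show $\sum_h d_hj_h\ge d_H(C)-1$ with equality only in the extremal configuration in which every exponent equals $p^s$ save one, which equals $p^s-1$; the degree weighting then forces that single exceptional exponent to sit on a linear minimal polynomial, i.e.\ on $m_0$ or $m_{2^{r-1}}$, in which case $C$ is the $[2^rp^s,1,2^rp^s]_q$ code generated by $(x^{2^r}-1)^{p^s}/(x\mp 1)$, giving condition (3). The main obstacle I anticipate is precisely this degree bookkeeping in the low-distance regime: I must verify that no nonlinear minimal polynomial can ever carry the weight of an MDS code, both in the $d_H(C)\le 2$ subcase of Case~1 and in the extremal configuration of Case~2, and I must confirm the \emph{exact} distance (not merely the Singleton upper bound) of the surviving codes $\langle x\mp 1\rangle$ and $\langle (x^{2^r}-1)^{p^s}/(x\mp 1)\rangle$, rather than settling for the inequalities that suffice to exclude all other configurations.
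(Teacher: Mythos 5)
Your proposal is correct and follows essentially the same route as the paper: the paper omits the proof of this theorem, stating only that it is ``similar to Theorem \ref{theo15}'', and your argument is exactly that adaptation --- the same case split on the minimal and maximal exponents, the same use of the distance formula (\ref{equ 2}) and the low-weight-codeword device of Theorem \ref{theo14}, with the degree-weighted dimension count that the paper itself carries out in the proof of Theorem \ref{theo16}. Your explicit observation that only $m_0(x)$ and $m_{2^{r-1}}(x)$ are linear when $q=2^ab-1$ (all other minimal polynomials having degree at least $2$) is precisely the point that forces the exceptional exponent into $\{j_0,j_{2^{r-1}}\}$, and is the detail the paper leaves implicit.
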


Combining Proposition \ref{prop8} and Theorem \ref{theo19}, we can obtain the following theorem based on CSS construction.

\begin{theorem}\label{theo20}
		Let $q$ be a prime power and has the form $q=2^ab-1$. Suppose that $C= \textbf{\Big\langle}\textbf{\big(} m_0(x)\textbf{\big)}^{j_0}\textbf{\big(}m_{2^{r-1}}(x)\textbf{\big)}^{j_{2^{r-1}}}\textbf{\big(}m_{2^{r-2}}(x)\textbf{\big)}^{j_{2^{r-2}}}\prod\limits_{i=3}^r\prod\limits_{s\in O_i}\textbf{\big(}m_{s2^{r-i}}(x)\textbf{\big)}^{j_{s2^{r-i}}}\textbf{\Big\rangle}$ is a repeated-root cyclic of length $2^rp^s$ over $\mathbb{F}_q$, where $0\leq j_{h}\leq p^s$ for each $h$. Then we have the following results.
	
	(1) if $j_h=0$ for each $h$, then a QEC MDS code with parameters $[[2^rp^s,2^rp^s,1]]$ is obtained.
	
	(2) if there is only one $j_h\in \{j_0,j_{2^{r-1}}\}$ such that $j_h=1$ and the remaining $j_h=0$, then a QEC MDS code with parameters $[[2^rp^s,2^rp^s-2,2]]$ is obtained.
\end{theorem}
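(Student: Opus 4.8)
The plan is to follow the template of the proof of Theorem \ref{theo17} almost verbatim, replacing its two ingredients by the corresponding results for the case $q=2^ab-1$: I would invoke Theorem \ref{theo19} in place of Theorem \ref{theo15} to identify when $C$ is an MDS cyclic code, and Proposition \ref{prop8} in place of Proposition \ref{prop7} to certify that $C^{\bot}\subseteq C$. Once both the MDS property and the dual-containing property are in hand, the CSS construction of Theorem \ref{theo5} delivers the quantum code, and the quantum Singleton bound of Theorem \ref{theo1} confirms its MDS-ness.

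For Conclusion (1), I would observe that $j_h=0$ for all $h$ forces $C=\langle 1\rangle=\mathbb{F}_q^{2^rp^s}$. By Conclusion (1) of Theorem \ref{theo19} this is a $[2^rp^s,2^rp^s,1]_q$ MDS code. The all-zero choice of exponents trivially satisfies every inequality in Proposition \ref{prop8} (in both the $a=2$ and $a\geq 3$ branches), so $C^{\bot}\subseteq C$. Feeding $C'=C^{\bot}$ into Theorem \ref{theo5} yields an $[[2^rp^s,2\cdot 2^rp^s-2^rp^s,1]]=[[2^rp^s,2^rp^s,1]]$ code, and since $k=2^rp^s=n-2\cdot 1+2$ this saturates the bound of Theorem \ref{theo1}, giving a QEC MDS code.

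For Conclusion (2), I would use that $m_0(x)=x-1$ and $m_{2^{r-1}}(x)=x+1$ are both linear, so a single exponent equal to $1$ among $j_0,j_{2^{r-1}}$ (with all other $j_h=0$) gives $\deg g=1$ and hence dimension $2^rp^s-1$; Conclusion (2) of Theorem \ref{theo19} then makes $C$ a $[2^rp^s,2^rp^s-1,2]_q$ MDS code. To apply Proposition \ref{prop8} I must verify the strict inequalities $0\leq j_0,j_{2^{r-1}},j_{2^{r-2}}<\frac{p^s}{2}$; since $p$ is odd we have $p^s\geq 3$, whence $1<\frac{p^s}{2}$, and every remaining pairwise-sum condition holds automatically because those exponents vanish. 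Thus $C^{\bot}\subseteq C$, and Theorem \ref{theo5} produces an $[[2^rp^s,2(2^rp^s-1)-2^rp^s,2]]=[[2^rp^s,2^rp^s-2,2]]$ code, which again attains $k=n-2d+2$ and is therefore a QEC MDS code.

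The only point requiring genuine care is the verification in Conclusion (2) that the chosen exponent $1$ respects the strict bound $j_h<\frac{p^s}{2}$ of Proposition \ref{prop8}; this is where the standing hypothesis that $p$ is an odd prime (so $p^s\geq 3$) is essential. Beyond that, the argument is a mechanical assembly of Theorem \ref{theo19}, Proposition \ref{prop8}, the CSS construction, and the quantum Singleton bound, mirroring the already-established Theorem \ref{theo17}, and no new structural computation is needed.
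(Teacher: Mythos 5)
Your proposal is correct and follows essentially the same route as the paper: the paper obtains Theorem \ref{theo20} exactly by combining Theorem \ref{theo19} (MDS-ness) with Proposition \ref{prop8} (dual-containment) and the CSS construction of Theorem \ref{theo5}, mirroring the explicit proof of Theorem \ref{theo17}. Your additional checks — that the exponent $1$ satisfies the strict bound $1<\frac{p^s}{2}$ since $p^s\geq 3$, and that the resulting parameters saturate the quantum Singleton bound of Theorem \ref{theo1} — are details the paper leaves implicit, and they are verified correctly.
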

\begin{example}\label{exam11}
 Let $a=3,b=3$, then $q=2^ab-1=23$ and $p=23$. Let $r=3,s=1$, then $n=2^rp^s=184$. By factorization, we have $x^{184}-1=(x+1)^{23}(x+22)^{23}(x^2+1)^{23}(x^2+5x+1)^{23}(x^2+18x+1)^{23}$. Put $C=\langle(x+1)^{j_0}(x+22)^{j_1}(x^2+1)^{j_2}(x^2+5x+1)^{j_3}(x^2+18x+1)^{j_4}\rangle$, where $0\leq j_h\leq23$. 

(i) If $j_0=j_1=\cdots=j_4=0$, then $C=\langle1\rangle$ and $d_H(C)=1$. Obviously, the cyclic code $C$ is a $[184,184,1]_{23}$ MDS code. It follows from Proposition \ref{prop8} that $C^{\bot}\subseteq C$. From Theorem \ref{theo17}, we obtain a $[[184,184,1]]_{25}$ QEC MDS code.

(ii) If $\sum_{h=0}^1 j_{h}=1$ and $j_2=j_3=j_4=0$, then $C=\langle x+1\rangle$ or $\langle x+22\rangle$ and $d_H(C)=2$. Obviously, the cyclic code $C$ is a $[184,183,2]_{23}$ MDS code. It follows from Proposition \ref{prop8} that $C^{\bot}\subseteq C$. From Theorem \ref{theo17}, we obtain a $[[184,182,2]]_{23}$ QEC MDS code.
\end{example}
\begin{remark}\label{rem3}
	Comparing with these QEC MDS codes in Examples \ref{exam9},\ref{exam10},\ref{exam11} and known QEC MDS codes in \cite{Grassl2004,Jin2010,Kai2013,Kai2014,Chen2015,Jin2017,Shi2017,Fang2020,Ball2021,Jin2022,Grassl2007}, it is easy to find that our QEC codes constructed in Examples \ref{exam9},\ref{exam10},\ref{exam11} are all new, meaning that they exhibit parameters distinct from those of previously known constructions.
\end{remark}
\section{EAQEC codes of length $2^rp^s$}\label{sec6}
 Entanglement-assisted quantum error-correcting (EAQEC) codes, as a special type of QEC codes, plays an important role in the field of quantum information. They combine the concepts of quantum error correction and quantum entanglement to improve the error correction ability and efficiency in quantum information processing. 
 
In recent years, the research of their construction has been widely concerned by many scholars.
Based on this, we will construct some EAQEC codes by using cyclic codes of length $2^rp^s$ over $\mathbb{F}_q$ in this section. In order to achieve this goal, we first introduce an important construction method of EAQEC codes, which has been given in \cite{Guenda2020}. 

\begin{theorem}\label{theo21}
	Let $C$ be an $[n,k,d]$ classical linear code. Denote $\textrm{Hull}(C)=C\cap C^{\bot}$ and $l=\textrm{dim}(\textrm{Hull}(C))$. Then, an $[[n,k-l,d;n-k-l]]_q$ EAQEC code is constructed.
\end{theorem}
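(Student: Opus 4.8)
The plan is to reduce the statement to a routine rank computation combined with the fundamental entanglement-assisted stabilizer construction from a classical linear code. First I would fix a generator matrix $G$ (of size $k\times n$) and a parity-check matrix $H$ (of size $(n-k)\times n$) for $C$, so that $H$ is simultaneously a generator matrix of $C^{\bot}$. The entire argument then hinges on expressing the number of required entangled pairs $c$ and the dimension $k-l$ purely in terms of $l=\dim(\mathrm{Hull}(C))$, after which the claimed parameters fall out by substitution.

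The key linear-algebra step is the identity $\mathrm{rank}(HH^{T})=(n-k)-l$. To establish it, I would observe that a codeword $zH\in C^{\bot}$ lies in $\mathrm{Hull}(C)=C\cap C^{\bot}$ precisely when it is orthogonal to every row of $H$, i.e.\ when $zHH^{T}=0$. Since $H$ has full row rank, the map $z\mapsto zH$ is injective and carries $\{z:zHH^{T}=0\}$ isomorphically onto $\mathrm{Hull}(C)$; hence $\dim\{z:zHH^{T}=0\}=l$, and therefore $\mathrm{rank}(HH^{T})=(n-k)-l$. The symmetric computation applied to $G$ yields $\mathrm{rank}(GG^{T})=k-l$, which I would record as well since it makes the dual construction $[[n,n-k-l,d^{\bot};k-l]]_q$ immediate.

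Next I would invoke the entanglement-assisted stabilizer formalism applied to $C$ through $H$: using $H$ for both the $X$- and $Z$-type checks produces an EAQEC code of length $n$ whose number of entangled pairs equals $c=\mathrm{rank}(HH^{T})$, whose number of logical symbols equals $n-2(n-k)+c=2k-n+c$, and whose minimum distance is inherited from $C$. Substituting $c=(n-k)-l$ gives the entanglement parameter $n-k-l$ and the dimension $2k-n+(n-k-l)=k-l$, while the distance remains $d$, so the output is exactly an $[[n,k-l,d;n-k-l]]_q$ EAQEC code.

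The main obstacle is not the bookkeeping above but the justification of the invoked formalism. One must show that the (generally non-commuting) set of check operators built from $H$ can be symplectically transformed into $s$ mutually commuting generators together with exactly $c$ hyperbolic (anticommuting) pairs, that the latter can be absorbed by $c$ pairs of maximally entangled states, and that the resulting code still corrects every error of weight below $d$. This decoupling into a commuting part plus $c$ entangled pairs, with $c$ equal to half the rank of the symplectic commutation matrix (which here collapses to $\mathrm{rank}(HH^{T})$), is the substantive ingredient; once it is granted, the rank identity makes the parameter count purely mechanical.
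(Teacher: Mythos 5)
Your proof is correct and follows the standard route: the identity $\mathrm{rank}(HH^{T})=(n-k)-l$ is derived properly from the hull, and the parameters $[[n,k-l,d;n-k-l]]_q$ then fall out of the Wilde--Brun formula $[[n,2k-n+c,d;c]]_q$ with $c=\mathrm{rank}(HH^{T})$. Note that the paper contains no proof of Theorem~\ref{theo21} at all --- it imports the statement from \cite{Guenda2020} --- and your argument is essentially the proof given in that reference, including the same acknowledged black-box appeal to the entanglement-assisted stabilizer formalism of \cite{Wilde2008} for the commuting-part-plus-hyperbolic-pairs decomposition.
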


Let $C$ be a repeated-root cyclic code of length $2^rp^s$ over $F_q$. Regarding the hull of $C$ and its dimension, we have the following two theorems. 

\begin{theorem}\label{theo22}
	Let $q$ be a prime power and has the form $q=2^ab+1$. Suppose that $C= \textbf{\Big\langle}\textbf{\big(} m_0(x)\textbf{\big)}^{j_0}\textbf{\big(}m_{2^{r-1}}(x)\textbf{\big)}^{j_{2^{r-1}}}\prod\limits_{i=2}^r\prod\limits_{s\in D_i}\textbf{\big(}m_{s2^{r-i}}(x)\textbf{\big)}^{j_{s2^{r-i}}}\textbf{\Big\rangle}$ is a cyclic code of length $2^rp^s$ over $\mathbb{F}_{q}$, where $0\leq j_h\leq p^s$ for each $h$. Denote $\textrm{Hull}(C)=C\cap C^{\bot}$ and $l=\textrm{dim}(\textrm{Hull}(C))$. Then, we have $$\textrm{Hull}(C)= \textbf{\Big\langle}\textbf{\big(} m_0(x)\textbf{\big)}^{\textrm{max}\{j_0,p^s-j_0\}}\textbf{\big(}m_{2^{r-1}}(x)\textbf{\big)}^{\textrm{max}\{j_{2^{r-1}},p^s-j_{2^{r-1}}\}}$$ $$\prod_{i=2}^r\prod_{s\in D_i}\textbf{\big(}m_{s2^{r-i}}(x)\textbf{\big)}^{\textrm{max}\{j_{s2^{r-i}},p^s-j_{-s2^{r-i}}\}}\textbf{\Big\rangle},$$  and  $$l=\begin{cases} 
	2^rp^s-(\textrm{max}\{j_0,p^s-j_0\}+\textrm{max}\{j_{2^{r-1}},p^s-j_{2^{r-1}}\}+e_1),&\:if\:r\leq a;\\
	2^rp^s-(\textrm{max}\{j_0,p^s-j_0\}+\textrm{max}\{j_{2^{r-1}},p^s-j_{2^{r-1}}\}+e_2),&\:if\:r>a,
	\end{cases}$$ where $0\leq j_h\leq p^s$ for each $h$, $e_1=\sum\limits_{i=2}^r\sum\limits_{s\in D_i}\textrm{max}\{j_{s2^{r-i}},p^s-j_{-s2^{r-i}}\}$ and $e_2=\sum\limits_{i=2}^a\sum\limits_{s\in D_i}\textrm{max}\{j_{s2^{r-i}},p^s-j_{-s2^{r-i}}\}+\sum\limits_{i=a+1}^r\sum\limits_{s\in D_i}2^{i-a}\textrm{max}\{j_{s2^{r-i}},p^s-j_{-s2^{r-i}}\}$.
\end{theorem}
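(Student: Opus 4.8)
The plan is to exploit the standard fact that for two cyclic codes $\langle g_1(x)\rangle$ and $\langle g_2(x)\rangle$ of the same length, with both generators dividing $x^n-1$, their intersection is the cyclic code generated by $\textrm{lcm}(g_1(x),g_2(x))$. Here I would take $g_1(x)$ to be the generator polynomial of $C$ and $g_2(x)$ the generator polynomial of $C^{\bot}$ supplied by Theorem \ref{theo4}(1), namely $g_2(x)=\big(m_0(x)\big)^{p^s-j_0}\big(m_{2^{r-1}}(x)\big)^{p^s-j_{2^{r-1}}}\prod_{i=2}^r\prod_{s\in D_i}\big(m_{s2^{r-i}}(x)\big)^{p^s-j_{-s2^{r-i}}}$. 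Since $\textrm{Hull}(C)=C\cap C^{\bot}$, computing this lcm will directly produce the claimed generator.

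The key structural input is that the minimal polynomials $m_0(x),m_{2^{r-1}}(x)$ and the $m_{s2^{r-i}}(x)$ with $s\in D_i$, $2\le i\le r$, are pairwise coprime: each is irreducible over $\mathbb{F}_q$, and together they are exactly the distinct irreducible factors of $x^{2^r}-1$. Consequently the lcm is formed factor-by-factor: for each irreducible $m_{s2^{r-i}}(x)$ one takes the larger of its exponent $j_{s2^{r-i}}$ in $g_1(x)$ and its exponent $p^s-j_{-s2^{r-i}}$ in $g_2(x)$, i.e. $\textrm{max}\{j_{s2^{r-i}},p^s-j_{-s2^{r-i}}\}$, and likewise $\textrm{max}\{j_0,p^s-j_0\}$ and $\textrm{max}\{j_{2^{r-1}},p^s-j_{2^{r-1}}\}$ for the two self-reciprocal factors. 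This yields precisely the stated expression for $\textrm{Hull}(C)$.

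For the dimension I would use $l=\textrm{dim}(\textrm{Hull}(C))=2^rp^s-\textrm{deg}(g)$, where $g$ is the generator just obtained, and expand $\textrm{deg}(g)$ as the sum of its exponents weighted by the degrees of the corresponding minimal polynomials. The degree data is already recorded in the proof of Theorem \ref{theo9}: $\textrm{deg}(m_0(x))=\textrm{deg}(m_{2^{r-1}}(x))=1$, $\textrm{deg}(m_{s2^{r-i}}(x))=|C_{(2^{r-i}s,n)}|=1$ for $2\le i\le a$, and $\textrm{deg}(m_{s2^{r-i}}(x))=2^{i-a}$ for $a+1\le i\le r$. When $r\le a$ every factor is linear, so all exponents enter with weight one and $\textrm{deg}(g)$ supplies the term $e_1$; when $r>a$ the factors indexed by $a+1\le i\le r$ carry the weight $2^{i-a}$, producing the term $e_2$. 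Substituting and subtracting from $2^rp^s$ gives the two-case formula for $l$.

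The only delicate point is the bookkeeping of the reciprocal indices: in the dual generator the factor $m_{s2^{r-i}}(x)$ appears with exponent $p^s-j_{-s2^{r-i}}$ rather than $p^s-j_{s2^{r-i}}$, because $m^{*}_{s2^{r-i}}(x)=m_{-s2^{r-i}}(x)$ as established in the proof of Theorem \ref{theo4}. I would make sure to pair $j_{s2^{r-i}}$ with $p^s-j_{-s2^{r-i}}$ inside each maximum, so that the index $-s$ (and not $s$) is the one that gets subtracted; this is exactly what makes the $\textrm{max}$ terms, and hence $e_1$ and $e_2$, come out as stated. Everything else reduces to a routine degree computation.
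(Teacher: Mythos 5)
Your proposal is correct and follows essentially the same route as the paper: the intersection-of-cyclic-codes-is-generated-by-the-lcm fact (the paper cites Theorem 4.3.7 of Huffman--Pless), the dual generator from Theorem \ref{theo4}, a factor-by-factor maximum over the pairwise coprime irreducible factors with the reciprocal pairing $m^{*}_{s2^{r-i}}(x)=m_{-s2^{r-i}}(x)$, and the dimension obtained by subtracting the generator's degree using the degree data ($1$ for $i\leq a$, $2^{i-a}$ for $i>a$) from the proof of Theorem \ref{theo9}. No gaps; your treatment of the index bookkeeping and the two-case degree count is exactly what the paper leaves implicit.
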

\begin{proof} 
According to Theorem 4.3.7 in \cite{Huffman2003}, for any two cyclic codes $C_1=\langle g_1(x)\rangle$ and  $C_2=\langle g_2(x)\rangle$ , we know that $C_1\cap C_2$ is still a cyclic code and has the generator polynomial $\textrm{lcm}(g_1(x),g_2(x))$. From Teorem \ref{theo4}, we get $C^{\bot}=  \textbf{\Big\langle}\textbf{\big(} m_0(x)\textbf{\big)}^{p^s-j_0}\textbf{\big(}m_{2^{r-1}}(x)\textbf{\big)}^{p^s-j_{2^{r-1}}}\prod\limits_{i=2}^r\prod\limits_{s\in D_i}\textbf{\big(}m_{s2^{r-i}}(x)\textbf{\big)}^{p^s-j_{-s2^{r-i}}}\textbf{\Big\rangle}$.	Therefore, we have  $$\textrm{Hull}(C)= \textbf{\Big\langle}\textbf{\big(} m_0(x)\textbf{\big)}^{\textrm{max}\{j_0,p^s-j_0\}}\textbf{\big(}m_{2^{r-1}}(x)\textbf{\big)}^{\textrm{max}\{j_{2^{r-1}},p^s-j_{2^{r-1}}\}}$$ $$\prod_{i=2}^r\prod_{s\in D_i}\textbf{\big(}m_{s2^{r-i}}(x)\textbf{\big)}^{\textrm{max}\{j_{s2^{r-i}},p^s-j_{-s2^{r-i}}\}}\textbf{\Big\rangle}.$$  
Furthermore, combined with the proof of Theorem \ref{theo9}, the dimension $l$ of $\textrm{Hull}(C)$ can be easily obtained. The proof is similar to that of Theorem \ref{theo9} and is omitted here
\end{proof} 

\begin{theorem}\label{theo23}
Let $q$ be a prime power and has the form $q=2^ab-1$. Suppose that $C= \textbf{\Big\langle}\textbf{\big(} m_0(x)\textbf{\big)}^{j_0}\textbf{\big(}m_{2^{r-1}}(x)\textbf{\big)}^{j_{2^{r-1}}}\textbf{\big(}m_{2^{r-2}}(x)\textbf{\big)}^{j_{2^{r-2}}}\prod\limits_{i=3}^r\prod\limits_{s\in O_i}\textbf{\big(}m_{s2^{r-i}}(x)\textbf{\big)}^{j_{s2^{r-i}}}\textbf{\Big\rangle}$ is a cyclic code of length $2^rp^s$ over $\mathbb{F}_{q}$, where $0\leq j_h\leq p^s$ for each $h$. Denote $\textrm{Hull}(C)=C\cap C^{\bot}$ and $l=\textrm{dim}(\textrm{Hull}(C))$. The following results hold.\\
	(1) When $a=2$, we have  $$\textrm{Hull}(C)= \textbf{\Big\langle}\textbf{\big(} m_0(x)\textbf{\big)}^{\textrm{max}\{j_0,p^s-j_0\}}\textbf{\big(}m_{2^{r-1}}(x)\textbf{\big)}^{\textrm{max}\{j_{2^{r-1}},p^s-j_{2^{r-1}}\}}$$ $$\textbf{\big(}m_{2^{r-2}}(x)\textbf{\big)}^{\textrm{max}\{j_{2^{r-2}},p^s-j_{2^{r-2}}\}}\prod_{i=3}^r\prod_{s\in O_i}\textbf{\big(}m_{s2^{r-i}}(x)\textbf{\big)}^{\textrm{max}\{j_{s2^{r-i}},p^s-j_{-s2^{r-i}}\}}\textbf{\Big\rangle},$$ 
	and  $$l=2^rp^s-(\textrm{max}\{j_0,p^s-j_0\}+\textrm{max}\{j_{2^{r-1}},p^s-j_{2^{r-1}}\}+2\textrm{max}\{j_{2^{r-2}},p^s-j_{2^{r-2}}\}+e_3),$$ where $0\leq j_h\leq p^s$ for each $h$ and $e_3=\sum\limits_{i=3}^r\sum\limits_{s\in D_i}2^{i-a}\textrm{max}\{j_{s2^{r-i}},p^s-j_{-s2^{r-i}}\}$. \\
	(2) When $a\geq3$, we have 
	$$\textrm{Hull}(C)= \textbf{\Big\langle}\textbf{\big(} m_0(x)\textbf{\big)}^{\textrm{max}\{j_0,p^s-j_0\}}\textbf{\big(}m_{2^{r-1}}(x)\textbf{\big)}^{\textrm{max}\{j_{2^{r-1}},p^s-j_{2^{r-1}}\}}$$ $$\textbf{\big(}m_{2^{r-2}}(x)\textbf{\big)}^{\textrm{max}\{j_{2^{r-2}},p^s-j_{2^{r-2}}\}}\prod_{i=3}^a\prod_{s\in O_i}\textbf{\big(}m_{s2^{r-i}}(x)\textbf{\big)}^{\textrm{max}\{j_{s2^{r-i}},p^s-j_{s2^{r-i}}\}}$$ 
	$$\prod_{i=a+1}^r\prod_{s\in O_i}\textbf{\big(}m_{s2^{r-i}}(x)\textbf{\big)}^{\textrm{max}\{j_{s2^{r-i}},p^s-j_{-s2^{r-i}}\}}\textbf{\Big\rangle},$$
		and  $$l=2^rp^s-(\textrm{max}\{j_0,p^s-j_0\}+\textrm{max}\{j_{2^{r-1}},p^s-j_{2^{r-1}}\}+2\textrm{max}\{j_{2^{r-2}},p^s-j_{2^{r-2}}\}+e_4),$$ where $0\leq j_h\leq p^s$ for each $h$ and $e_4=2\sum\limits_{i=3}^a\sum\limits_{s\in D_i}\textrm{max}\{j_{s2^{r-i}},p^s-j_{s2^{r-i}}\}+\sum\limits_{i=a+1}^r\sum\limits_{s\in D_i}2^{i-a}\textrm{max}\{j_{s2^{r-i}},p^s-j_{-s2^{r-i}}\}$.
\end{theorem}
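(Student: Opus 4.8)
The plan is to mirror the proof of Theorem \ref{theo22} in structure, the only genuinely new ingredient being the more intricate reciprocal bookkeeping forced by the factorization in Proposition \ref{prop3}. The starting point is the classical fact (Theorem 4.3.7 of \cite{Huffman2003}) that for cyclic codes $C_1=\langle g_1(x)\rangle$ and $C_2=\langle g_2(x)\rangle$ the intersection $C_1\cap C_2$ is again cyclic with generator $\textrm{lcm}(g_1(x),g_2(x))$. Applying this to $C$ and $C^{\bot}$, whose generators are supplied by Theorem \ref{theo4}(2), and using that $x^{2^rp^s}-1=\prod_s m_s(x)^{p^s}$ is a product of powers of pairwise coprime irreducible minimal polynomials, the lcm is read off factor by factor: for each minimal polynomial the exponent appearing in $\textrm{Hull}(C)$ equals the maximum of its exponent in $g(x)$ and its exponent in the generator of $C^{\bot}$.

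The crux is matching those two exponents correctly for every factor, and this is exactly where the cases $a=2$ and $a\geq3$ diverge. First I would record which minimal polynomials are self-reciprocal. As established in the proof of Theorem \ref{theo4}(2), $m_0^{*}=m_0$, $m_{2^{r-1}}^{*}=m_{2^{r-1}}$, and $m_{2^{r-2}}^{*}=m_{2^{r-2}}$ hold in both cases, so for these three factors the exponent in $C^{\bot}$ is $p^s$ minus the corresponding $j$, yielding the terms $\textrm{max}\{j_h,p^s-j_h\}$. For the remaining factors, when $a=2$ every $O_i$ with $3\leq i\leq r$ is closed under $s\mapsto-s$ and $m_{s2^{r-i}}^{*}=m_{-s2^{r-i}}$, so the exponent of $m_{s2^{r-i}}$ in $C^{\bot}$ is $p^s-j_{-s2^{r-i}}$, giving $\textrm{max}\{j_{s2^{r-i}},p^s-j_{-s2^{r-i}}\}$. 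When $a\geq3$ I would split the range: Proposition \ref{prop3} together with the proof of Theorem \ref{theo4}(2) shows that for $3\leq i\leq a$ the coset is self-paired, $m_{s2^{r-i}}^{*}=m_{s2^{r-i}}$, producing $\textrm{max}\{j_{s2^{r-i}},p^s-j_{s2^{r-i}}\}$, whereas for $a+1\leq i\leq r$ the map $s\mapsto-s$ again swaps cosets, producing $\textrm{max}\{j_{s2^{r-i}},p^s-j_{-s2^{r-i}}\}$. Collecting these exponents delivers the stated generator of $\textrm{Hull}(C)$ in each case.

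Finally, the dimension is $l=2^rp^s-\textrm{deg}(g_{\textrm{Hull}}(x))$, where $g_{\textrm{Hull}}(x)$ is the generator just obtained, so I would weight each exponent by the degree of its minimal polynomial. Exactly as in the proofs of Theorems \ref{theo9} and \ref{theo10}, $\textrm{deg}(m_0)=\textrm{deg}(m_{2^{r-1}})=1$, while $|C_{(2^{r-2},n)}|=2$ gives $\textrm{deg}(m_{2^{r-2}})=2$; for $a\geq3$ and $3\leq i\leq a$ one has $\textrm{deg}(m_{s2^{r-i}})=2$, and for $a+1\leq i\leq r$ (respectively all $i\geq3$ when $a=2$) the coset cardinality $\textrm{ord}_{2^i}(q)=2^{i-a}$ gives $\textrm{deg}(m_{s2^{r-i}})=2^{i-a}$. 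Substituting these degrees reproduces the claimed $e_3$ and $e_4$ and hence the two formulas for $l$.

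The main obstacle will be the reciprocal bookkeeping: one must ensure that the index shift $j_{s2^{r-i}}\mapsto p^s-j_{-s2^{r-i}}$, as opposed to $p^s-j_{s2^{r-i}}$, is applied to precisely the correct sub-ranges of $i$, since a slip here silently corrupts both the generator and the degree weights feeding into $l$. Everything else is a routine transcription of the Theorem \ref{theo22} argument, which is why it is natural to present this proof as analogous and suppress the repeated computation.
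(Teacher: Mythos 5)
Your proposal is correct and is precisely the argument the paper intends: the paper gives no separate proof of Theorem \ref{theo23} (it is left as analogous to Theorem \ref{theo22}), and your transcription --- the lcm of the generators of $C$ and $C^{\bot}$ via Theorem 4.3.7 of \cite{Huffman2003}, the reciprocal-coset bookkeeping from Proposition \ref{prop3} and Theorem \ref{theo4}(2) (self-reciprocal factors $m^{*}_{s2^{r-i}}=m_{s2^{r-i}}$ for $3\leq i\leq a$ when $a\geq3$, paired factors under $s\mapsto -s$ otherwise), and the degree weights $\textrm{deg}(m_0)=\textrm{deg}(m_{2^{r-1}})=1$, $\textrm{deg}(m_{2^{r-2}})=2$, $\textrm{deg}(m_{s2^{r-i}})=2$ for $3\leq i\leq a$, and $\textrm{deg}(m_{s2^{r-i}})=2^{i-a}$ for $i\geq a+1$ --- supplies exactly the omitted computation. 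The one discrepancy is notational and lies in the paper rather than in your proof: the sums defining $e_3$ and $e_4$ in the statement should run over $s\in O_i$, not $s\in D_i$, which your argument implicitly corrects.
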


Combining Theorem \ref{theo21} and Theorem \ref{theo22} , we can obtain the following EAQEC codes by using cyclic codes of length $2^rp^s$ over $F_q$. 

\begin{theorem}\label{theo24}
	Let $q$ be a prime power and has the form $q=2^ab+1$. Suppose that $C$ is a cyclic code of length $2^rp^s$ over $\mathbb{F}_{q}$ and can be represented in the form $C= \textbf{\Big\langle}\textbf{\big(} m_0(x)\textbf{\big)}^{j_0}\textbf{\big(}m_{2^{r-1}}(x)\textbf{\big)}^{j_{2^{r-1}}}\prod\limits_{i=2}^r\prod\limits_{s\in D_i}\textbf{\big(}m_{s2^{r-i}}(x)\textbf{\big)}^{j_{s2^{r-i}}}\textbf{\Big\rangle}$, where $0\leq j_h\leq p^s$ for each $h$. The following results hold. 
	
		(1) If $r\leq a$, then a $[[2^rp^s,l_1-l_2,d_H(C);l_1+l_2-2^rp^s]]_q$ EAQEC code is essential, where $l_1=\textrm{max}\{j_0,p^s-j_0\}+\textrm{max}\{j_{2^{r-1}},p^s-j_{2^{r-1}}\}+\sum\limits_{i=2}^r\sum\limits_{s\in D_i}\textrm{max}\{j_{s2^{r-i}},p^s-j_{-s2^{r-i}}\}$, $l_2=j_0+j_{2^{r-1}}+\sum\limits_{i=2}^{r}\sum\limits_{s\in D_i}j_{s2^{r-i}}$.
	
		(2) If $r>a$, then a $[[2^rp^s,l_3-l_4,d_H(C);l_3+l_4-2^rp^s]]_q$ EAQEC code is essential, where $l_3=\textrm{max}\{j_0,p^s-j_0\}+\textrm{max}\{j_{2^{r-1}},p^s-j_{2^{r-1}}\}+\sum\limits_{i=2}^a\sum\limits_{s\in D_i}\textrm{max}\{j_{s2^{r-i}},p^s-j_{-s2^{r-i}}\}+\sum\limits_{i=a+1}^r\sum\limits_{s\in D_i}2^{i-a}\textrm{max}\{j_{s2^{r-i}},p^s-j_{-s2^{r-i}}\}$ and $l_4=j_0+j_{2^{r-1}}+\sum\limits_{i=2}^{a}\sum\limits_{s\in D_i}j_{s2^{r-i}}+\sum\limits_{i=a+1}^{r}\sum\limits_{s\in D_i}(2^{i-a}j_{s2^{r-i}})$.
\end{theorem}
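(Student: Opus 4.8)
The plan is to apply the EAQEC construction of Theorem \ref{theo21} directly to the cyclic code $C$, so that the whole task reduces to identifying the three invariants $k=\textrm{dim}(C)$, $l=\textrm{dim}(\textrm{Hull}(C))$ and $d=d_H(C)$ and substituting them into the template $[[n,k-l,d;n-k-l]]_q$ with $n=2^rp^s$. Everything needed for this is already in place: the degrees of the minimal polynomials were worked out in the proof of Theorem \ref{theo9}, and the hull together with its dimension was computed in Theorem \ref{theo22}.

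First I would record the dimension of $C$. Since $C=\langle g(x)\rangle$ with $g(x)$ the displayed product of minimal polynomials, one has $k=2^rp^s-\textrm{deg}(g(x))$, and $\textrm{deg}(g(x))$ is counted exactly as in Theorem \ref{theo9}. When $r\leq a$ every minimal polynomial occurring has degree $1$, so $\textrm{deg}(g(x))=l_2$; when $r>a$ the factors $m_{s2^{r-i}}(x)$ with $a+1\leq i\leq r$ have degree $2^{i-a}$, which introduces the weights $2^{i-a}$ and gives $\textrm{deg}(g(x))=l_4$. Hence $k=2^rp^s-l_2$ in case (1) and $k=2^rp^s-l_4$ in case (2).

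Next I would invoke Theorem \ref{theo22} for the hull dimension. In case (1) that theorem yields $l=2^rp^s-l_1$ (its quantity $e_1$ being precisely the tail sum appearing inside $l_1$), and in case (2) it yields $l=2^rp^s-l_3$ (its $e_2$ being the weighted tail sum inside $l_3$). With these identifications the arithmetic is immediate. In case (1), $k-l=(2^rp^s-l_2)-(2^rp^s-l_1)=l_1-l_2$ and $n-k-l=l_1+l_2-2^rp^s$; in case (2), $k-l=l_3-l_4$ and $n-k-l=l_3+l_4-2^rp^s$. Feeding these values, together with $d=d_H(C)$, into Theorem \ref{theo21} produces the two claimed parameter sets.

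I do not expect a genuine obstacle, as the statement is an assembly of results already established. The only point demanding care is keeping the bookkeeping of the minimal-polynomial degrees consistent across the two regimes: it is exactly the factor $2^{i-a}$ for $a+1\leq i\leq r$ that makes $l_2$ split off into $l_4$ (and $l_1$ into $l_3$), so I would verify that the degree count used to obtain $k$ matches the corresponding $\textrm{max}$-weighted count used for $l$ in Theorem \ref{theo22}. Once that alignment is checked, the cancellation giving $k-l$ and $n-k-l$ is routine.
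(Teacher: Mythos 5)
Your proposal is correct and follows essentially the same route as the paper's own proof: identify the dimension of $C$ via the degree bookkeeping from Theorem \ref{theo9}, take the hull dimension from Theorem \ref{theo22} (recognizing $e_1$ and $e_2$ as the tail sums in $l_1$ and $l_3$), and substitute $k$, $l$, $d_H(C)$ into the template of Theorem \ref{theo21} to obtain $k-l=l_1-l_2$ (resp. $l_3-l_4$) and $n-k-l=l_1+l_2-2^rp^s$ (resp. $l_3+l_4-2^rp^s$). The degree-versus-max-weight alignment you flag as the one point of care is exactly the consistency the paper relies on, so nothing is missing.
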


\begin{proof} 
(1) If $r\leq a$, according to the proof of Theorem \ref{theo9}, we know that the cyclic code $C$ has parameters $[2^rp^s,2^rp^s-j_0-j_{2^{r-1}}-\sum\limits_{i=2}^{r}\sum\limits_{s\in D_i}j_{s2^{r-i}},d_H(C)]_q$. From Theorem \ref{theo22}, we have $	\textrm{dim}(\textrm{Hull}(C))=2^rp^s-(\textrm{max}\{j_0,p^s-j_0\}+\textrm{max}\{j_{2^{r-1}},p^s-j_{2^{r-1}}\}+e_1)$. By Theorem \ref{theo21}, we obtain a $[[2^rp^s,(\textrm{max}\{j_0,p^s-j_0\}+\textrm{max}\{j_{2^{r-1}},p^s-j_{2^{r-1}}\}+e_1)-(j_0+j_{2^{r-1}}+\sum\limits_{i=2}^{r}\sum\limits_{s\in D_i}j_{s2^{r-i}}),d_H(C);(\textrm{max}\{j_0,p^s-j_0\}+\textrm{max}\{j_{2^{r-1}},p^s-j_{2^{r-1}}\}+e_1)+(j_0+j_{2^{r-1}}+\sum\limits_{i=2}^{r}\sum\limits_{s\in D_i}j_{s2^{r-i}})-2^rp^s]]_q$ EAQEC code.

(2) If $r> a$, it follows from the proof of Theorem \ref{theo9} that the cyclic code $C$ has parameters $[2^rp^s,2^rp^s-j_0-j_{2^{r-1}}-\sum\limits_{i=2}^{a}\sum\limits_{s\in D_i}j_{s2^{r-i}}-\sum\limits_{i=a+1}^{r}\sum\limits_{s\in D_i}(2^{i-a}j_{s2^{r-i}}),d_H(C)]_q$. From Theorem \ref{theo22}, we have $	\textrm{dim}(\textrm{Hull}(C))=2^rp^s-(\textrm{max}\{j_0,p^s-j_0\}+\textrm{max}\{j_{2^{r-1}},p^s-j_{2^{r-1}}\}+e_2)$. By Theorem \ref{theo21}, we obtain an EAQEC code with parameters $[[2^rp^s,(\textrm{max}\{j_0,p^s-j_0\}+\textrm{max}\{j_{2^{r-1}},p^s-j_{2^{r-1}}\}+e_2)-(j_0+j_{2^{r-1}}+\sum\limits_{i=2}^{a}\sum\limits_{s\in D_i}j_{s2^{r-i}}+\sum\limits_{i=a+1}^{r}\sum\limits_{s\in D_i}(2^{i-a}j_{s2^{r-i}})),d_H(C);(\textrm{max}\{j_0,p^s-j_0\}+\textrm{max}\{j_{2^{r-1}},p^s-j_{2^{r-1}}\}+e_2)+(j_0+j_{2^{r-1}}+\sum\limits_{i=2}^{a}\sum\limits_{s\in D_i}j_{s2^{r-i}}+\sum\limits_{i=a+1}^{r}\sum\limits_{s\in D_i}(2^{i-a}j_{s2^{r-i}}))-2^rp^s]]_q$.
\end{proof} 
Furthermore, combining Theorem \ref{theo21} and Theorem \ref{theo23} , one can obtain the following EAQEC codes by using cyclic codes of length $2^rp^s$ over $\mathbb{F}_q$. The proof process is identical to that of Theorem \ref{theo24}, so it is omitted.

\begin{theorem}\label{theo25}
	Let $q$ be a prime power and has the form $q=2^ab-1$. Suppose that $C$ is a cyclic code of length $2^rp^s$ over $\mathbb{F}_{q}$ and can be represented in the form $C= \textbf{\Big\langle}\textbf{\big(} m_0(x)\textbf{\big)}^{j_0}\textbf{\big(}m_{2^{r-1}}(x)\textbf{\big)}^{j_{2^{r-1}}}\textbf{\big(}m_{2^{r-2}}(x)\textbf{\big)}^{j_{2^{r-2}}}\prod\limits_{i=3}^r\prod\limits_{s\in O_i}\textbf{\big(}m_{s2^{r-i}}(x)\textbf{\big)}^{j_{s2^{r-i}}}\textbf{\Big\rangle}$, where $0\leq j_h\leq p^s$ for each $h$. The following results hold. 
	
	(1) When $a=2$, then a $[[2^rp^s,l_5-l_6,d_H(C);l_5+l_6-2^rp^s]]_q$ EAQEC code is essential, where $l_5=\textrm{max}\{j_0,p^s-j_0\}+\textrm{max}\{j_{2^{r-1}},p^s-j_{2^{r-1}}\}+2\textrm{max}\{j_{2^{r-2}},p^s-j_{2^{r-2}}\}+\sum\limits_{i=3}^r\sum\limits_{s\in D_i}2^{i-a}\textrm{max}\{j_{s2^{r-i}},p^s-j_{-s2^{r-i}}\}$ and  $l_6=j_0+j_{2^{r-1}}+2j_{2^{r-2}}+\sum\limits_{i=3}^r\sum\limits_{s\in O_i}2^{i-a}s2^{r-i}$.
	
	(2) If $a\geq3$, then a $[[2^rp^s,l_7-l_8,d_H(C);l_7+l_8-2^rp^s]]_q$ EAQEC code is essential, where $l_7=\textrm{max}\{j_0,p^s-j_0\}+\textrm{max}\{j_{2^{r-1}},p^s-j_{2^{r-1}}\}+2\textrm{max}\{j_{2^{r-2}},p^s-j_{2^{r-2}}\}+2\sum\limits_{i=3}^a\sum\limits_{s\in D_i}\textrm{max}\{j_{s2^{r-i}},p^s-j_{s2^{r-i}}\}+\sum\limits_{i=a+1}^r\sum\limits_{s\in D_i}2^{i-a}\textrm{max}\{j_{s2^{r-i}},p^s-j_{-s2^{r-i}}\}$ and $l_8=j_0+j_{2^{r-1}}+2j_{2^{r-2}}+2\sum\limits_{i=3}^a\sum\limits_{s\in D_i}j_{s2^{r-i}}+\sum\limits_{i=a+1}^r\sum\limits_{s\in D_i}2^{i-a}j_{s2^{r-i}}$.
\end{theorem}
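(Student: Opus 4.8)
The plan is to follow verbatim the three-step template of the proof of Theorem~\ref{theo24}, substituting the structural data for the case $q=2^ab-1$ furnished by Theorem~\ref{theo10} and Theorem~\ref{theo23}. First I would record the dimension of $C$ by computing the degrees of the minimal polynomials involved. By the cyclotomic-coset computation underlying Theorem~\ref{theo10} we have $\deg m_0=\deg m_{2^{r-1}}=1$; since $C_{(2^{r-2},n)}=\{2^{r-2},-2^{r-2}\}$ we have $\deg m_{2^{r-2}}=2$; and for $s\in O_i$ the degree equals $\textrm{ord}_{2^i}(q)$, which is $2$ for $3\le i\le a$ (here $q\equiv-1\pmod{2^i}$) and $2^{i-a}$ for $a+1\le i\le r$. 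Summing these degrees weighted by the exponents $j_h$ gives $\dim(C)=2^rp^s-l_6$ when $a=2$ and $\dim(C)=2^rp^s-l_8$ when $a\ge3$, where $l_6$ and $l_8$ are exactly the quantities defined in the statement.

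Next I would read off $l=\dim(\textrm{Hull}(C))$ directly from Theorem~\ref{theo23}: in the case $a=2$ that theorem gives $\dim(\textrm{Hull}(C))=2^rp^s-l_5$, and in the case $a\ge3$ it gives $2^rp^s-l_7$, because $l_5$ and $l_7$ are precisely the degree-weighted sums of the exponents $\textrm{max}\{j_{s2^{r-i}},p^s-j_{\pm s2^{r-i}}\}$ appearing in the generator polynomial of the hull. The weighting must be kept consistent with the one used for $\dim(C)$: the self-reciprocal factor $m_{2^{r-2}}$ of degree $2$ contributes with weight $2$; for $a\ge3$ the cosets with $3\le i\le a$ are self-reciprocal and so enter with exponent $\textrm{max}\{j_{s2^{r-i}},p^s-j_{s2^{r-i}}\}$ and weight $2$; while the cosets with $a+1\le i\le r$ pair $s$ with $-s$ and hence enter with $\textrm{max}\{j_{s2^{r-i}},p^s-j_{-s2^{r-i}}\}$ and weight $2^{i-a}$. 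These are exactly the terms packaged into $e_3$ and $e_4$ of Theorem~\ref{theo23}.

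Finally I would feed $\dim(C)$ and $l$ into Theorem~\ref{theo21}, which outputs an $[[n,\dim(C)-l,d_H(C);n-\dim(C)-l]]_q$ EAQEC code with $n=2^rp^s$. For $a=2$ this gives dimension $(2^rp^s-l_6)-(2^rp^s-l_5)=l_5-l_6$ and entanglement parameter $2^rp^s-(2^rp^s-l_6)-(2^rp^s-l_5)=l_5+l_6-2^rp^s$, i.e.\ the announced $[[2^rp^s,l_5-l_6,d_H(C);l_5+l_6-2^rp^s]]_q$ code; the case $a\ge3$ is word-for-word identical with the pair $(l_5,l_6)$ replaced by $(l_7,l_8)$.

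I do not anticipate a genuine obstacle: once Theorems~\ref{theo10}, \ref{theo21} and \ref{theo23} are in hand the argument is pure bookkeeping, which is why it can reasonably be omitted. The only place where care is truly needed is the degree accounting that separates the four blocks of cosets --- the degree-$1$ cosets $\{0,2^{r-1}\}$, the degree-$2$ coset $2^{r-2}$, the intermediate self-reciprocal cosets present only when $a\ge3$, and the conjugate-paired cosets of degree $2^{i-a}$ --- so I would verify that the weights entering $\dim(C)$ and $l$ agree block by block, since this is precisely what makes the differences $l_5-l_6$ and $l_7-l_8$ collapse to the clean stated forms.
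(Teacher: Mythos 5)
Your proposal is correct and follows exactly the route the paper intends: the paper omits this proof as being identical to that of Theorem~\ref{theo24}, namely computing $\dim(C)$ via the degree accounting from Theorem~\ref{theo10}, reading $\dim(\textrm{Hull}(C))$ from Theorem~\ref{theo23}, and feeding both into Theorem~\ref{theo21}. Your block-by-block degree bookkeeping (degree $1$ for $\{0,2^{r-1}\}$, degree $2$ for $2^{r-2}$ and for the self-reciprocal cosets with $3\le i\le a$, degree $2^{i-a}$ for $a+1\le i\le r$) matches the paper's computation and even correctly interprets the statement's typographical slips (e.g.\ $j_{s2^{r-i}}$ in $l_6$ and $O_i$ in place of $D_i$).
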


\begin{example} \label{exam12}
 Let $a=2,b=3$, then $q=2^ab+1=13$ and $p=13$. Let $r=3,s=1$, then $n=2^rp^s=104$. By factorization, we have  $$x^{104}-1=(x+1)^{13}(x-1)^{13}(x+5)^{13}(x-5)^{13}(x^2+5)^{13}(x^2-5)^{13}.$$ Put $C=\textbf{\Big\langle}(x+1)^{7}(x-1)(x+5)^{2}(x^2+5)^{3}(x^2-5)\textbf{\Big\rangle}$. According to Theorem 7 in \cite{Pan2024}, one can get $d_H(C)=8$. From Theorem \ref{theo22}, we get $$\textrm{Hull}(C)=\textbf{\Big\langle}(x+1)^{7}(x-1)^{12}(x+5)^{13}(x-5)^{11}(x^2+5)^{12}(x^2-5)^{10} \textbf{\Big\rangle}$$ and $l=\textrm{dim}(\textrm{Hull}(C))=17$.  
 By Theorem \ref{theo24}, we obtain a $[[104,69,8;1]]_{13}$ EAQEC code.	
\end{example}

\begin{example} \label{exam13}
 Let $a=4,b=1$, then $q=2^ab+1=17$ and $p=17$. Let $r=3,s=1$, then $n=2^rp^s=136$. By factorization, we have   $$x^{136}-1=(x+1)^{17}(x-1)^{17}(x+2)^{17}(x-2)^{17}(x+4)^{17}(x-4)^{17}(x+8)^{17}(x-8)^{17}.$$  Put $C=\textbf{\Big\langle}(x+1)^{9}(x-1)(x+2)(x-2)^2(x+4)(x+8)(x-8)^3\textbf{\Big\rangle}$. According to Theorem 4 in \cite{Pan2024}, one can get $d_H(C)=8$. From Theorem \ref{theo22}, we get $$\textrm{Hull}(C)=\textbf{\Big\langle}(x+1)^{9}(x-1)^{16}(x+2)^{14}(x-2)^{16}(x+4)^{17}(x-4)^{16}(x+8)^{15}(x-8)^{16}\textbf{\Big\rangle}$$ and $l=\textrm{dim}(\textrm{Hull}(C))=17$.  
 By Theorem \ref{theo24}, we obtain a $[[136,101,8;1]]_{13}$ EAQEC code.	
\end{example}

\begin{example} \label{exam14}
	Let $a=2,b=3$, then $q=2^ab-1=11$ and $p=11$. Let $r=3,s=1$, then $n=2^rp^s=88$. By factorization, we have   $$x^{88}-1=(x+1)^{11}(x-1)^{11}(x^2+1)^{11}(x^2+3x+10)^{11}(x^2+8x+10)^{11}.$$  Put $C=\textbf{\Big\langle}(x+1)^4(x-1)^4(x^2+1)^5(x^2+3x+10)^{10}(x^2+8x+10)^2\textbf{\Big\rangle}$. According to Theorem 10 in \cite{Pan2024}, one can get $d_H(C)=11$. From Theorem \ref{theo22}, we get $$\textrm{Hull}(C)=\langle(x+1)^{7}(x-1)^{7}(x^2+1)^{6}(x^2+3x+10)^{10}(x^2+8x+10)^2\rangle$$ and $l=\textrm{dim}(\textrm{Hull}(C))=38$.  
	By Theorem \ref{theo24}, we obtain an $[[88,8,11;4]]_{13}$ EAQEC code.	
\end{example}

\begin{example} \label{exam15}
	Let $a=5,b=1$, then $q=2^ab-1=31$ and $p=31$. Let $r=3,s=1$, then $n=2^rp^s=248$. By factorization, we have   $$x^{248}-1=(x+1)^{31}(x-1)^{31}(x^2+1)^{31}(x^2+8x+1)^{31}(x^2+23x+1)^{31}.$$  Put $C=\textbf{\Big\langle}(x+1)^{16}(x^2+8x+1)\textbf{\Big\rangle}$. According to Theorem 10 in \cite{Pan2024}, one can get $d_H(C)=4$. From Theorem \ref{theo22}, we get $$\textrm{Hull}(C)=\textbf{\Big\langle}(x+1)^{16}(x-1)^{31}(x^2+1)^{31}(x^2+8x+1)^{31}(x^2+23x+1)^{30}\textbf{\Big\rangle}$$ and $l=\textrm{dim}(\textrm{Hull}(C))=17$.  
	By Theorem \ref{theo25}, we obtain an $[[248,213,4;1]]_{13}$ EAQEC code.	
\end{example}
\begin{remark}\label{rem4}
The Hamming distances of all cyclic codes in Examples \ref{exam12}, \ref{exam13}, \ref{exam14}, and \ref{exam15} can be determined using our latest research results in \cite{Pan2024}. Comparing the EAQEC codes constructed in Examples \ref{exam12}, \ref{exam13}, \ref{exam14}, and \ref{exam15} with the established EAQEC codes in \cite{Li2011, Fan2016, Chen2017, Chen2018, Lu2018, Guenda2018, Li2019, Fang2019, Wang2020, Guenda2020, Jin2021, Grassl2007}, it is evident that our EAQEC codes in Examples \ref{exam12}, \ref{exam13}, \ref{exam14}, and \ref{exam15} are all novel. This means that they possess distinct parameters compared to the previously known constructions.
\end{remark}

\section{Conclusion}\label{sec7}
To the best of our knowledge, Dinh et al. have constructed a large number of new QEC codes by using repeated-root cyclic codes of lengths $p^s,2p^s$ and $4p^s$ \cite{Dinhps,Dinh2ps,Dinh4ps}. Therefore, in this paper, we always assume $r\geq3$. In this paper, by studying the structure of all repeated-root cyclic codes of length $2^rp^s$ and their duals, we first give the necessary and sufficient conditions for them to be dual-containing codes. Secondly, based on CSS and Steane's constructions, we construct a series of QEC codes using these cyclic codes. Additionally, we provide a comprehensive list of all MDS cyclic codes of length $2^rp^s$ and utilize them to build numerous QEC MDS codes. Lastly, we create a series of EAQEC codes based on the hulls of these repeated-root cyclic codes. Furthermore, in order to illustrate our conclusions, we enumerate some specific examples. In Examples \ref{exam1}-\ref{exam8}, we construct a large amount of new QEC codes which means that they exhibit parameters distinct from those of previously known constructions. Some of these QEC codes are consistent with some known codes constructed in \cite{Grassl2007}. And some QEC codes are better than all QEC codes of the same length and Hamming distance given in \cite{Grassl2007}, that is, the dimension of our QEC codes is greater than the dimension of all the QEC codes with the same length and Hamming distance in \cite{Grassl2007}. In Examples \ref{exam9}-\ref{exam11}, we construct some new QEC MDS codes over $\mathbb{F}_{23},\mathbb{F}_{25}$ and $\mathbb{F}_{29}$. Finally, in Examples \ref{exam12}-\ref{exam15}, we give some new EAQEC codes to illustrate our results in Section \ref{sec7}.

\backmatter

%\section*{Appendix}\label{Appendix}

%\bmhead{Supplementary information}

%If your article has accompanying supplementary file/s please state so here. 

%Authors reporting data from electrophoretic gels and blots should supply the full unprocessed scans for key as part of their Supplementary information. This may be requested by the editorial team/s if it is missing.

%Please refer to Journal-level guidance for any specific requirements.

\bmhead{Acknowledgments}
This work was supported in part by the National Natural Science Foundation of China under Grant 62201009, in part by the  Anhui Provincial Natural Science Foundation under Grant 2108085QA06, and in part by the the High-level Talent Research Foundation of Anhui Agricultural University under Grant rc382005.
\bmhead{Data availability}
This paper does not contain any other data.

\section*{Declarations}

\bmhead{Conflict of interest} The authors declare that they have no known competing financial interests or personal
relationships that could have appeared to influence the work reported in this paper.

%Some journals require declarations to be submitted in a standardised format. Please check the Instructions for Authors of the journal to which you are submitting to see if you need to complete this section. If yes, your manuscript must contain the following sections under the heading `Declarations':

%\begin{itemize}
%\item Funding
%\item Conflict of interest/Competing interests (check journal-specific guidelines for which heading to use)
%\item Ethics approval 
%\item Consent to participate
%\item Consent for publication
%\item Availability of data and materials
%\item Code availability 
%\item Authors' contributions
%\end{itemize}

\noindent
%If any of the sections are not relevant to your manuscript, please include the heading and write `Not applicable' for that section. 

%%===================================================%%
%% For presentation purpose, we have included        %%
%% \bigskip command. please ignore this.             %%
%%===================================================%%
%\bigskip
%\begin{flushleft}%
%Editorial Policies for:

%\bigskip\noindent
%Springer journals and proceedings: \url{https://www.springer.com/gp/editorial-policies}

%\bigskip\noindent
%Nature Portfolio journals: \url{https://www.nature.com/nature-research/editorial-policies}

%\bigskip\noindent
%\textit{Scientific Reports}: \url{https://www.nature.com/srep/journal-policies/editorial-policies}

%\bigskip\noindent
%BMC journals: \url{https://www.biomedcentral.com/getpublished/editorial-policies}
%\end{flushleft}

%%===========================================================================================%%
%% If you are submitting to one of the Nature Portfolio journals, using the eJP submission   %%
%% system, please include the references within the manuscript file itself. You may do this  %%
%% by copying the reference list from your .bbl file, paste it into the main manuscript .tex %%
%% file, and delete the associated \verb+\bibliography+ commands.                            %%
%%===========================================================================================%%

\end{document}